\newif\ifsubmission
\newif\ifomitproofs
\renewcommand{\todo}[1]{}
\renewcommand{\thmcontinues}[1]{%
    \hyperref[#1]{continued}%
}%
\declaretheorem[numberwithin=section]{theorem}
\declaretheorem[sibling=theorem]{proposition}
\declaretheorem[sibling=theorem]{lemma}
\declaretheorem[sibling=theorem]{corollary}
\declaretheorem[sibling=theorem]{definition}
\declaretheorem[sibling=theorem]{fact}
\declaretheorem[sibling=theorem,style=remark,qed=\qedsymbol]{remark}
\declaretheorem[sibling=theorem,style=remark,qed=\qedsymbol]{example}
\declaretheorem[parent=theorem]{claim}
\declaretheoremstyle[
spaceabove=6pt,spacebelow=6pt,
headfont=\normalfont\bfseries,
notefont=\mdseries, notebraces={}{},
bodyfont=\normalfont,
postheadspace=1em,
numbered=no
]{problem}
\declaretheorem[name={Problem:},style=problem]{problem}
\newcommand{\F}[1]{\mathbf{F}_{#1}}
\newcommand{\FGH}[1]{\mathscr{F}_{#1}}
\newcommand{\CC}[1]{\text{\textsc{#1}}}
\newcommand{\setN}{\mathbb{N}}
\newcommand{\setZ}{\mathbb{Z}}
\newcommand{\init}{\textrm{init}}
\newcommand{\moins}{\setminus}
\newcommand{\ideal}[2]{\operatorname{Idl}(#1)}
\newcommand{\eqby}[1]{\stackrel{\textrm{{\tiny{#1}}}}{=}}
\newcommand{\eqdef}{\eqby{def}}
\newcommand{\src}[1]{\operatorname{src}(#1)}
\newcommand{\tgt}[1]{\operatorname{tgt}(#1)}
\newcommand{\transformer}[1]{\stackrel{#1}{\curvearrowright}}
\newcommand{\runs}[1][\vec x,\vec y]{\operatorname{Runs}_{\!\vec
  A}(#1)} 
\newcommand{\trans}[1][\!\vec A]{\operatorname{Trans}_{#1}}
\newcommand{\cmts}[1][\vec x,\vec y]{\operatorname{Runs}_{\!\vec
A}(#1)}
\newcommand{\mts}{\operatorname{PreRuns}_{\!\vec A}}
\newcommand{\preruns}{\operatorname{PreRuns}_{\!\vec A}}
\newcommand{\facr}[1][\@empty]{\Pi_\rho\ifx#1\@empty\relax\else(#1)\fi}
\newcommand{\vect}[1][\@empty]{\vec
  V_{\!\rho}\ifx#1\@empty\relax\else(#1)\fi} \makeatother
\newcommand{\confs}{\+N^d}
\def\vec#1{\mathchoice{\mbox{\boldmath$\displaystyle#1$}}
{\mbox{\boldmath$\textstyle#1$}}
{\mbox{\boldmath$\scriptstyle#1$}}
{\mbox{\boldmath$\scriptscriptstyle#1$}}}
  \newcommand{\appref}[1]{App.~\ref{#1}}
  \newcommand{\appref}[1]{the full paper}
\newcommand{\IEEEhspace}[1]{}
\def \@true {TT}
\def \@false {FL}
\def \@setflag #1=#2{\edef #1{#2}}%
\newcommand{\category}[3]{%
  \if \@firstcategory
    \par\smallskip\noindent\textsc{1998 ACM Subject Classification. }%
    \@setflag \@firstcategory = \@false
  \else
    \unskip ; %
  \fi
  \@ifnextchar [{\@category{#1}{#2}{#3}}{\@category{#1}{#2}{#3}[]}}
\def \@category #1#2#3[#4]{%
  {\let \and = \relax
   #1 #2}}
\def\keywords{\smallskip\noindent\textsc{Keywords. }}
\providecommand{\urlstyle}[1]{}
\providecommand{\doi}[1]{\href{http://dx.doi.org/#1}{\nolinkurl{doi:#1}}}
\def\NAT@spacechar{~}%
\renewcommand{\cite}{\citep}
\begin{document}
\providecommand{\subfigureautorefname}{Fig.$\!$}
\def\sectionautorefname{Section} \def\subsectionautorefname{Section}
\renewcommand\subsubsectionautorefname[1]{\S}
\def\subfigureautorefname{Figure} \def\chapterautorefname{Chapter}
\title{Demystifying Reachability \mbox{in Vector Addition Systems}}
\thanks{Work funded in part by the ANR grants
  ANR-14-CE28-0005~\textsc{prodaq} and
  ANR-11-BS02-001-01~\textsc{ReacHard}.}
\author[J.~Leroux]{J\'er\^ome Leroux} \address{LaBRI, CNRS, France}
\email{leroux@labri.fr} \author[S.~Schmitz]{Sylvain Schmitz}
\address{LSV, ENS Cachan \& INRIA, France}
\email{schmitz@lsv.ens-cachan.fr}
\begin{abstract}
More than 30 years after their inception, the decidability proofs for
reachability in vector addition systems (VAS) still retain much of
their mystery.  These proofs rely crucially on a decomposition of runs
successively refined by Mayr, Kosaraju, and Lambert, which appears
rather magical, and for which no complexity upper bound is known.

We first offer a justification for this decomposition technique, by
showing that it computes the ideal decomposition of the set of runs,
using the natural embedding relation between runs as well quasi
ordering.  In a second part, we apply recent results on the complexity
of termination thanks to well quasi orders and well orders to obtain a
cubic Ackermann upper bound for the decomposition algorithms, thus
providing the first known upper bounds for general VAS reachability.
\keywords Vector addition system, reachability, well quasi order, ideal,
  fast-growing complexity
\end{abstract}
\maketitle
\section{Introduction}
\label{sec:introduction}
\subsubsection*{Vector addition systems\nopunct} (VAS), or
equivalently Petri nets, find a wide range of applications in the
modelling of concurrent, chemical, biological, or business processes.
Their algorithmics, and in particular the decidability of their
\emph{reachability problem}, is a central component to many
decidability results spanning from the verification of asynchronous
programs \citep{%
  ganty12} to the decidability of data logics
\citep{bojanczyk11,demri13,colcombet14}.  Considered as one of the
great achievements of theoretical computer science, the
original~\citeyear{mayr81} decidability proof of \citet{mayr81} is the
culmination of more than a decade of research into the topic, and
builds notably on an incomplete proof by \citet{sacerdote77}.  The
proof was simplified a year later by \citet{kosaraju82}; see also the
account by \citet{muller85} and the self-contained and detailed
monograph of \citet{reutenauer90} on this second proof.  In spite of
this success, as put by \citet{lambert92} ``the complexity of the two
proofs (especially in \citep{mayr81}) wrapped the result in mystery
and no use of their original ideas'' was made before he provided a
further simplification ten years later in \citeyear{lambert92}, and
employed it to prove results on VAS languages.

At the heart of the various proofs lies a \emph{decomposition
  technique}, which we dub the Kosaraju-Lambert-Mayr-Sacerdote-Tenney
(KLMST) decomposition in this article after its inventors.  In a
nutshell, the KLMST decomposition defines both a structure and a
condition for this structure to represent in some way the set of all
runs witnessing reachability.  The algorithms advanced by
\citeauthor{mayr81}, \citeauthor{kosaraju82}, and
\citeauthor{lambert92} compute this decomposition by successive
refinements of the structure until the condition is fulfilled.  The
KLMST decomposition is a powerful tool when reasoning about VAS runs,
and it has notably been employed
\begin{itemize}
\item by~\citet*{habermehl10} to show that the downward-closure of
 a labelled VAS language is effectively computable---let us mention a
 new proof by \citet{zetzsche15}, which does not
 explicitly rely on the KLMST decomposition---, and
\item by \citet{leroux10} to derive a new algorithm for reachability
  based on Presburger inductive invariants---he would later re-prove
  the correction of this new algorithm \emph{without} referring to the
  KLMST decomposition, yielding a compact self-contained decidability
  proof for VAS reachability~\citep{leroux11}.
\end{itemize}

{
\ifomitproofs\relax\else\let\subs\subsubsection
\renewcommand{\subsubsection}{\subsection}
\renewcommand{\paragraph}{\subs*}
\medskip
\fi
Our feeling however is that the decidability of VAS
reachability, and especially the KLMST decomposition, is still
shrouded in mystery.  The result is highly complex on two accounts:

\paragraph{On a conceptual level\nopunct} the various instances of the KLMST
decomposition seem rather magical.  How did \citeauthor{mayr81} come
up with \emph{regular constraint graphs} with a \emph{consistent
  marking}?  How did \citeauthor{kosaraju82} come up with
\emph{generalised VASS} and his \emph{$\theta$~condition}?  How did
\citeauthor{lambert92} come up with his \emph{perfect} condition on
\emph{marked graph-transition sequences}?  Most importantly, which
guidelines to follow in order to develop similar concepts for VAS
extensions where the decidability of reachability is still open, e.g.\
for unordered data Petri nets~\citep{lazic08}, pushdown
VASS~\citep{lazic13}, or branching VAS~\citep{schmitz10}?
\ifomitproofs\relax\else\par\fi
Arguably, the issue here is not to understand how these structures and
conditions are used in the algorithms themselves, nor to check that
they indeed yield the decidability of VAS reachability.  Rather, the
issue is to explain how these structures and conditions can be derived
in a principled manner.

\paragraph{On a computational complexity level\nopunct} no complexity upper
bound is known for the general VAS reachability problem, while the
best known lower bound is \textsc{ExpSpace}-hardness~\citep{lipton76}.
The only known tight bounds pertain to the very specific case of
2-dimensional VAS with states, which were recently shown to have a
\textsc{PSpace}-complete reachability problem~\citep{blondin14}.
\ifomitproofs\relax\else\par\fi As observed e.g.\ by \citet{muller85}
the algorithms computing the KLMST decomposition are not
primitive-recursive, but no one has been able to derive a complexity
upper bound for these algorithms, while the new algorithm of
\citet{leroux10,leroux11} using Presburger inductive invariants seems
even harder to analyse from a complexity viewpoint.

\subsubsection*{Our contributions\nopunct} in this paper are first to
propose an explanation for the KLMST decomposition.  Using a well
quasi ordering of VAS runs defined by \citet{jancar90} and
\citet{leroux11} and recalled in \autoref{sec:wsr}, we show
a \emph{\nameref{thdec}} (\autoref{thdec}): the KLMST algorithm
computes an \emph{ideal decomposition} of the set of runs, i.e.\ a
decomposition into irreducible downward-closed sets
(see \autoref{sec:kosaraju}).  The effective representation of those
ideals through finite structures turns out to match exactly the
structures and conditions expressed by
\citet{lambert92}, see sections~\ref{sec-transformer}
and~\ref{sec:runideals}.  This provides a full formal framework in
which the reachability problem in various VAS extensions might be
cast, offering some hope to see progress on those open issues.  

The second contribution in \autoref{sec-fgh} is the proof of a ``cubic
Ackermann'' complexity upper bound on the complexity of the KLMST
decomposition algorithm, i.e., an $\F{\omega^3}$ upper bound in
the fast-growing complexity hierarchy $(\F\alpha)_\alpha$ defined
in~\citep{schmitz13}.  We apply to this end the
recent results on bounding the length of controlled bad sequences over
well quasi orders from~\citep{SS2012,schmitz14}.  It yields the first
known upper bound on VAS reachability.  As a byproduct, it also yields
the first complexity upper bound for numerous problems known decidable
thanks to a reduction to VAS reachability, e.g.\
\citep{bojanczyk11,ganty12,demri13,colcombet14} among many others.

We start in sections~\ref{sec-vas}, \ref{sec:wqo}, and~\ref{sec:ideal}
by presenting the necessary background on VAS, well quasi orders, and
ideals. \ifsubmission\relax\else  Due to space constraints, some
material is omitted but can be found in the full paper at the
address \url{http://arxiv.org/abs/1503.00745}.\fi
}

\section{Vector Addition Systems}
\label{sec-vas}

Vectors and sets of vectors in $\setZ^d$ for some natural $d$ are
denoted in bold face. A \emph{periodic set} is a subset $\vec{P}$ of
$\setZ^d$ that contains the zero vector $\vec{0}\eqdef(0,\ldots,0)$
and such that $\vec{p}+\vec{q}\in\vec{P}$ for all
$\vec{p},\vec{q}\in\vec{P}$.

A \emph{vector addition system} of dimension $d$ in $\+N$ is a finite
set $\vec{A}$ of \emph{actions} $\vec{a}$ in $\setZ^d$~\citep{karp69}.
The operational semantics of VASs operates on \emph{configurations},
which are vectors $\vec{c}$ in $\confs$.  A \emph{transition} is then
a triple $(\vec{u},\vec{a},\vec{v})\in\confs\times\vec{A}\times\confs$
such that $\vec{v}=\vec{u}+\vec{a}$, where addition operates
componentwise; the set of transitions of $\vec{A}$ is denoted by
$\trans$.

A \emph{prerun} over $\vec A$ is a triple $\rho=(\vec u,w,\vec v)$
where $\vec u$ and $\vec v$ are two configurations in $\confs$ and
$w$ 
is a sequence of triples $(\vec u_1,\vec a_1,\vec v_1)\cdots(\vec
u_{k},\vec a_k,\vec v_k)$ in $(\confs\times\vec A\times\confs)^\ast$.
The configurations $\vec u$ and $\vec v$ are called respectively the
\emph{source} and \emph{target} of $\rho$, and are denoted
respectively by $\src{\rho}$ and $\tgt{\rho}$.  The action sequence
$\sigma=\vec a_1\cdots\vec a_k$ is called the \emph{label} of $\rho$.
We write $\mts$ for the set of preruns over~$\vec A$.

A prerun $(\vec u,w,\vec v)$ is \emph{connected} if \ifomitproofs $w$ is a transition
sequence $(\vec u_1,\vec
a_1,\vec v_1)\cdots(\vec u_{k},\vec a_k,\vec v_k)$
\else $w=(\vec u_1,\vec
a_1,\vec v_1)\cdots(\vec u_{k},\vec a_k,\vec v_k)$ is a transition
sequence \fi in $\trans^\ast$ such that
\begin{itemize}
\item either $w=\varepsilon$ is the empty sequence and then $\vec u=\vec v$,
\item or 
$k>0$ and $\vec u=\vec u_1$, $\vec v=\vec v_k$, and $\vec u_{j+1}=\vec
v_j$ for all $0\leq j<k$.
\end{itemize}
We call a connected prerun $\rho$ a \emph{run}.  If there exists a run
$\rho$ from source $\vec{u}$ to target $\vec{v}$ labelled by $\sigma$,
we denote by $\vec{u}\xrightarrow{\sigma}\vec{v}$ this unique run
$\rho$.  Notice that it implies
$\vec{v}=\vec{u}+\sum_{j=1}^k\vec{a}_j$; note however that given $\vec
u$, $\vec v$, and $\sigma$, $\vec{v}=\vec{u}+\sum_{j=1}^k\vec{a}_j$
does not necessarily imply that $\vec{u}\xrightarrow{\sigma}\vec{v}$.

\begin{figure}[tbp]
\newcommand{\transa}{\begin{tikzpicture}[scale=0.2,-]
      \draw[step=1,lightgray] (-0.5,-0.5) grid (1.5,1.5);
      \def\va{[thick,->,blue] -- ++(1,1)}
      \def\vb{[thick,->,blue] -- ++(-1,-2)}
      \draw[-] (0,0) \va;
    \end{tikzpicture}}
  \newcommand{\transb}{\begin{tikzpicture}[scale=0.2,-]
      \draw[step=1,lightgray] (-0.5,-0.5) grid (1.5,2.5);
      \def\va{[thick,->,blue] -- ++(1,1)}
      \def\vb{[thick,->,blue] -- ++(-1,-2)}
      \draw[-] (1,2) \vb;
    \end{tikzpicture}}
  \centering    
  \begin{tikzpicture}[auto,scale=0.3]
      \draw[step=1,lightgray] (-0.5,-0.5) grid (5.5,7.5);
      \draw[->] (0,0) -- (5.5,0);
      \draw[->] (0,0) -- (0,7.5);
      \def\va{[thick,->,blue] --  ++(1,1)}
      \def\vb{[thick,->,blue] -- ++(-1,-2)}
      \def\config{[very thick, fill=green!20,draw=green!50!black] circle (8pt)}        
      \def\configb{[very thick, fill=blue!20,draw=blue!50!black] circle (8pt)}                
      \filldraw (0,2) \config node[left] {$\vec{x}$}; 
      \filldraw (1,0) \config node[below] {$\vec{y}$};
      \filldraw (1,3) \configb;
      \filldraw (2,4) \configb;
      \filldraw (3,5) \configb;
      \filldraw (4,6) \configb;
      \filldraw (3,4) \configb;
      \filldraw (2,2) \configb;
    \end{tikzpicture}\\
    $\scriptstyle \vec{x}\xrightarrow{\transa}(1,3)\xrightarrow{\transa}(2,4)\xrightarrow{\transa}(3,5)\xrightarrow{\transa}(4,6)\xrightarrow{\transb}(3,4)\xrightarrow{\transb}(2,2)\xrightarrow{\transb}\vec{y}$

    \caption{A run from $\vec x=(0,2)$ to $\vec y=(1,0)$ labelled by
      $(1,1)^4(-1,-2)^3$.\label{fig:run}}
\end{figure}

We are interested in this paper in the following decision problem:
\begin{problem}[VAS Reachability]
\hfill\begin{description}
\item[input\ifomitproofs:\fi]\IEEEhspace{1em}A VAS $\vec A$, a source
  configuration $\vec x$, and a target configuration~$\vec y$.
\item[question\ifomitproofs:\fi]\IEEEhspace{1em}$\exists\sigma\in\vec A^\ast.\vec
  x\xrightarrow\sigma\vec y$?
\end{description}\end{problem}\noindent
Given two configurations $\vec x$ and $\vec y$ in $\confs$, we define
the \emph{set of runs} of $\vec A$ from $\vec x$ to $\vec y$ as
\begin{equation}
  \runs\eqdef\{\vec x\xrightarrow\sigma\vec y \mid\sigma\in\vec{A}^\ast\}\;.
\end{equation}
The VAS reachability problem can then be recast as asking whether
the set $\runs$ is non empty%
.

\section{Well Quasi Orders}
\label{sec:wqo}
A \emph{quasi-order} (qo) is a pair $(X,{\leq})$ where $X$ is a set
and ${\leq}$ is a reflexive and transitive binary relation over $X$.
We write $x<y$ if $x\leq y$ but $y\not\leq x$.  Given a set
$S\subseteq X$, we define its \emph{upward-closure}
${\uparrow}S\eqdef\{x\in X\mid\exists s\in S\mathbin.s\leq x\}$ and
\emph{downward-closure} ${\downarrow}S\eqdef\{x\in X\mid\exists s\in
S\mathbin.x\leq s\}$.  When $S=\{s\}$ is a singleton, we write more succinctly
${\uparrow}s$ and ${\downarrow}s$.  An \emph{upward-closed} set
$U\subseteq X$ is such that $U={\uparrow}U$ and a
\emph{downward-closed} set $D\subseteq X$ such that $D={\downarrow}D$.
Observe that upward- and downward-closed sets are closed under
arbitrary union and intersection, and that the complement over $X$ of
an upward-closed set is downward-closed and vice versa.

\subsection{Characterisations}
A finite or infinite sequence $x_0,x_1,x_2,\dots$ of elements of a qo
$(X,{\leq})$ is \emph{good} if there exist two indices $i<j$ such
that $x_i\leq x_j$, and \emph{bad} otherwise.  A \emph{well quasi
  order} (wqo) is a qo with the additional property that all its bad
sequences are finite.  

\begin{example}[Finite sets]\label{ex-finite}
  As an example, a set $X$ ordered by equality is a wqo if and only if
  it is finite: if finite, by the pigeonhole principle its bad
  sequences have length at most $|X|$; if infinite, any enumeration of
  infinitely many distinct elements yields an infinite bad sequence.
\end{example}

There are many equivalent characterisations of
wqos~\cite{kruskal72,SS2012}.  For instance, $(X,{\leq})$ is a wqo if
and only if it is \emph{well-founded}, i.e.\ there are no infinite
descending sequences $x_0>x_1>\cdots$ of elements from $X$, and it has
the \emph{finite antichain} (FAC) property, i.e.\ any set of mutually
incomparable elements from $X$ is finite.

\begin{example}[Well orders]\label{ex-wo}
  Any well-founded \emph{linear} order, i.e.\ where $\leq$ is
  \ifomitproofs also\else furthermore\fi\ antisymmetric and total, is a wqo: in that case,
  antichains have cardinal at most one.  Examples include
  $(\+N,{\leq})$ the set of natural numbers, i.e.\ the ordinal
  $\omega$.%
\end{example}

We will also be interested in the following characterisation:

\begin{restatable}[Descending Chain
   Property]{fact}{dcp}\label{lem:downwardstat}
   A qo $(X,{\leq})$ is a wqo if and only if any non-ascending chain
   $D_0\supseteq D_1\supseteq D_2\supseteq\cdots$ of downward-closed
   subsets of $X$ eventually stabilises, i.e.\ there exists a finite
   rank $k$ such that $\bigcap_{i\in\+N}D_i=~D_k$.
\end{restatable}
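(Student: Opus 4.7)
The plan is to prove both directions by the standard trick of passing between a (potential bad) sequence $x_0, x_1, \ldots$ and the descending chain of downward-closed complements of finitely generated upward-closed sets $\uparrow\{x_0,\ldots,x_{n-1}\}$.

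For the forward direction, I would start with a wqo $(X,{\leq})$ and a descending chain $D_0 \supseteq D_1 \supseteq \cdots$ of downward-closed sets, and argue by contradiction that it stabilises. If it does not stabilise, then after extracting a subsequence one may assume $D_i \supsetneq D_{i+1}$ for every $i$, so for each $i$ I pick a witness $x_i \in D_i \setminus D_{i+1}$. The key observation is that the resulting sequence $x_0, x_1, \ldots$ must be bad: indeed, for any $i < j$, one has $x_j \in D_j \subseteq D_{i+1}$, and since $D_{i+1}$ is downward-closed, $x_i \leq x_j$ would force $x_i \in D_{i+1}$, contradicting the choice of $x_i$. This infinite bad sequence contradicts the wqo assumption.

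For the converse, I would assume the descending chain property and show $(X,{\leq})$ is a wqo by ruling out infinite bad sequences. Given any sequence $x_0, x_1, \ldots$, define
\[
D_n \eqdef X \setminus {\uparrow}\{x_0, \ldots, x_{n-1}\}\;,
\]
which is downward-closed as the complement of an upward-closed set, and observe that $D_0 \supseteq D_1 \supseteq \cdots$. By the hypothesis, this chain stabilises at some index $k$, so ${\uparrow}\{x_0, \ldots, x_{k-1}\} = {\uparrow}\{x_0, \ldots, x_k\}$, which means $x_k \in {\uparrow}\{x_0, \ldots, x_{k-1}\}$, i.e.\ there is some $i < k$ with $x_i \leq x_k$. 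Hence the sequence is good on the pair $(i,k)$, and so no infinite bad sequence exists.

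The only delicate point I anticipate is the subsequence extraction in the forward direction: one must justify that if $D_0 \supseteq D_1 \supseteq \cdots$ fails to stabilise, infinitely many inclusions are strict, so one can reindex to get a strictly decreasing chain without losing generality. This is routine but worth stating carefully. Everything else is a straightforward unfolding of the definitions of downward/upward closure and of bad sequence.
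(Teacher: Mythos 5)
Your proof is correct and follows essentially the same route as the paper's: in the forward direction you extract witnesses $x_i \in D_i \setminus D_{i+1}$ from a strictly descending subchain and use downward-closure to show the resulting sequence is bad, and in the converse you form the complements of the finitely generated upward-closed sets $\uparrow\{x_0,\ldots,x_{n-1}\}$ and read off a good pair from the stabilisation index. The subsequence-extraction step you flag is indeed the only point requiring a word of justification, and the paper handles it in the same way by passing directly to a strictly decreasing subchain.
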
\ifomitproofs\relax\else\begin{proof}
  For the direct implication, assume that there exists a non-ascending
  chain that does not stabilise, i.e.\ there exists an infinite
  descending sub-chain $D_{i_0}\supsetneq D_{i_1}\supsetneq
  D_{i_2}\subsetneq\cdots$.  This means that there exists an infinite
  sequence of elements $x_{i_j}\in D_{i_j}\setminus D_{i_{j+1}}$. Note
  that, if $j<k$, then $x_{i_j}$ is in $D_{i_j}\setminus D_{i_k}$,
  hence $x_{i_j}\not\leq x_{i_k}$, and therefore $(X,{\leq})$ is not a
  wqo.

  Conversely, consider any infinite sequence $x_0,x_1,\dots$ of
  elements of $X$.  Let then $U_i\eqdef\bigcup_{j\leq i}{\uparrow}x_j$
  and $D_i\eqdef X\setminus U_i$.  Observe that if the non-ascending
  chain of $D_i$'s stabilises at some rank $k$, then
  $U_k=U_{k+1}=U_k\cup{\uparrow}x_{k+1}$, hence there exists $i\leq k$
  such that $x_i\leq x_{k+1}$, showing that $(X,{\leq})$ is a wqo.
\end{proof}
\fi

Another consequence of the definition of wqos is:

\begin{fact}[Finite Basis Property]\label{lem:upwarddecomposition}
  Let $(X,{\leq})$ be a wqo.  If $U\subseteq X$ is upward-closed, then
  there exists a \emph{finite basis} $B\subseteq U$ such that
  ${\uparrow}B=U$.
\end{fact}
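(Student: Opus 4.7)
The plan is to prove the contrapositive: if $U$ is upward-closed but admits no finite basis, then $(X,{\leq})$ cannot be a wqo. I would construct an infinite bad sequence in $U$ by a greedy inductive selection, where at each stage the failure of any finite basis guarantees that a suitable next element exists.

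More concretely, pick any $x_0\in U$ (nonempty since otherwise $B=\emptyset$ is a finite basis). Inductively, suppose $x_0,\ldots,x_n\in U$ have been chosen; the finite set $B_n=\{x_0,\ldots,x_n\}$ is a subset of $U$ but cannot be a basis, so ${\uparrow}B_n\subsetneq U$, and we may pick $x_{n+1}\in U\setminus{\uparrow}B_n$. By construction, for every $i\leq n$ we have $x_i\not\leq x_{n+1}$, for otherwise $x_{n+1}\in{\uparrow}x_i\subseteq{\uparrow}B_n$. Hence the resulting infinite sequence $x_0,x_1,x_2,\ldots$ is bad, contradicting the assumption that $(X,{\leq})$ is a wqo.

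There is essentially no hard step here: the construction is routine once one commits to proving the contrapositive. The one minor subtlety worth flagging is the base case $U=\emptyset$, which is handled by taking $B=\emptyset$ and noting ${\uparrow}\emptyset=\emptyset=U$. As an alternative presentation, one could instead define $B$ as the set of $\leq$-minimal elements of $U$: well-foundedness (guaranteed by the wqo assumption via the characterisation recalled just before \autoref{ex-wo}) ensures ${\uparrow}B=U$, and the FAC property makes $B$ finite since minimal elements are pairwise incomparable. I would present the bad-sequence argument as the main proof since it uses only the defining property of wqos and avoids invoking the auxiliary characterisation.
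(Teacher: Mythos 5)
Your main argument---the greedy bad-sequence construction---is correct and is the standard proof of the Finite Basis Property; the paper states this fact without proof, so there is nothing to compare against, but your argument is exactly the expected one. One caveat on your alternative route via minimal elements: in a general \emph{quasi}-order the set of $\leq$-minimal elements of $U$ need not be an antichain, since it may contain infinitely many pairwise-\emph{equivalent} elements (e.g.\ when $x\leq y$ and $y\leq x$ for all $x,y$); to invoke the FAC property one should first pass to one representative per equivalence class of minimal elements. This is easily patched, and you are right to present the bad-sequence argument as the primary proof since it avoids the issue and uses only the defining property of wqos.
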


\subsection{Elementary Operations}
Many constructions are known to yield new wqos from existing ones.  In
this paper we will employ the following elementary operations:

\subsubsection{Cartesian Products}
If $(X,{\leq_X})$ and $(Y,{\leq_Y})$ are wqos, then their Cartesian
product $X\times Y$ is well quasi ordered by the \emph{product
  (quasi-) ordering} defined by $(x,y)\leq(x',y')$ if and only if $x\leq_X x'$
and $y\leq_Y y'$.  For instance, vectors in $\+N^d$ along with the
product ordering form a wqo.  This result is also known as
\emph{Dickson's Lemma}.

\subsubsection{Finite Sequences}
If $(X,{\leq_X})$ is a wqo, then the set $X^\ast$ of finite sequences
over $X$ is well quasi ordered by the \emph{sequence embedding}
defined by $\sigma\leq_\ast\sigma'$ if and only if $\sigma=x_1\cdots
x_k$ and
$\sigma'=\sigma'_0x'_1\sigma'_1\cdots\sigma'_{k-1}x'_k\sigma'_k$ for
some $x_j\leq_X x'_j$ in $X$ for $1\leq j\leq k$ and some $\sigma'_j$
in $X^\ast$ for $0\leq j\leq k$.  For instance, finite sequences in
$\Sigma^\ast$ for a finite alphabet $(\Sigma,{=})$ form a wqo.  This
result is also known as \emph{Higman's Lemma}.

\smallskip
In the following, we call \emph{elementary} those wqos obtained
from finite sets $(X,{=})$ through finitely many applications of
Dickson's and Higman's lemmas.  Note that $(\+N,{\leq})$ is elementary
since it is isomorphic with finite sequences over some unary alphabet
with equality.

\section{WQO Ideals}
\label{sec:ideal}
\ifomitproofs\relax\else
Downward-closed sets $D$ can be denoted by a finite set of elements in
$X$: since $X\moins D$ is upward closed, it is the upward closure of a
finite set $B\subseteq X\moins D$ thanks to
\autoref{lem:upwarddecomposition}.  We deduce the following
decomposition:
\begin{equation*}
  D=\bigcap_{x\in B}(X\moins {\uparrow}x)\;.
\end{equation*}

\fi
In this section, we recall a\ifomitproofs{ }\else{}n
alternative \fi
way of decomposing downward-closed sets, namely as finite unions
of \emph{ideals}.  This is a classical
notion---\citet[\sectionautorefname~4.5]{fraisse} attributes finite
ideal decompositions to \citet{bonnet75}---which has been rediscovered
in the study of well structured transition systems~\cite{FGL09}.  Let
us review the basic theory of ideals, as can be found
in~\citep{bonnet75,fraisse,kabil92,FGL09}; see in
particular~\citep{GLKKS15} for a gentle introduction.

\subsection{Ideals}
A subset $S$ of a qo $(X,{\leq})$ is \emph{directed} if for every
$x_1,x_2\in S$ there exists $x\in S$ such that both $x_1\leq x$ and
$x_2\leq x$.  An \emph{ideal} $I$ is a directed non-empty
downward-closed set.  The class of ideals of $X$ is denoted by
$\ideal{X}{\leq}$.\footnote{The set of ideals equipped with the
  inclusion relation is also called the \emph{completion} of the wqo
  $(X,{\leq})$, see~\cite{FGL09}.}
\begin{example}[name={Well orders}]\label{ex-ordinal-ideal}
  In an ordinal $\alpha$ seen in set-theoretic terms as
  $\{\beta\mid\beta<\alpha\}$, any $\beta\leq\alpha$ is a
  downward-closed directed subset of $\alpha$, and conversely any
  downward-closed directed subset of $\alpha$ is some
  $\beta\leq\alpha$.  Hence the ideals of $\alpha$ are exactly the
  elements of $\alpha+1$ except~$0$.
\end{example}

\subsubsection{Ideals as Irreducible Downward-Closed Sets}
An alternative characterisation of ideals shows that they are
the \emph{irreducible} downward-closed sets of a qo $(X,{\leq})$:
\begin{restatable}[Ideals are Irreducible~\citep{kabil92,FGL09,GLKKS15}]{fact}{irreducible}\label{lem:idealchara}
  Let $I$ be a non-empty downward-closed set.  The following are
  equivalent:
  \begin{enumerate}
  \item\label{ideal1} $I$ is an ideal,
  \item\label{ideal2} for every pair of downward-closed sets
  $(D_1,D_2)$, if $I=D_1\cup D_2$, then $I=D_1$ or $I=D_2$, and
  \item\label{ideal3} for every pair of downward-closed sets
  $(D_1,D_2)$, if $I\subseteq D_1\cup D_2$, then $I\subseteq D_1$ or
  $I\subseteq D_2$.
  \end{enumerate}
\end{restatable}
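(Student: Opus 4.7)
The plan is to prove the standard cycle $(\ref{ideal1})\Rightarrow(\ref{ideal3})\Rightarrow(\ref{ideal2})\Rightarrow(\ref{ideal1})$, which gives all equivalences. The non-emptiness assumption on $I$ will be used only in the final implication.

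For $(\ref{ideal1})\Rightarrow(\ref{ideal3})$, assume $I$ is directed and $I\subseteq D_1\cup D_2$, and argue by contradiction: pick witnesses $x_1\in I\moins D_1$ and $x_2\in I\moins D_2$, use directedness to obtain an upper bound $x\in I$ of $x_1$ and $x_2$, and observe that whichever of $D_1,D_2$ contains $x$ must then contain the corresponding $x_i$ by downward-closure, a contradiction.

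For $(\ref{ideal3})\Rightarrow(\ref{ideal2})$, note that if $I=D_1\cup D_2$, then in particular $I\subseteq D_1\cup D_2$ as well as $D_1,D_2\subseteq I$, so the conclusion $I\subseteq D_i$ combines with $D_i\subseteq I$ to give $I=D_i$. This step is purely formal.

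For $(\ref{ideal2})\Rightarrow(\ref{ideal1})$, the main content, I assume $I$ is not directed and build downward-closed sets violating (\ref{ideal2}). If directedness fails, either $I$ is empty (excluded by hypothesis) or there exist $x_1,x_2\in I$ with no common upper bound in $I$. Set
\begin{equation*}
  D_1\eqdef\{y\in X\mid x_1\not\leq y\}\;,\qquad D_2\eqdef\{y\in X\mid x_2\not\leq y\}\;,
\end{equation*}
which are downward-closed in $X$ as complements of ${\uparrow}x_1$ and ${\uparrow}x_2$. The failure of directedness ensures that every $y\in I$ lies in $D_1\cup D_2$, so $I=(I\cap D_1)\cup(I\cap D_2)$; intersecting with $I$ preserves downward-closure, so we have a decomposition of $I$ into two downward-closed sets. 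By (\ref{ideal2}), $I$ equals one of them, say $I=I\cap D_1$, which forces $x_1\in D_1$, contradicting $x_1\leq x_1$. The symmetric case is identical, completing the proof.

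I don't expect a real obstacle: the only subtle point is ensuring that the sets constructed in $(\ref{ideal2})\Rightarrow(\ref{ideal1})$ are genuinely downward-closed subsets of $X$ (not merely of $I$), which is immediate from their definition as complements of upward-closures of single elements.
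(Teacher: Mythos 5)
Your proof is correct and uses essentially the same ideas as the paper: the common-upper-bound argument from directedness, the trivial intersection-with-$I$ step between (\ref{ideal2}) and (\ref{ideal3}), and the construction $D_i = X \moins {\uparrow}x_i$ to recover directedness. You simply traverse the cycle of implications in the opposite direction ($(\ref{ideal1})\Rightarrow(\ref{ideal3})\Rightarrow(\ref{ideal2})\Rightarrow(\ref{ideal1})$ instead of the paper's $(\ref{ideal1})\Rightarrow(\ref{ideal2})\Rightarrow(\ref{ideal3})\Rightarrow(\ref{ideal1})$), which is a cosmetic difference.
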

\ifomitproofs\relax\else\noindent Because we find the proof of this fact
in~\citep{kabil92,FGL09,GLKKS15} enlightening, we recall the main
ideas here:
\begin{proof}[Proof of $\ref{ideal1}\implies\ref{ideal2}$]
  Assume that $I$ is an ideal and let $(D_1,D_2)$ be two
  downward-closed sets such that $I=D_1\cup D_2$.  If $I=D_1$ we are
  done, so we can assume that there exists $x\in I\moins D_1$.
  Because $D_2\subseteq I$, it remains to prove that $I\subseteq D_2$.

  Consider any $y\in I$.  Because $I$ is directed, there exists
  $m\in I$ such that $x,y\leq m$.  Observe that $m\in I\subseteq
  D_1\cup D_2$ but $m\not\in D_1$ since $D_1$ is downward-closed,
  $x\leq m$ and $x\not\in D_1$.  Thus $m\in D_2$, and since $D_2$ is
  downward-closed, $y\in D_2$.  We have shown that $I\subseteq D_2$.
\end{proof}
\begin{proof}[Proof of $\ref{ideal2}\implies\ref{ideal3}$]
  Let $I$ be a non-empty downward-closed set
  satisfying \autoref{ideal2} and let $(D_1,D_2)$ be a pair of
  downward-closed sets with $I\subseteq D_1\cup D_2$.  Define
  $D'_1\eqdef D_1\cap I$ and $D'_2\eqdef D_2\cap I$: then $I=D'_1$ or
  $I=D'_2$ by \autoref{ideal2}, and therefore $I\subseteq D_1$ or
  $I\subseteq D_2$.
\end{proof}
\begin{proof}[Proof of $\ref{ideal3}\implies\ref{ideal1}$]
  Let $I$ be a non-empty downward-closed set
  satisfying \autoref{ideal3}.  Consider $x_1,x_2\in I$ along with the
  downward-closed sets $D_1\eqdef X\moins{\uparrow}x_1$ and $D_2\eqdef
  X\moins{\uparrow} x_2$.  Observe that, if $I\subseteq D_1\cup D_2$,
  by \autoref{ideal3} $I\subseteq D_1$ or $I\subseteq D_2$, and in
  both cases we get a contradiction with $x_1,x_2\in I$.  Hence, there
  exists $m\in I\setminus(D_1\cup D_2)$, thus $x_1,x_2\leq m$ and
  we have shown that $I$ is directed.
\end{proof}
\fi
\begin{example}[name={Finite sets}]\label{ex-fset-ideal}
  In a finite wqo $(X,{=})$, any subset of $X$ is downward-closed.
  The ideals are thus exactly the singletons over $X$: any other
  non-empty subset of $X$ can be split into simpler sets.
\end{example}
\begin{corollary}\label{lem:insecable}
  An ideal $I$ is included in a finite union $D_1\cup\cdots\cup D_k$
  of downward-closed sets $D_1,\ldots,D_k$ if and only if $I\subseteq
  D_j$ for some $1\leq j\leq k$.
\end{corollary}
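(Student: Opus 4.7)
The plan is to derive this corollary directly from \autoref{lem:idealchara}\ref{ideal3} by induction on $k$, using the elementary fact that downward-closed sets are closed under finite union.

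The ``if'' direction is immediate: $I\subseteq D_j$ gives $I\subseteq D_1\cup\cdots\cup D_k$ since $D_j$ is one of the union's members. For the ``only if'' direction, I would induct on $k\geq 1$. The base case $k=1$ is trivial. The case $k=2$ is exactly \autoref{lem:idealchara}\ref{ideal3}. For the inductive step $k>2$, I would write $D_1\cup\cdots\cup D_k = D_1 \cup D'$ where $D' \eqdef D_2\cup\cdots\cup D_k$; since the union of downward-closed sets is downward-closed (as noted in \autoref{sec:wqo}), $D'$ is downward-closed. Then \autoref{lem:idealchara}\ref{ideal3} applied to the pair $(D_1, D')$ yields $I\subseteq D_1$ or $I\subseteq D'$; in the first case we are done, and in the second case the induction hypothesis on $D_2,\ldots,D_k$ produces some $2\leq j\leq k$ with $I\subseteq D_j$.

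No step here is genuinely difficult: the whole content of the statement lies in the $k=2$ case, which has already been established. The only small care is to observe explicitly that finite unions of downward-closed sets are downward-closed, so that the inductive step actually stays within the scope of \autoref{lem:idealchara}\ref{ideal3}.
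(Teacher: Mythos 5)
Your proof is correct and is exactly the argument the paper has in mind: the paper's proof is the one-liner ``By induction on $k$ using \autoref{lem:idealchara}'', and your write-up simply spells out that induction, with the key step being \autoref{lem:idealchara}\ref{ideal3} applied to the pair $(D_1,\, D_2\cup\cdots\cup D_k)$. Nothing more to add.
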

\ifomitproofs\relax\else
\begin{proof}By induction on $k$
  using \autoref{lem:idealchara}.%
\end{proof}\fi

\subsubsection{Finite Decompositions}
Observe that any downward-closed set of the form ${\downarrow}x$ is an
ideal, hence any downward-closed set is a union of ideals.  However,
the main interest we find with ideals is that they
provide \emph{finite} decompositions for downward-closed subsets of
wqos:
\begin{restatable}[Canonical Ideal Decompositions~\citep{kabil92,FGL09,GLKKS15}]{fact}{decomposition}\label{fc-ideals}
  Every downward-closed set over a wqo is the union of a unique finite
  family of incomparable (for the inclusion) ideals.
\end{restatable}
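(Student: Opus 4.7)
The plan is to establish the result in three stages---existence of a finite ideal covering, extraction of an incomparable one, and uniqueness---all using tools already in hand. For \emph{existence}, I view the downward-closed subsets of $X$ as a poset under inclusion. Fact~\ref{lem:downwardstat} says this poset has no infinite strictly descending chain, so it is well-founded: every non-empty collection of downward-closed sets contains an inclusion-minimal element. Suppose for contradiction that some downward-closed set is not a finite union of ideals, and let $D$ be a minimal such ``bad'' set. The empty set is the empty union of ideals, so $D \neq \emptyset$; and any single ideal is trivially a finite union of ideals, so $D$ is itself not an ideal. Fact~\ref{lem:idealchara} then yields downward-closed $D_1, D_2$ with $D = D_1 \cup D_2$ and $D_1, D_2 \subsetneq D$. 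By minimality of~$D$, both $D_1$ and $D_2$ are finite unions of ideals, hence so is $D$---a contradiction.

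Given any finite decomposition $D = I_1 \cup \cdots \cup I_k$ from the existence step, I iteratively discard any $I_j$ strictly contained in another $I_{j'}$ of the family to obtain a finite \emph{incomparable} decomposition. For \emph{uniqueness}, suppose $D = I_1 \cup \cdots \cup I_m = J_1 \cup \cdots \cup J_n$ are two incomparable finite ideal decompositions. For each $i$, $I_i \subseteq \bigcup_j J_j$, so Corollary~\ref{lem:insecable} gives some $j$ with $I_i \subseteq J_j$; symmetrically, some $i'$ satisfies $J_j \subseteq I_{i'}$. Hence $I_i \subseteq I_{i'}$, and incomparability of the $I$-family forces $i = i'$ and $I_i = J_j$. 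Reversing the roles, every $J_j$ appears among the $I_i$'s, so the two families coincide.

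The main obstacle is the existence step. The key conceptual observation is that the property ``not a finite union of ideals'' is preserved under the binary splittings furnished by the failure of ideal irreducibility (Fact~\ref{lem:idealchara}), which combined with the well-foundedness of downward-closed sets (Fact~\ref{lem:downwardstat}) collapses to a one-shot contradiction without any explicit induction on elements of~$X$. Once existence is settled, the incomparable extraction and uniqueness reduce to routine bookkeeping built on Corollary~\ref{lem:insecable}.
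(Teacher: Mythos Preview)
Your proof is correct and follows essentially the same approach as the paper's: a well-foundedness argument on downward-closed sets (via Fact~\ref{lem:downwardstat}) to pick a minimal counterexample, then Fact~\ref{lem:idealchara} to derive a contradiction, with uniqueness from Corollary~\ref{lem:insecable}. The only cosmetic difference is that the paper phrases the existence step by showing the minimal bad $D$ \emph{satisfies} the irreducibility criterion (hence is an ideal, contradiction), whereas you phrase it by showing the minimal bad $D$ \emph{fails} it (hence splits into two strictly smaller good pieces, contradiction); these are contrapositives of one another.
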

\ifomitproofs\relax\else\noindent Let us again recall the proof as found
in~\citep{kabil92,FGL09,GLKKS15}:
\begin{proof}
  Assume for the sake of contradiction that there exists a
  downward-closed set $D$ of a wqo $(X,{\leq})$, for which only
  infinite ideal decompositions exist.  Because $(X,{\leq})$ is a wqo,
  by \autoref{lem:downwardstat} $(\ideal{X}{\leq},{\subseteq})$ is
  well-founded and we can choose $D$ minimal for inclusion.  Observe
  that $D$ is nonempty (or it would be an empty union of ideals).
  Whenever $D=D_1\cup D_2$ for some downward-closed sets $D_1$ and
  $D_2$, there is $i$ in $\{1,2\}$ such that $D_i$ requires an
  infinite ideal decomposition, and thus by minimality of $D$,
  $D=D_i$.  By \autoref{lem:idealchara}, $D$ is an ideal,
  contradiction.  Finally, the unicity of the decomposition follows
  from \autoref{lem:insecable}.
\end{proof}
\noindent
The statement of \autoref{fc-ideals} can be strengthened: it already
holds for FAC partial orders~\citep[see][]{bonnet75,fraisse,kabil92}.\fi

\subsection{Adherent Ideals}
Consider %
some subset $S$ of $X$.  We call an
ideal $I$ of $X$ an \emph{adherent ideal} of $S$, and say that $I$ is
in the \emph{adherence} of $S$, if there exists a directed subset
$\Delta\subseteq S$ such that ${\downarrow}\Delta=I$.

By \autoref{fc-ideals}, the downward-closure ${\downarrow}S$ has a
canonical ideal decomposition.  The following lemma shows that the
ideals in this decomposition are in the adherence of~$S$.
\begin{restatable}{lemma}{limit}\label{lem:limit}
  Let \ifomitproofs\relax\else $(X,{\leq})$ be a wqo and \fi
  $S\subseteq X$.  Then every maximal ideal of ${\downarrow}S$ is in
  the adherence of $S$.
\end{restatable}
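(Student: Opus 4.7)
The plan is to take the candidate directed set $\Delta \eqdef S\cap I$ and argue that (i) $\downarrow\!\Delta = I$ and (ii) $\Delta$ is directed. Step~(ii) will fall out easily from step~(i), so the whole argument hinges on proving the identity $I = \downarrow(S\cap I)$. The inclusion $\downarrow(S\cap I) \subseteq I$ is immediate, since $S\cap I \subseteq I$ and $I$ is downward-closed, so the real content is the reverse inclusion $I \subseteq \downarrow(S\cap I)$.

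To get $I \subseteq \downarrow(S\cap I)$, I plan to exploit the maximality of $I$ in the canonical ideal decomposition of $\downarrow S$ afforded by \autoref{fc-ideals}. Write $\downarrow S = I \cup I_2 \cup \cdots \cup I_k$ with $I, I_2, \ldots, I_k$ pairwise incomparable ideals. Every $s \in S$ belongs to at least one member of this decomposition, and from this observation one rewrites
\begin{equation*}
  \downarrow S \;=\; \downarrow(S\cap I) \,\cup\, \bigcup_{j=2}^{k} \downarrow(S\cap I_j) \;.
\end{equation*}
Now $I$ is an ideal contained in this finite union of downward-closed sets, so \autoref{lem:insecable} places $I$ entirely inside one of them: either $I \subseteq \downarrow(S\cap I)$, which is what I want, or $I \subseteq \downarrow(S\cap I_j) \subseteq I_j$ for some $j \geq 2$. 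The latter contradicts the pairwise incomparability of the ideals in the canonical decomposition, ruling it out.

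With the identity $I = \downarrow(S\cap I)$ in hand, directedness of $\Delta = S\cap I$ follows by a short chase: given $s_1, s_2 \in S\cap I$, directedness of the ideal $I$ produces an upper bound $m \in I$; applying the identity to $m$ yields some $s_3 \in S\cap I$ with $m \leq s_3$, which is the common upper bound of $s_1,s_2$ inside $\Delta$. Hence $\Delta$ is a directed subset of $S$ with $\downarrow\!\Delta = I$, so $I$ is in the adherence of $S$.

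The main obstacle, and the only place where maximality of $I$ is used, is the identity $I = \downarrow(S\cap I)$: without the incomparability of $I$ with the other members of the canonical decomposition, the elements of $S$ witnessing $I \subseteq \downarrow S$ could all live outside $I$, and $S \cap I$ could fail to be large enough. \autoref{lem:insecable} combined with the canonical decomposition of \autoref{fc-ideals} is exactly the tool needed to exclude this possibility.
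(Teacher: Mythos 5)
Your proof is correct, and it takes a noticeably different route from the one in the paper. The paper's argument picks a \emph{separating element}: by incomparability of the canonical decomposition ${\downarrow}S = J \cup J_1 \cup \cdots \cup J_k$, there is some $x_J \in J$ lying in no other $J_j$, so every element of $S$ above $x_J$ must land in $J$; directedness of $J\cap S$ and the identity $J = {\downarrow}(J\cap S)$ then both follow by going up inside $J$ past $x_J$ and then up into $S$. You instead avoid exhibiting any such element: you rewrite ${\downarrow}S = \bigcup_j {\downarrow}(S\cap I_j)$ as a finite union of downward-closed sets and invoke \autoref{lem:insecable} to force $I \subseteq {\downarrow}(S\cap I)$, the other alternatives being killed by incomparability; directedness of $S\cap I$ then drops out of the identity. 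Both proofs rest on the same two pillars (\autoref{fc-ideals} and irreducibility), but yours leans on irreducibility as a black box where the paper does hands-on element chasing. Your version is arguably more modular; the paper's version has the minor virtue of making explicit \emph{which} elements of $S$ witness membership in $I$ (those above $x_J$), which can be handy when one wants effective content. One small presentational point: your claim that directedness ``falls out easily'' is only true \emph{after} the identity is established, so the logical order identity-then-directedness is essential to your argument, whereas the paper proves directedness first; this is fine, just worth being aware that the two steps are not interchangeable in your setup.
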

\ifomitproofs\relax\else\begin{proof}
  Assume that $S$ is non-empty---or the lemma holds trivially.  Let us
  write ${\downarrow}S=J\cup J_1\cup\cdots\cup J_k$ for the canonical
  decomposition of ${\downarrow}S$.  By minimality of this
  decomposition, there exists $x_J$ in $J$ such that $x_J\not\in J_j$
  for all $1\leq j\leq k$.  Thus any element $s$ in ${\uparrow}x_J\cap
  S$ must belong to~$J$.

  Let us show that $J\cap S$ is directed: for $s,s'\in J\cap S$,
  because $J$ is directed we first find $y$ in $J$ larger or equal to
  $s$, $s'$, and $x_J$.  Since $J\subseteq{\downarrow}S$, we then find
  $s''\geq y$ in $S$.  By the remark made in the previous paragraph,
  since $s''\geq x_J$, $s''$ also belongs to $J$.

  It remains to show that $J={\downarrow}(J\cap S)$.  It suffices to
  show the inclusion $J\subseteq{\downarrow}(J\cap S)$ since the
  converse inclusion is immediate.  Consider any $y$ from $J$.  Then
  there exists $y'$ in $J$ larger or equal to both $y$ and $x_J$, and
  again since $J\subseteq{\downarrow}S$ and by definition of $x_J$
  there exists $s\geq y'$ in $J\cap S$.
\end{proof}
\fi
\ifomitproofs\noindent\else\par\fi
Later in \autoref{sec:wsr} we will exploit \autoref{lem:limit} in a
particular setting, where a downward-closed over-approximation $D$ of
$S$ is known.
\begin{restatable}{lemma}{adherence}\label{lem:adherence}
  Let \ifomitproofs\relax\else $(X,{\leq})$ be a wqo, \fi
  $S\subseteq D\subseteq X$ for $D$ downward-closed%
  \ifomitproofs\relax\else,\fi\ and $I$ be a maximal ideal of $D$.
  Then $I\subseteq{\downarrow}S$ if and only if $I$ is in the
  adherence of $S$.
\end{restatable}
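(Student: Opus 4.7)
The easy direction is $(\Leftarrow)$: if $I$ lies in the adherence of $S$, then by definition there is a directed $\Delta \subseteq S$ with ${\downarrow}\Delta = I$, whence $I = {\downarrow}\Delta \subseteq {\downarrow}S$.

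For $(\Rightarrow)$, the plan is to reduce to \autoref{lem:limit} by showing that $I$ is actually a maximal ideal of ${\downarrow}S$. I would start by invoking \autoref{fc-ideals} on the downward-closed set ${\downarrow}S$ to get a canonical decomposition ${\downarrow}S = J_1 \cup \cdots \cup J_m$ into incomparable ideals. Since $I$ is itself an ideal and $I \subseteq {\downarrow}S$, \autoref{lem:insecable} gives some index $j$ with $I \subseteq J_j$. Now I would use the hypothesis that $I$ is maximal among the ideals of $D$: the set $J_j$ is an ideal of $X$ contained in ${\downarrow}S \subseteq D$, hence is itself an ideal of $D$, so $I \subseteq J_j$ together with the maximality of $I$ forces $I = J_j$. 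Thus $I$ appears in the canonical ideal decomposition of ${\downarrow}S$, and by the incomparability of that decomposition is a maximal ideal of ${\downarrow}S$. \autoref{lem:limit} then yields that $I$ is in the adherence of $S$, finishing the proof.

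The main conceptual point is the equivalence between being a maximal ideal of a downward-closed set and being a member of its canonical decomposition, which follows directly from \autoref{lem:insecable}; once this is clear, the argument is essentially a bookkeeping exercise combining \autoref{lem:insecable}, the maximality hypothesis on $I$, and \autoref{lem:limit}. I do not anticipate a serious obstacle; the only care needed is to verify that $J_j$ really is an ideal of $D$ (and not merely of ${\downarrow}S$) so that maximality of $I$ in $D$ can be invoked, which is immediate since ${\downarrow}S \subseteq D$ and the notion of ``ideal'' is purely order-theoretic.
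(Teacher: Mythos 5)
Your proposal is correct and follows essentially the same route as the paper's proof: both directions use \autoref{lem:insecable} to place $I$ inside a maximal ideal $J$ of the canonical decomposition of ${\downarrow}S$, exploit the maximality of $I$ in $D$ to force $I=J$, and conclude with \autoref{lem:limit}. The only cosmetic difference is that the paper passes through an intermediate maximal ideal $I'$ of $D$ containing $J$ before concluding $I=J=I'$, whereas you invoke the maximality of $I$ directly.
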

\ifomitproofs\relax\else\begin{proof}
If there exists a directed set $\Delta\subseteq S$ such that
$I={\downarrow}\Delta$, then $I\subseteq {\downarrow}S$.

Conversely, assume that $I\subseteq{\downarrow}S$.  Because $I$ is
non-empty, this means that ${\downarrow}S$ has a non-empty ideal
decomposition into maximal ideals by \autoref{fc-ideals}.
Furthermore, by \autoref{lem:insecable}, $I$ is included into one of
those maximal ideals $J$ of~${\downarrow}S$.

Because $J\subseteq D$, by \autoref{lem:insecable} again there exists
$I'$ a maximal ideal of $D$ with $J\subseteq I'$.  Hence $I=J=I'$, or
$I$ would not be a maximal ideal of $D$.  Then \autoref{lem:limit}
allows to conclude that $I=J$ is in the adherence of~$S$.
\end{proof}
\fi

\subsection{Effective Ideal Representations}
\label{sec-algebra}
Thanks to \autoref{fc-ideals}, any downward-closed set has a
representation using finitely many ideals.  Should we manage to find
\emph{effective} representations of wqo ideals, this will provide us
with algorithmic means to manipulate downward-closed sets.  This
endeavour is the subject of~\citep{FGL09,GLKKS15}, and we merely
provide pointers to their results here.

\subsubsection{Natural Numbers}
As seen in \autoref{ex-ordinal-ideal}, the ideals of $(\+N,{\leq})$
are either ${\downarrow}n$ for some finite $n\in\+N$, or the whole of
$\+N$ itself.  As done classically in the VAS literature, we represent
the latter using a new element noted ``$\omega$'' with $n<\omega$ for
all $n\in\+N$, and denote the new set
$\+N_\omega\eqdef\+N\uplus\{\omega\}$.  For notational convenience, we
write ${\downarrow}\omega$ for $\+N$, so that an ideal of
$(\+N,{\leq})$ can be written as ${\downarrow}x$ for $x$ in
$\+N_\omega$.

\subsubsection{Cartesian Products}\label{sub-cart-ideal}
Let $(X,{\leq_X})$ and $(Y,{\leq_Y})$ be two wqos, and assume that we
know how to represent the ideals in $\ideal{X}{\leq_X}$ and
$\ideal{Y}{\leq_Y}$.  Then the ideals of $X\times Y$ equipped with the
product ordering have a simple enough representation as pairs of ideals:
\begin{equation}\label{eq-cart-ideal}
  \ideal{X\times Y}{\leq}=\{I\times J\mid I\in \ideal{X}{\leq_X}\wedge
  J\in\ideal{Y}{\leq_Y}\}\;.
\end{equation}

\paragraph{Configurations}
For example, configuration ideals can be represented as
${\downarrow}\vec v$ for a vector $\vec v$ in~$\+N^d_\omega$.

In this paper we often find it convenient to identify \emph{partial}
vectors $\vec u$ in $\+N^F$ for some subset $F\subseteq\{1,\dots,d\}$
with vectors $\vec v$ in $\+N^d_\omega$ with finite values over $F$,
such that $\vec v(i)=\omega$ if $i\not\in F$ and $\vec v(i)=\vec
u(i)$ otherwise.  Then \emph{projections}
$\pi_F{:}\,\+N^d_\omega\to\+N^d_\omega$ on a set
$F\subseteq\{1,\dots,d\}$ can be defined for all $1\leq i\leq d$ by
\begin{equation}
  \pi_F(\vec u)(i)\eqdef\begin{cases}\vec u(i)&\text{if $i\in
      F$}\\\omega&\text{otherwise.}
\end{cases}\end{equation}

\paragraph{Transitions}
By Dickson's Lemma, the product ordering over $\confs\times\vec
A\times\confs$ is a wqo.

A \emph{transition ideal} is an ideal of
$\setN^d\times\vec{A}\times\setN^d$ that is the downward closure of a
set of transitions of $\trans$.  %
As seen in
\autoref{ex-fset-ideal}, the ideals of $\vec A$ are the singletons
$\{\vec a\}$ for $\vec a\in\vec A$.  By~\eqref{eq-cart-ideal}, the
ideals of $\confs\times\vec{A}\times\confs$ can thus be presented as
downward-closures of triples $(\vec u,\vec a,\vec v)$ in
$\confs_\omega\times\vec{A}\times\confs_\omega$.

Transition ideals are going to form a particular class of such
triples.  Let us define addition over $\+Z\uplus\{\omega\}$ by
$k+\omega=\omega+k=\omega+\omega=\omega$. A \emph{partial transition}
is a triple $(\vec{u},\vec{a},\vec{v})$ in
$\setN_\omega^d\times\vec{A}\times\setN_\omega^d$ such that
$\vec{v}=\vec{u}+\vec{a}$\ifomitproofs; we show
in \appref{app-wsr}\else.  The following is immediate by continuity,
but can also be given a non-topological proof\fi:

\begin{restatable}{lemma}{transitions}
  The transitions ideals of $\setN^d\times\vec{A}\times\setN^d$ are exactly the sets
  ${\downarrow}t$ with $t$ a partial transition.
\end{restatable}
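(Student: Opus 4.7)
The plan is to prove both inclusions after first reducing to a general shape. By the Cartesian product characterisation of \autoref{eq-cart-ideal} together with \autoref{ex-fset-ideal}, which says the ideals of $(\vec A,{=})$ are the singletons, every ideal of $\setN^d\times\vec A\times\setN^d$ has the form ${\downarrow}\vec u\times\{\vec a\}\times{\downarrow}\vec v = {\downarrow}(\vec u,\vec a,\vec v)$ for some triple in $\setN_\omega^d\times\vec A\times\setN_\omega^d$. Both transition ideals and sets of the form ${\downarrow}t$ with $t$ partial therefore have this shape, and the only question is which triples $(\vec u,\vec a,\vec v)$ additionally satisfy $\vec v=\vec u+\vec a$.

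For the forward direction, suppose $I$ is a transition ideal, written both as $I={\downarrow}S$ for some $S\subseteq\trans$ and as $I={\downarrow}(\vec u,\vec a,\vec v)$. Since every element of $I$ has the action component $\vec a$, each $(\vec u_s,\vec a_s,\vec v_s)\in S$ satisfies $\vec a_s=\vec a$ and $\vec v_s=\vec u_s+\vec a$. The key step is the coordinatewise identity $\vec u(i)=\sup_s\vec u_s(i)$, taken in $\setN_\omega$: the inequality $\geq$ follows from $\vec u_s\leq\vec u$, while $\leq$ follows from $I={\downarrow}\{\vec u_s\}$ by choosing vectors of $I$ that witness $\vec u(i)$ along coordinate $i$. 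An analogous identity holds for $\vec v$. Applying the elementary fact that $\sup_s(x_s+c)=(\sup_s x_s)+c$ in $\setN_\omega$ whenever every $x_s+c$ is non-negative (which holds here because $\vec v_s(i)=\vec u_s(i)+\vec a(i)\geq 0$), we conclude $\vec v(i)=\vec u(i)+\vec a(i)$ for each $i$, so $(\vec u,\vec a,\vec v)$ is a partial transition.

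For the reverse direction, given a partial transition $(\vec u,\vec a,\vec v)$, the plan is to produce an explicit $S\subseteq\trans$ with ${\downarrow}S={\downarrow}(\vec u,\vec a,\vec v)$. Let $F\eqdef\{i\mid\vec u(i)=\omega\}$, which coincides with $\{i\mid\vec v(i)=\omega\}$ because $\vec a$ is finite. For each $n\in\setN$ define $\vec u^n\in\setN^d$ by $\vec u^n(i)=\vec u(i)$ for $i\notin F$ and $\vec u^n(i)=n$ for $i\in F$, and set $\vec v^n\eqdef\vec u^n+\vec a$. Choosing $n$ at least $\max\{-\vec a(i)\mid i\in F,\;\vec a(i)<0\}$ ensures $\vec v^n\in\setN^d$, so $(\vec u^n,\vec a,\vec v^n)\in\trans$ and is dominated by $(\vec u,\vec a,\vec v)$. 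Taking $S$ to be the collection of these triples, any $(\vec w,\vec a,\vec w')\leq(\vec u,\vec a,\vec v)$ with $\vec w,\vec w'\in\setN^d$ is dominated by $(\vec u^n,\vec a,\vec v^n)$ once $n\geq\max_{i\in F}\max(\vec w(i),\vec w'(i)-\vec a(i))$, so ${\downarrow}S={\downarrow}(\vec u,\vec a,\vec v)$ as required.

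The only mild wrinkle I expect is making the supremum identity $\vec v(i)=\vec u(i)+\vec a(i)$ go through at coordinates where $\vec a(i)<0$ and $\vec u(i)<\omega$: one needs $\vec u(i)\geq -\vec a(i)$, which is immediate from the pointwise bound $\vec u_s(i)\geq -\vec a(i)$ combined with the supremum identity. Everything else is bookkeeping within the Cartesian product decomposition.
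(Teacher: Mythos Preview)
Your argument is correct. Both directions go through as you describe, and the only notational slip is writing ``$I={\downarrow}\{\vec u_s\}$'' where you mean the first-coordinate projection ${\downarrow}\vec u={\downarrow}\{\vec u_s\}$; the intended meaning is clear.

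The paper's proof follows the same overall structure but differs in the details of each direction. For the reverse inclusion, the paper simply asserts that ${\downarrow}(\vec u,\vec a,\vec u+\vec a)$ is non-empty, directed, and the downward closure of a set of transitions, whereas you explicitly construct the witnessing set $S=\{(\vec u^n,\vec a,\vec u^n+\vec a)\}_n$; your version is more informative. For the forward inclusion, the paper argues by contradiction: assuming $\vec v(i)<\vec u(i)+\vec a(i)$, it picks a concrete $\vec u'\in{\downarrow}\vec u$ with $\vec v(i)<\vec u'(i)+\vec a(i)$, lifts it to a transition $(\vec u'',\vec a,\vec u''+\vec a)\in T$ with $\vec u'\leq\vec u''$, and derives $\vec u''+\vec a\notin{\downarrow}\vec v$ (and symmetrically for the other inequality). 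Your supremum computation $\vec u(i)=\sup_s\vec u_s(i)$, $\vec v(i)=\sup_s\vec v_s(i)$, combined with $\sup_s(x_s+c)=(\sup_s x_s)+c$, reaches the same conclusion more directly and avoids the case split. Neither approach is deeper than the other; yours is slightly cleaner, the paper's slightly more hands-on.
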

\ifomitproofs\relax\else\begin{proof}
  First notice that ${\downarrow}(\vec u,\vec a,\vec u+\vec a)$ for
  some $\vec u$ in $\confs_\omega$ and $\vec a$ in $\vec A$ is a transition
  ideal of $\setN^d\times\vec{A}\times\setN^d$: it is non-empty,
  directed, and the downward closure of a set of transitions in $\trans$.
  \ifomitproofs\relax\else\par\fi
  Conversely, let $I\subseteq\trans$ be a transition ideal. There
  exists a set $T\subseteq\trans$ such that $I=\downarrow T$.
  Then $I={\downarrow}(\vec u,\vec a,\vec v)$
  for $\vec u,\vec v$ in $\confs_\omega$ and $\vec a$ in $\vec A$.
  Let us show that $\vec v=\vec u+\vec a$. Assume for the sake of
  contradiction that there exists $1\leq i\leq d$ such that $\vec
  v(i)<\vec u(i)+\vec a(i)$. There exists $\vec{u}'$ in
  ${\downarrow}\vec u$ such that $\vec{v}(i)<\vec{u}'(i)+\vec{a}(i)$.
  Moreover, since $\vec u'$ is in ${\downarrow}\vec u$, there exists
  $(\vec{u}'',\vec{a},\vec{u}''+\vec{a})\in T$ such that $\vec{u}'\leq \vec{u}''$.
  But then $\vec u''+\vec a$ does not belong to
  ${\downarrow}\vec v$ since $\vec{u}''(i)+\vec{a}(i)\geq
  \vec{u}'(i)+\vec{a}(i)>\vec{v}(i)$. This is a contradiction.  The case where there exists
  $1\leq i\leq d$ such that $\vec v(i)>\vec u(i)+\vec a(i)$ is
  similar.
\end{proof}
\fi
\noindent
Partial transitions can also be viewed as projected transitions:
\begin{equation}
  \pi_F((\vec u,\vec a,\vec v))\eqdef(\pi_F(\vec u),\vec a,\pi_F(\vec
  v))\;.
\end{equation}

\subsubsection{Finite Sequences}\label{sub-seq-ideal}
In the case of sequences over a finite alphabet $(\Sigma,{=})$,
\citet{jullien69} first characterised the ideals using a simple form
of regular expressions, which was later rediscovered by
\citet{abdulla04} for the verification of lossy channel systems.  A
representation of ideals for sequences over an arbitrary wqo
$(X,{\leq})$ was given by \citet{kabil92} and also rediscovered in the
context of well-structured systems by \citet{FGL09}.

\newcommand{\products}[1][\trans]{\operatorname{Prod}(#1)} Assume as
before that we know how to represent the ideals in $\ideal{X}{\leq}$.
Define an \emph{atom} $A$ over $X$ as a language $A\subseteq X^*$ of
the form $A=D^*$ where $D$ is a downward-closed set of $X$---i.e.\ a
finite union of ideals from $\ideal{X}{\leq}$---, or form
$A=I\cup\{\varepsilon\}$ where $I$ is an ideal from $\ideal{X}{\leq}$
and $\varepsilon$ denotes the empty sequence.  A \emph{product}
$P\subseteq X^\ast$ over $X$ is a finite concatenation $P=A_1\cdots
A_k$ of atoms $A_1,\ldots,A_k$ over $X$.  We denote by
$\products[X]$ the set of products over $X$.

\begin{restatable}{fact}{higman}\label{th-products}
  The ideals of $X^*$ are the products over $X$.
\end{restatable}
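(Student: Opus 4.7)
The plan is to prove the two inclusions separately.

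For the easier direction $(\Leftarrow)$, that every product $P = A_1 \cdots A_k$ is an ideal of $X^*$, I would verify the three defining properties. Non-emptiness is immediate since $\varepsilon \in P$: each atom $D^*$ contains $\varepsilon$ as the empty concatenation, and each atom $I \cup \{\varepsilon\}$ contains it by definition. Downward-closure is proved in two steps: first, each atom is itself downward-closed ($D^*$ because every sub-embedding of $d_1 \cdots d_n \in D^*$ has entries $\leq$-bounded by some $d_j \in D$ and $D$ is downward-closed; $I \cup \{\varepsilon\}$ because a non-empty sub-embedding of a single element of $I$ is again a single element below an element of $I$, hence in $I$); second, if $\sigma \leq_\ast \tau_1 \cdots \tau_k$ with $\tau_i \in A_i$, partitioning the embedding indices by factor yields $\sigma = \sigma_1 \cdots \sigma_k$ with $\sigma_i \leq_\ast \tau_i$, so $\sigma_i \in A_i$ and $\sigma \in P$. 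For directedness, I would combine two elements of $P$ factor-wise using the directedness of each atom: for $D^*$-atoms, simply concatenate the two factors; for $I \cup \{\varepsilon\}$-atoms, use directedness of $I$ when both factors are nonempty, otherwise the nonempty factor (if any) already bounds both.

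For the harder direction $(\Rightarrow)$, that every ideal of $X^*$ is a product, my plan is to establish the stronger statement that \emph{every downward-closed subset $D \subseteq X^*$ is a finite union of products}. Once this claim is in hand, combining it with $(\Leftarrow)$, \autoref{fc-ideals}, and \autoref{lem:idealchara} shows that any ideal $I$ written as a finite union of products must itself coincide with one of these products, hence be a product. I would prove the claim by well-founded induction on $D$ ordered by inclusion, whose well-foundedness follows from \autoref{lem:downwardstat}. In the inductive step on a nonempty $D$, I distinguish two cases. If the lengths of sequences in $D$ are bounded, $D$ is covered by finitely many products of atoms of form $I \cup \{\varepsilon\}$ by position-wise analysis of admissible letters. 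If lengths are unbounded, a Higman-style argument on long enough sequences of $D$ extracts a repeating downward-closed set $D'$ of letters occurring cofinally often, producing a starred atom $D'^*$; the surrounding prefixes, suffixes, and intermediate positions form strictly smaller downward-closed sets on which the induction applies.

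The main obstacle is the unbounded-length case in the inductive step: a naive left-to-right peeling fails because a single ideal may combine several starred atoms separated by fixed atoms $I \cup \{\varepsilon\}$, so identifying the correct $D'^*$ requires a Ramsey-style extraction exploiting the wqo property of $X$ on a cofinal directed family of sequences of $D$. The detailed technical construction has been worked out in \citep{kabil92,FGL09,GLKKS15}, whose presentation I would follow for the classical proof.
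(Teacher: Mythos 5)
The paper states this fact without proof, attributing it to \citep{kabil92,FGL09,GLKKS15}, so there is no in-paper argument to measure you against; I can only assess your sketch on its own terms. Your proof that every product is an ideal is correct and complete: downward-closure of each kind of atom, the factorisation of an embedding $\sigma\leq_\ast\tau_1\cdots\tau_k$ into $\sigma=\sigma_1\cdots\sigma_k$ with $\sigma_j\leq_\ast\tau_j$, and the factor-wise amalgamation for directedness are exactly the right steps. Your reduction of the converse to the claim that \emph{every} downward-closed subset of $X^*$ is a finite union of products, followed by irreducibility (\autoref{lem:insecable}), is also correct and is indeed how the cited sources conclude.

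The soft spot is your plan for proving that claim. A well-founded induction on downward-closed sets under inclusion makes no progress in precisely the case that matters: when $D$ is itself an ideal of unbounded length it cannot be split into strictly smaller downward-closed pieces whose union is $D$, so your ``surrounding prefixes, suffixes, and intermediate positions'' would have to be shown both downward-closed and \emph{strictly} smaller, which is not automatic (consider $D=X^*$). You flag this and defer to the literature, which is defensible since the paper does the same; but be aware that the standard argument in \citep{GLKKS15} sidesteps the induction entirely. By \autoref{lem:upwarddecomposition} one writes $D=\bigcap_{\sigma\in B}(X^*\setminus{\uparrow}\sigma)$ for a finite basis $B$ of the complement; each $X^*\setminus{\uparrow}(x_1\cdots x_n)$ is an \emph{explicit} finite union of products via a greedy left-to-right blocking decomposition built from the atoms $(X\setminus{\uparrow}x_j)^*$; and finite unions of products are closed under intersection. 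These are exactly the operations (CU') and (II) that the paper lists in \autoref{sub-effectiveness}, so if you intend to fill in the details, that constructive route is the one to follow rather than the bounded/unbounded case split.
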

\ifomitproofs\relax\else
It is convenient for algorithmic tasks to have a canonical
representation of ideals.  In the case of products over $X$, there is
no uniqueness of representation, e.g.\ $(a+b)^\ast\cdot b^\ast$ denotes
the same ideal as $(a+b)^\ast$ over $X=\{a,b\}$.  We can avoid such
redundancies by considering \emph{reduced} products $P=A_1\cdots A_k$,
where for every $j$, the following conditions hold:
\begin{enumerate}
\item $A_j\neq\emptyset^\ast$,
\item if $j+1\leq k$ and $A_{j+1}$ is some $D^*$, then
  $A_j\not\subseteq A_{j+1}$,
\item if $j-1\geq 1$ and $A_{j-1}$ is some $D^*$, then
  $A_j\not\subseteq A_{j-1}$.
\end{enumerate}
Because inclusion tests between effective representations of ideals
are usually decidable, these conditions can effectively be enforced.

\begin{fact}\label{thm:higman}
  Every ideal of $X^*$ admits a canonical representation as a reduced
  product over $X$.
\end{fact}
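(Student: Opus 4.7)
The plan is to split the statement into \emph{existence} and \emph{uniqueness} of a reduced representation. For \emph{existence}, I start from the (possibly non-reduced) product representation of the ideal guaranteed by \autoref{th-products} and repeatedly apply three language-preserving rewrites that enforce the reducedness conditions: (i)~delete any atom equal to $\emptyset^\ast$, which denotes $\{\varepsilon\}$ and is absorbed in concatenation; (ii)~if $A_{j+1} = D^\ast$ and $A_j \subseteq A_{j+1}$, delete $A_j$, since $A_j \cdot A_{j+1}$ then equals $A_{j+1}$ (using $\varepsilon \in A_{j+1}$ and $A_j \cdot D^\ast \subseteq D^\ast \cdot D^\ast = D^\ast$); (iii)~symmetrically when $A_{j-1}$ is a star-atom containing $A_j$. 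Each rule strictly decreases the number of atoms, so the rewriting terminates on a reduced product denoting the same ideal.

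For \emph{uniqueness}, let $P = A_1 \cdots A_k$ and $P' = A'_1 \cdots A'_{k'}$ be two reduced products denoting the same ideal $I$ of $X^\ast$. I proceed by induction on $k + k'$, the crux being to show $A_1 = A'_1$: once this is established, the suffixes $A_2 \cdots A_k$ and $A'_2 \cdots A'_{k'}$ are themselves reduced and denote the same ideal, so the induction hypothesis concludes.

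To prove $A_1 = A'_1$, I case on whether $A_1$ is a star-atom $D^\ast$ or an ideal-atom $J \cup \{\varepsilon\}$, and argue that both the type and the parameter of $A_1$ are intrinsic to $I$. If $A_1 = D^\ast$, then $I$ contains sequences $d_1 \cdots d_n$ for arbitrarily large $n$ and any $d_i \in D$; reducedness condition~(2) applied at position~$1$ forbids these arbitrarily long prefixes from being entirely absorbed into $A_2 \cdots A_k$, which forces $A'_1$ also to be a star-atom $D'^{\,\ast}$, and a symmetric argument yields $D = D'$. If $A_1 = J \cup \{\varepsilon\}$, the "pumping" available at the start is bounded, so $A'_1$ must likewise be an ideal-atom $J' \cup \{\varepsilon\}$, and $J = J'$ is recovered by examining the downward-closure of first components of words in $I$ that do not admit long repeated prefixes.

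The main obstacle is precisely this characterisation of $A_1$ from $I$ alone. Reducedness is indispensable: without condition~(2), the ideal $\{a,b\}^\ast$ could be written both as the bare star-atom $\{a,b\}^\ast$ and as $(\{a\} \cup \{\varepsilon\}) \cdot \{a,b\}^\ast$, which shows that the boundary of the "first block" can be pinned down only by invariants of $I$ that the reducedness conditions are designed to make unambiguous.
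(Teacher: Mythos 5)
Your existence argument is sound and is essentially what the paper has in mind when it remarks that the reducedness conditions ``can effectively be enforced'': starting from the (possibly non-reduced) product supplied by \autoref{th-products}, each of your three rewrites preserves the denoted language and strictly decreases the number of atoms, so the process terminates on a reduced product for the same ideal. (One small slip: for rule~(ii) the inclusion $A_{j+1}\subseteq A_j\cdot A_{j+1}$ uses $\varepsilon\in A_j$ --- which holds because every atom contains $\varepsilon$ --- not $\varepsilon\in A_{j+1}$.)

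The uniqueness half, however, has a genuine gap, and you flag it yourself: the claim $A_1=A'_1$ is precisely the content of canonicity, and it is never actually proved. The remarks about arbitrarily long prefixes ``not being entirely absorbed'' into the tail, and about ``examining the downward-closure of first components of words that do not admit long repeated prefixes,'' are headings for an argument rather than an argument; you never exhibit an invariant of the language $I$ alone from which the type of the first atom ($D^\ast$ versus $J\cup\{\varepsilon\}$) and its parameter can be recovered. Moreover, even granting $A_1=A'_1$, your induction step silently uses a left-cancellation property: from $A_1\cdot(A_2\cdots A_k)=A_1\cdot(A'_2\cdots A'_{k'})$ you conclude that the two tails coincide as languages. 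Left cancellation fails for arbitrary languages (e.g.\ $a^\ast\cdot a^\ast=a^\ast\cdot\{\varepsilon\}$), so it must itself be derived from reducedness, which you do not do. The paper states this Fact without proof, importing it from \citep{kabil92,GLKKS15}; your existence half is a correct reconstruction, but the canonicity assertion remains unestablished in your write-up.
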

\fi

\subsubsection{Effectiveness}
\label{sub-effectiveness}
In order to be usable in algorithms, wqo ideals need to be effectively
represented.  Following \citet{GLKKS15}, one can check that all the
elementary wqos $(X,{\leq})$ enjoy a number of effectiveness
properties.  Besides some basic desiderata, among which being able to
decide whether (the representation of) two elements of $X$ coincide or
are related through $\leq$, and similarly for $\ideal{X}{\leq}$ and
the inclusion ordering, our elementary wqos are in particular equipped with
(see~\citep{GLKKS15} for details):
\begin{description}
\item[II]\label{II} an algorithm taking any pair of (representations
  of) ideals $I$ and $J$ in $\ideal{X}{\leq}$ and returning (a
  representation of) an ideal decomposition of $I\cap J$, and
\item[CU']\label{CU} an algorithm taking any (representation of an)
  element $x$ in $X$ and returning (a representation of) an ideal
  decomposition of $X\setminus{\uparrow}x$.
\end{description}
By combining those two algorithms, we get:
\begin{corollary}[\citep{GLKKS15}]\label{co-ideal}
  Let $(X,{\leq})$ be an elementary wqo.  There is an algorithm
  taking any (representation of an) ideal $I$ in $\ideal{X}{\leq}$ and
  any (representation of an) element $x$ in $X$ and returning
  (a representation of) an ideal decomposition of~$I\setminus{\uparrow}x$.
\end{corollary}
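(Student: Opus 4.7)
The plan is to observe that $I \setminus {\uparrow}x = I \cap (X \setminus {\uparrow}x)$ and combine the two advertised algorithms \textbf{CU'} and \textbf{II} in a direct pipeline.

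First I would invoke \textbf{CU'} on the representation of $x$ to compute a finite ideal decomposition $X \setminus {\uparrow}x = J_1 \cup \cdots \cup J_k$, where each $J_i \in \ideal{X}{\leq}$ is given by its representation. Then, using distributivity of intersection over union, I would rewrite
\begin{equation*}
I \setminus {\uparrow}x \;=\; I \cap (X \setminus {\uparrow}x) \;=\; \bigcup_{i=1}^{k} (I \cap J_i)\;.
\end{equation*}
For each $i$, algorithm \textbf{II} applied to $I$ and $J_i$ produces a finite ideal decomposition of $I \cap J_i$, say $I \cap J_i = \bigcup_{j} K_{i,j}$. Concatenating these lists yields a finite family of ideals whose union is exactly $I \setminus {\uparrow}x$.

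If one insists on the canonical decomposition of \autoref{fc-ideals} into pairwise incomparable ideals, I would post-process the list by pruning any $K_{i,j}$ that is included in some other $K_{i',j'}$; this is effective because our elementary wqos come equipped with a decision procedure for inclusion of ideal representations. The resulting collection is still finite, its union is unchanged, and by \autoref{fc-ideals} coincides with the canonical decomposition of $I \setminus {\uparrow}x$.

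There is no genuine obstacle here: all components---\textbf{CU'}, \textbf{II}, and decidable ideal inclusion---are postulated as available on elementary wqos, and termination is immediate since \textbf{CU'} and \textbf{II} return finite lists. The only point worth checking carefully is correctness of the distributive step, which is a general fact about set-theoretic intersection and union and requires no wqo-specific argument.
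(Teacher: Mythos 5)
Your argument is exactly the one the paper intends: the corollary is stated immediately after the sentence ``By combining those two algorithms, we get,'' and the combination is precisely your pipeline $I\setminus{\uparrow}x = I\cap(X\setminus{\uparrow}x)$, with \textbf{CU'} decomposing the complement and \textbf{II} distributing the intersection over the resulting union. The optional pruning step via decidable ideal inclusion matches the paper's remarks on reduced/canonical representations, so there is nothing to add.
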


\section{A WQO on Runs}
\label{sec:wsr}
The key idea in our explanation of the KLMST decomposition is to see
it as building the ideals of the downward-closure of $\runs$ for an
appropriate well quasi ordering defined by \citet{jancar90}
and \citet{leroux11}.  The reachability problem can then be restated
as asking whether ${\downarrow}\runs$ is non empty, i.e.\ whether the
ideal decomposition of ${\downarrow}\runs$ is empty or not.

\subsection{Ordering Preruns and Runs}\label{sub-prerun-order}
There is a natural ordering $\unlhd$ of preruns.  The product ordering
over $\confs\times\vec A\times\confs$ can be lifted to an embedding
between sequences of tuples in $(\confs\times\vec
A\times\confs)^\ast$.  Finally, we denote by $\unlhd$ the natural
ordering over $\mts$ (see \autoref{fig:unlhd} for an illustration in
the particular case of runs).
For a set of runs $\Omega$,
we write ${\downarrow}\Omega$ for its downward-closure \emph{inside}
$\preruns$, i.e.
\begin{equation}
  {\downarrow}\Omega\eqdef\{\rho'\in\preruns\mid\exists\rho\in\Omega.\rho'\unlhd\rho\}\;.
\end{equation}

\begin{figure}[tbp]
  \centering
\begin{tikzpicture}[auto,node distance=0.7cm]
\newcommand{\transa}{\begin{tikzpicture}[scale=0.2,-]
      \draw[step=1,lightgray] (-0.5,-0.5) grid (1.5,1.5);
      \def\va{[thick,->,blue] -- ++(1,1)}
      \def\vb{[thick,->,blue] -- ++(-1,-2)}
      \draw[-] (0,0) \va;
    \end{tikzpicture}}
  \newcommand{\transb}{\begin{tikzpicture}[scale=0.2,-]
      \draw[step=1,lightgray] (-0.5,-0.5) grid (1.5,2.5);
      \def\va{[thick,->,blue] -- ++(1,1)}
      \def\vb{[thick,->,blue] -- ++(-1,-2)}
      \draw[-] (1,2) \vb;
    \end{tikzpicture}}  
    \node (a) {$(3,3)$};
    \node (b) [right=of a] {$(2,1)$};
    \node (c) [right=of b] {$(3,2)$};
    \node (d) [right=of c] {$(2,0)$};
    \node (e) [right=of d] {$(3,1)$};

    \node (A) [below=of b,xshift=-1cm] {$(1,0)$};
    \node (B) [below=of c,xshift=1cm] {$(2,1)$};
    
    \draw[->,thick] (A) to node {\transa} (B);
    \draw[->,thick] (a) to node {\transb} (b);
    \draw[->,thick] (b) to node {\transa} (c);
    \draw[->,thick] (c) to node {\transb} (d);
    \draw[->,thick] (d) to node {\transa} (e);
    \begin{scope}[dotted]
      \draw  (A) to 
      node [sloped,above,font=\scriptsize] {$\geq$}
      (a);
      \draw  (A) to 
    node [sloped,above,font=\scriptsize] {$\leq$} 
    (b);
    \draw  (B) to 
    node [sloped,above,font=\scriptsize] {$\geq$} 
    (c);
    \draw  (B) to 
    node [sloped,above,font=\scriptsize] {$\leq$} 
    (e);
    \draw (B) ++(-1.9,.65) --++(0,.5) ++(.2,-.3) node[font=\scriptsize]{$=$};
  \end{scope}
  \end{tikzpicture}
  \caption{A run embedding for $\unlhd$.\label{fig:unlhd}}
\ifomitproofs\vspace*{-.5em}\fi
\end{figure}
\ifomitproofs\vspace*{-.5em}\fi
\subsubsection{Transformer Relations}
Embeddings between runs can also be understood in terms of
\emph{transformer relations} (aka production relations) \`a la
\citet{Hauschildt:90} and \citet{leroux11,leroux13}: the relation
${\transformer{\vec{c}}}$ with \emph{capacity} $\vec c$ in $\confs$ is
the relation included in $\confs\times\confs$ defined by
$\vec{u}\transformer{\vec{c}}\vec{v}$ if there exists a run from
$\vec{u}+\vec{c}$ to $\vec{v}+\vec{c}$.

\subsubsection{Run Amalgamation}
\Citet{leroux11} observed that, thanks to monotonicity, each
$\transformer{\vec{c}}$ is a \emph{periodic} relation (see \autoref{sec-vas}):
$\vec 0\transformer{\vec{c}}\vec 0$, as witnessed by the empty run, and
if $\vec u\transformer{\vec c}\vec v$ and $\vec u'\!\transformer{\vec
c}\!\vec v'$, as witnessed by $\vec u+\vec
c\xrightarrow{\sigma}\vec v+\vec c$ and $\vec u'+\vec
c\xrightarrow{\sigma'}\vec v'+\vec c$ respectively, then $\vec u+\vec
u'\transformer{\vec c}\vec v+\vec v'$ as witnessed by $\vec u+\vec
u'+\vec c\xrightarrow{\sigma}\vec v+\vec u'+\vec
c\xrightarrow{\sigma'}\vec v+\vec v'+\vec c$.  Translated in terms of
embeddings, the same reasoning shows:%
\begin{restatable}{proposition}{amalgamation}\label{prop-amalgamation}
  Let $\rho_0$, $\rho_1$, and $\rho_2$ be runs with
  $\rho_0\unlhd\rho_1,\rho_2$.  Then there exists a run $\rho_3$ such
  that $\rho_1,\rho_2\unlhd\rho_3$.
\end{restatable}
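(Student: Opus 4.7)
The plan is to mimic the transformer-relation argument sketched just before the statement, but upgraded to track the full embedding structure rather than only source and target configurations. Write $\rho_0=\vec{c}^{(0)}\xrightarrow{\vec a_1}\vec{c}^{(1)}\cdots\xrightarrow{\vec a_k}\vec{c}^{(k)}$. Because $\rho_0\unlhd\rho_i$ for $i\in\{1,2\}$, the embedding singles out $k$ transitions inside $\rho_i$ labelled by $\vec a_1,\dots,\vec a_k$ in the right order; removing them partitions $\rho_i$ into $k+1$ consecutive sub-runs $\beta_i^{(0)},\beta_i^{(1)},\dots,\beta_i^{(k)}$. Let $\vec p_i^{(j)}\in\confs$ denote the excess of the source of the $j$-th distinguished transition of $\rho_i$ over $\vec{c}^{(j-1)}$ (so the target of that distinguished transition sits at $\vec{c}^{(j)}+\vec p_i^{(j)}$), and set the two boundary offsets $\vec p_i^{(\mathrm{in})}\eqdef\src{\rho_i}-\src{\rho_0}$ and $\vec p_i^{(\mathrm{fi})}\eqdef\tgt{\rho_i}-\tgt{\rho_0}$; all of these vectors lie in $\confs$ by the embedding requirement.

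I would then build $\rho_3$ by interleaving, around each distinguished transition, shifted copies of the matching segments of $\rho_1$ and $\rho_2$. Concretely, start from $\src{\rho_3}\eqdef\src{\rho_0}+\vec p_1^{(\mathrm{in})}+\vec p_2^{(\mathrm{in})}$, then play $\beta_1^{(0)}$ with every configuration shifted by $+\vec p_2^{(\mathrm{in})}$, followed by $\beta_2^{(0)}$ shifted by $+\vec p_1^{(1)}$; fire $\vec a_1$; play $\beta_1^{(1)}$ shifted by $+\vec p_2^{(1)}$ and then $\beta_2^{(1)}$ shifted by $+\vec p_1^{(2)}$; fire $\vec a_2$; and so on, finishing after $\vec a_k$ by $\beta_1^{(k)}$ shifted by $+\vec p_2^{(k)}$ and $\beta_2^{(k)}$ shifted by $+\vec p_1^{(\mathrm{fi})}$. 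Since each shift vector lies in $\confs$, monotonicity of VAS actions guarantees that every shifted segment is still a genuine run (no coordinate turns negative), and a routine telescoping check shows that the target of each segment equals the source of the next, so the concatenation $\rho_3$ is indeed a run, ending at $\tgt{\rho_0}+\vec p_1^{(\mathrm{fi})}+\vec p_2^{(\mathrm{fi})}$.

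Finally, I would check the two embeddings $\rho_1,\rho_2\unlhd\rho_3$ directly from the construction. To witness $\rho_1\unlhd\rho_3$, match every transition of $\rho_1$ to its shifted copy inside $\rho_3$ (the transitions originating from $\rho_2$'s segments are simply skipped over). Componentwise, the matched configurations in $\rho_3$ differ from those in $\rho_1$ by a sum of $\vec p_2^{(\cdot)}$-offsets, all in $\confs$, hence are pointwise $\geq$; the required $\src{\rho_1}\leq\src{\rho_3}$ and $\tgt{\rho_1}\leq\tgt{\rho_3}$ comparisons hold by the same argument applied to the boundary offsets. The case of $\rho_2\unlhd\rho_3$ is symmetric.

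The substantive content is merely monotonicity; the real obstacle is purely notational. One must fix once and for all an alternation convention (say, ``$\rho_1$-extras before $\rho_2$-extras'' at each slot) and then carry the correct running offset across the $2(k{+}1)+k$ pieces so that each piece fits after the previous one and both embeddings remain visible simultaneously. Once the bookkeeping is set up, all verifications reduce to adding nonnegative vectors to existing runs.
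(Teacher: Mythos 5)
Your proposal is correct and follows essentially the same route as the paper: decompose $\rho_1$ and $\rho_2$ around the transitions matched to those of $\rho_0$, then interleave the intermediate segments after shifting each by the appropriate nonnegative offset, exactly as in the paper's explicit construction of $\rho_3$ (your offsets $\vec p_i^{(j)}$ are the paper's $\vec v_j,\vec v'_j$). The only remark is that your index conventions need the obvious reinterpretation in the degenerate case $k=0$; otherwise the bookkeeping and the monotonicity argument are the same.
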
\ifomitproofs\relax\else\begin{proof}
  Let $\rho_0=\vec c_0\xrightarrow{\vec a_1}\vec c_1\cdots\vec
  c_{k-1}\xrightarrow{\vec a_k}\vec c_k$. From $\rho_0\unlhd \rho_1$, 
  we can write $\rho_1$
  as $\rho_1=\vec v_0+\vec c_0\xrightarrow{\sigma_0}\vec
  v_1+\vec c_0\xrightarrow{\vec a_1}\vec v_1+\vec c_1\cdots\vec
  v_k+\vec c_{k-1}\xrightarrow{\vec a_k}\vec v_k+\vec
  c_k\xrightarrow{\sigma_k}\vec v_{k+1}+\vec c_k$ 
  where $\vec v_0,\ldots,\vec{v}_{k+1}$ is a sequence of vectors in
  $\setN^d$. Symetrically, from $\rho_0\unlhd\rho_2$, we can write 
  $\rho_2=\vec
  v'_0+\vec c_0\xrightarrow{\sigma'_0}\vec v'_1+\vec
  c_0\xrightarrow{\vec a_1}\vec v'_1+\vec c_1\cdots\vec v'_k+\vec
  c_{k-1}\xrightarrow{\vec a_k}\vec v'_k+\vec
  c_k\xrightarrow{\sigma'_k}\vec v'_{k+1}+\vec c_k$ where $\vec
  v'_0,\ldots,\vec{v}'_{k+1}$ is a sequence of vectors in
  $\setN^d$.
  
  Define
  $\rho_3=\vec v_0+\vec v'_0+\vec c_0\xrightarrow{\sigma_0}\vec
  v_1+\vec v'_0+\vec c_0\xrightarrow{\sigma'_0}\vec v_1+\vec v'_1+\vec
  c_0\xrightarrow{\vec a_1}\vec v_1+\vec v'_1+\vec c_1\cdots\vec
  v_k+\vec v'_k+\vec c_{k-1}\xrightarrow{\vec a_k}\vec v_k+\vec
  v'_k+\vec c_k\xrightarrow{\sigma_k}\vec v_{k+1}+\vec v'_k+\vec
  c_k\xrightarrow{\sigma'_k}\vec v_{k+1}+\vec v'_{k+1}+\vec c_k$.
\end{proof}
\noindent
Note that the proof of \autoref{prop-amalgamation} further shows that
when $\rho_0,\rho_1,\rho_2\in\runs$, then $\rho_3\in\runs$ as well.\fi

\ifomitproofs\relax\else\todo{amalgamation?}\fi

\subsubsection{Prerun Ideals}\label{sub-prerun-ideals}
By \autoref{th-products} and
\autoref{eq-cart-ideal}, the ideals of $\preruns$
are of the form ${\downarrow}\vec u\times P\times{\downarrow}\vec v$
where $\vec u$ and $\vec v$ are in $\confs_\omega$ and $P$ is a
product over $\confs\times\vec A\times\confs$, i.e.\ can be represented
as a regular expression over the alphabet
$\confs_\omega\times\vec A\times\confs_\omega$.

\subsection{Abstraction Refinement Procedure}
\label{sec-refinement}
Because runs are particular preruns, we can look at the
downward-closure of $\runs$ \emph{inside} $\preruns$.
By \autoref{fc-ideals}, this set has a finite decomposition using
prerun ideals from $\ideal{\preruns}{\unlhd}$.  This suggests an
abstraction refinement procedure to compute the ideal decomposition of
${\downarrow}\cmts$.

\subsubsection{A Procedure for Reachability}
An idea that looks promising is to build a descending sequence of
downward-closed sets $D_0\supsetneq D_1\supsetneq\cdots$ inside $\mts$
while maintaining ${\downarrow}\cmts\subseteq D_n$ at all steps, until
we find the ideal decomposition of ${\downarrow}\cmts$.  By
\autoref{fc-ideals} we can work with finite sets of incomparable
ideals to represent the $D_n$'s.

We start therefore with\ifomitproofs\vspace*{-.3em}\fi
\begin{align}
  D_0&\eqdef\preruns\;.
  \shortintertext{%
  Assume we are provided with an oracle to decide whether an ideal
  $I$ from $D_n$ is included in ${\downarrow}\cmts$ and extract a
  counter-example otherwise.  If $I\subseteq{\downarrow}\cmts$ for all
  the (finitely many) maximal ideals $I$ in $D_n$ we stop; otherwise we find
  a maximal ideal $I$ from the decomposition of $D_n$ s.t.}
  \label{eq-oracle}
  \exists w&\in I\setminus{\downarrow}\cmts
  \shortintertext{and thanks to \autoref{co-ideal} we construct an ideal
  decomposition of}
  D'&\eqdef I\setminus{\uparrow}w
  \shortintertext{and we can refine $D_n$ and construct the downward-closed
  set for the next iteration---which involves removing redundant
  ideals---by}
  D_{n+1}&\eqdef D'\cup (D_n\setminus I)\;.
\end{align}
The procedure terminates by \autoref{lem:downwardstat} but depends on
an oracle to perform~\eqref{eq-oracle}.

\subsubsection{Adherence Membership}
Turning the previous abstraction refinement procedure into an
algorithm hinges on the effective checking of
$I\subseteq{\downarrow}\cmts$ for a maximal prerun ideal $I$ of $D_n$.
\ifomitproofs By \else\par
Note that, in general, deciding whether $I\subseteq{\downarrow}\cmts$
for a prerun ideal $I$ is at least as hard as VAS Reachability:
observe indeed that ${\downarrow}(\vec 0,\varepsilon,\vec
0)\subseteq{\downarrow}\runs$ if and only if $\runs\neq\emptyset$.  We
know this containment check to be decidable thanks to
the \nameref{thdec}, but have at the moment no clue how to prove
decidability without first assuming that there is an algorithm
computing the ideal decomposition of $\runs$.

We are therefore going to consider an adherence membership test instead.
Indeed, by \fi
\autoref{lem:adherence}, and because $\runs\subseteq D_n$ for all
$n$, we know that this containment check is equivalent to testing
whether $I$ is in the adherence of $\runs$.%

\begin{problem}[Adherence Membership of Prerun Ideals]
\hfill\begin{description}
\item[input\ifomitproofs:\fi]\IEEEhspace{1em} A $d$-dimensional VAS $\vec A$, two
configurations $\vec x$ and $\vec y$ in $\confs$, and an ideal $I$ in
$\ideal{\mts}{\unlhd}$.
\item[question\ifomitproofs:\fi]\IEEEhspace{1em} Is $I$ in the
adherence of $\runs$?
\end{description}\end{problem}

As we show in \appref{app-oracle}, this problem in its full generality
is undecidable:
\begin{restatable}{theorem}{oracle}\label{th-oracle}
  The adherence membership of prerun ideals is already undecidable for
  ideals of the form ${\downarrow}\vec x\times
  D^\ast\times{\downarrow}\vec x$ for $D$ a downward-closed subset of
  $\trans$ and $\vec x$ in $\confs$.
\end{restatable}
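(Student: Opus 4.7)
The plan is a reduction from the undecidable halting problem for deterministic two-counter Minsky machines. Given such a machine $M$, I first apply a standard normalization that yields an equivalent machine $M'$ whose halting computation, if it exists, traverses every instruction of $M'$ at least once. This can be achieved by prepending a \emph{calibration tour} that cycles through each instruction in a canonical schedule (e.g., using paired increments and decrements so counters return to zero), preserving both halting behaviour and zero-test semantics.

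Construct a VAS $\vec A$ implementing the standard weak Minsky-to-Petri simulation of $M'$: one place $p_q$ per state $q$, two counter places $c_1, c_2$, and one VAS action per instruction that performs the state switch and counter update but without enforcing any zero-test. Add a reset action with effect $p_{q_0}-p_{q_f}$, firable at the halting configuration. Let $\vec x\in\confs$ place a single token on $p_{q_0}$ and zero elsewhere. Define the downward-closed set $D\subseteq\trans$ as generated by a finite family of partial transitions: for each Minsky instruction, a generator with value $0$ on each zero-tested counter place, value $1$ on the source state's place, and $\omega$ elsewhere; and one for the reset (with $p_{q_f}=1$, $c_1=c_2=0$, and $\omega$ elsewhere). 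Thus an actual transition of $\vec A$ belongs to $D$ exactly when its firing respects the zero-tests of the underlying Minsky instruction, so that $D$-restricted runs of $\vec A$ realize precisely the faithful (zero-test-respecting) computations of $M'$, possibly followed by resets.

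The core equivalence I establish using \autoref{prop-amalgamation} is that $I={\downarrow}\vec x\times D^\ast\times{\downarrow}\vec x$ lies in the adherence of $\runs[\vec x,\vec x]$ iff for every partial transition generator $t$ of $D$ there exists a $D$-restricted loop at $\vec x$ whose transition sequence contains an actual transition dominating $t$: individual such loops may be concatenated into an arbitrary directed family whose downward-closure is $I$, while the constraint ${\downarrow}\Delta\subseteq I$ forces every run of $\Delta$ to use only $D$-transitions. If $M'$ halts, the halting computation followed by one reset forms a loop $\rho\in\runs[\vec x,\vec x]$ using only $D$-transitions, and the calibration prefix guarantees that $\rho$ dominates every generator of $D$; hence $I$ is in the adherence. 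If $M'$ does not halt, then no faithful computation ever reaches $p_{q_f}$, so the reset is never enabled in $D$-restricted computations. Since the reset is the only way to restore the token to $p_{q_0}$, the only $D$-restricted loop at $\vec x$ is the empty one, which cannot dominate the reset generator, so $I$ is not in the adherence.

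The main obstacle is carrying out the normalization from $M$ to $M'$ cleanly while preserving Minsky semantics (so that the halting problem is genuinely undecidable for $M'$) and while ensuring that the calibration prefix does not introduce spurious $D$-restricted loops (so that the backward direction remains valid). This is a routine but delicate construction; once in place, the matching of $D$-restricted runs with faithful Minsky computations, and the equivalence between adherence and the existence of a halting loop, follow from the structural facts about partial transitions (Lemma on transition ideals) and the amalgamation lemma.
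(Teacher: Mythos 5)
There is a genuine gap in the forward direction of your reduction, and it is not just a matter of making the calibration tour precise: reducing from the \emph{halting} problem cannot work with the ideal you construct. Your generators carry $\omega$ on the non-tested counter places, so the ideal $I={\downarrow}\vec x\times D^\ast\times{\downarrow}\vec x$ contains preruns $(\vec x,t,\vec x)$ where $t$ is a transition of $D$ whose source has an \emph{arbitrarily large} value on, say, $c_1$. Adherence requires a directed family $\Delta\subseteq\runs[\vec x,\vec x]$ with ${\downarrow}\Delta=I$, and since $\unlhd$ compares the intermediate configurations componentwise, dominating such a prerun forces some run of $\Delta$ to actually visit the corresponding state with $c_1$ at least that large. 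But a halting deterministic Minsky computation is finite and visits only boundedly many configurations, and iterating it via your reset action reproduces the same bounded configurations. Hence even when $M'$ halts, ${\downarrow}\Delta$ misses the large-valued preruns of $I$, so $I$ is \emph{not} in the adherence and your claimed equivalence fails. (Your stated criterion already hints at the problem: no ``actual transition dominating $t$'' exists when $t$ has $\omega$-entries; what is needed is to dominate every element of ${\downarrow}t$, an unbounded family.)

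This mismatch is exactly why the paper reduces from an unboundedness-flavoured problem rather than from halting: it uses the undecidable minimality of a post-fixpoint $\vec X$ for a \emph{lossy} Minsky machine, takes $D_{\vec X}$ to consist of the transitions whose endpoints lie in the downward-closed set $\vec X$ (plus a restart transition $(\vec 0,\vec x_\init,\vec x_\init)$), and shows that $I_{\vec X}$ is adherent iff every configuration of $\vec X$ is genuinely reachable. There the $\omega$-coordinates of $D_{\vec X}$ correspond to directions in which the reachability set is claimed to be unbounded, so a directed family realizing $I_{\vec X}$ exists precisely when the machine really does reach arbitrarily large configurations; lossiness is what makes $\vec X$ and hence $D_{\vec X}$ downward-closed, and what lets each excursion return to $\vec 0$ so that the pieces can be concatenated into a single loop at $\vec x_\init$. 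Your zero-test-via-$D$ encoding is the same key idea as the paper's, but it must be paired with a source problem whose positive instances supply runs with unbounded counter values in every $\omega$-coordinate of $D$, not with halting.
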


All is not lost however: we ask with the adherence membership problem
for more than really needed.  In the decomposition algorithm, $I$
presents some further structure that can be exploited towards an
algorithm.  This motivates a deeper investigation of the properties of
run ideals, which will be the object of the next sections.

\section{Local Adherent Ideals}
\label{sec-transformer}
We start our investigation of the ideals of ${\downarrow}\runs$ by
looking at rather restricted classes of runs.  The treatment of this
restricted case will turn out to contain most of the technical
challenges of the next section on general run ideals, where we will
assemble those local ideals into global ones.

More precisely, we focus on sets $\Omega_{\gamma}$ of
runs of the form
\begin{equation}\label{eq-periodic}
  \vec c+ \vec u\xrightarrow{\sigma}\vec c+\vec v
\end{equation}
where $\vec{c}$ is a configuration in $\setN^d$, $\sigma$ is a sequence in
$\vec{A}^*$, and
$(\vec{u},\vec{v})$ is a pair of configurations in a periodic set (see \autoref{sec-vas})
$\vec P$ included in the transformer relation $\transformer{\vec{c}}$.
We write $\gamma$ for the pair $(\vec{c},\vec P)$.  As we are going to
see in \autoref{lem:TE}, ${\downarrow}\Omega_\gamma$ is an
ideal of a particular form, for which an effective representation can
be found, see \autoref{sub-witness}.

\subsection{Periodic Transformer Subrelations}
Formally, let $\gamma$ denote a pair $(\vec c,\vec P)$ where $\vec c$ is in
$\confs$ and $\vec P\subseteq{\transformer{\vec c}}$ is periodic.
This is a familiar object, and we will reuse several statements from
the literature.
Following the notations from \citep{leroux13}, let
\begin{itemize}
\item $\Omega_\gamma$ denote the set of runs of the
  form~\eqref{eq-periodic},
\item $\vec Q_\gamma\subseteq\confs$ denote the set of configurations
  $\vec q$ that appear along some run in $\Omega_\gamma$---thus in
  particular $\vec c+\vec u$ and $\vec c+\vec v$ belong to $\vec
  Q_\gamma$ whenever $(\vec u,\vec v)$ are in $\vec P$.%
\end{itemize}

\begin{figure}[tbp]
\begin{center}
\begin{tikzpicture} [bend angle=45,>=stealth',shorten >=1pt,
  state/.style={circle,draw=black!50,fill=black!20,thick},
  bstate/.style={rectangle,draw=black!50,fill=black!20,thick},
  xscale=0.7,yscale=0.7]
  \node [bstate] at ( 0,0)  (c) {$\vec{c}$};
  \node [bstate] at ( -2,-1)  (a) {$\vec{c}+\vec{a}$};
  \node [bstate] at ( 2,-1)  (b) {$\vec{c}+\vec{b}$};
  \node [bstate] at (0,-2)  (cy) {$\vec{c}+\vec{y}$};
  \node  at (0,-3) {$\vdots$};
  \node [bstate] at ( 0,-4) (cny) {$\vec{c}+(n-1)\vec{y}$};
  \node [bstate] at ( -2.3,-5) (any) {$\vec{c}+\vec{a}+(n-1)\vec{y}$};
  \node [bstate] at ( 2.3,-5) (bny) {$\vec{c}+\vec{b}+(n-1)\vec{y}$};
  \node [bstate] at (0,-6) (cyny) {$\vec{c}+n\vec{y}$};
  \draw (c.west) edge[->]  node[above] {$\vec{a}$} (a.north);
  \draw (c.east) edge[->]  node[above] {$\vec{b}$} (b.north);
  \draw (b.south) edge[->]  node[below] {$\vec{a}$} (cy.east);
  \draw (a.south) edge[->]  node[below] {$\vec{b}$} (cy.west);
  \draw (cny.west) edge[->]  node[above] {$\vec{a}$} (any.north);
  \draw (cny.east) edge[->]  node[above] {$\vec{b}$} (bny.north);
  \draw (bny.south) edge[->]  node[below] {$\vec{a}$} (cyny.east);
  \draw (any.south) edge[->]  node[below] {$\vec{b}$} (cyny.west);
\end{tikzpicture}
\end{center}
\caption{The set of runs $\Omega_\gamma$ in
  \autoref{ex:transformer}.\label{fig:setomega}}
\end{figure}
\begin{example}\label{ex:transformer}
  Let us consider the 3-dimensional VAS $\vec{A}=\{\vec{a},\vec{b}\}$
  where $\vec{a}=(1,1,-1)$ and $\vec{b}=(-1,0,1)$, and 
  the pair $\gamma=(\vec{c},\vec P)$ where $\vec{c}=(1,0,1)$ and $\vec
  P=\setN(\vec{0},\vec{y})$ with %
  $\vec{y}=(0,1,0)$.  Note that $\vec P$ is included in
  $\transformer{\vec{c}}$ since there exists a run
  $\vec{c}\xrightarrow{(\vec{a}\vec{b})^n}\vec{c}+n\vec{y}$ for every
  $n$.  We
  have \begin{align*} \Omega_\gamma&=\{\vec{c}\xrightarrow{w_1\cdots
  w_n}\vec{c}+n\vec{y}\mid
  n\in\setN,w_j\in\{\vec{a}\vec{b},\vec{b}\vec{a}\}\}\;,\\ \vec{Q}_\gamma&=(\vec{c}+\vec{a}+\setN\vec{y})\cup(\vec{c}+\setN\vec{y})\cup(\vec{c}+\vec{b}+\setN\vec{y})\;.  \end{align*}
  The set $\Omega_\gamma$ is depicted in \autoref{fig:setomega}.
\end{example}

\subsubsection{Saturated Pairs}
We denote by $F^{\mathrm{in}}_\gamma$ (resp.\
$F^{\mathrm{out}}_\gamma$) the sets of indices $i$ such that
$\vec{u}(i)=0$ (resp.\ $\vec{v}(i)=0$) for every pair
$(\vec{u},\vec{v})\in \vec P$. We say that a pair $(\vec{u},\vec{v})$ in $\vec P$
\emph{saturates} $(F_\gamma^{\mathrm{in}},F_\gamma^{\mathrm{out}})$ if
$\vec{u}(i)=0$ implies $i\in F_\gamma^{\mathrm{in}}$ and
$\vec{v}(i)=0$ implies  $i\in F_\gamma^{\mathrm{out}}$. Since
$\vec{P}$ is periodic, by summing at most $2d$ pairs in $\vec P$, we see that there
exist pairs in $\vec P$ that saturate
$(F_\gamma^{\mathrm{in}},F_\gamma^{\mathrm{out}})$.

By projecting $\vec c$, we obtain two
partial configurations $\vec{s}_\gamma^{\mathrm{in}}$
and~$\vec{s}_\gamma^{\mathrm{out}}$:
\begin{align}
  \vec{s}^{\mathrm{in}}_\gamma&\eqdef\pi_{F^{\mathrm{in}}_\gamma}(\vec
c)\;,& \vec{s}^{\mathrm{out}}_\gamma&\eqdef\pi_{F^{\mathrm{out}}_\gamma}(\vec
c)\;.
\end{align}

\begin{example}[continues=ex:transformer]\label{ex-saturated}
  We have for our example:
  \begin{align*}
  F^{\mathrm{in}}_\gamma&=\{1,2,3\} & F^{\mathrm{out}}_\gamma&=\{1,3\}\;,\\
  \vec{s}_\gamma^{\mathrm{in}}&=(1,0,1)\;,&\vec{s}_\gamma^{\mathrm{out}}&=(1,\omega,1)\;.
  \end{align*}
  Note that $(\vec 0,\vec y)$ saturates
  $(F_\gamma^{\mathrm{in}},F_\gamma^{\mathrm{out}})$.
\end{example}

\subsection{Representation through Marked Witness
  Graphs}\label{sub-witness}

We investigate in this section how to effectively represent
${\downarrow}\Omega_\gamma$. In the sequel, we show that this ideal
can be represented using the set of edges of a strongly connected
graph called a \emph{witness graph} (see \autoref{lem:witness})
enjoying some \emph{pumping} properties with respect to
$\vec{s}_\gamma^{\mathrm{in}}$ and $\vec{s}_\gamma^{\mathrm{out}}$
(see \autoref{lem:pump}).  Such graphs will
turn out to be exactly the ones employed by \citet{lambert92} in his
variant of the KLMST decomposition (see also~\citep{leroux10}).

\subsubsection{Marked Witness Graphs}
A \emph{witness graph} is a strongly connected directed graph 
$G=(\vec{S},E,\vec{s})$ where $\vec{S}$ is a non-empty finite set of
partial configurations in $\setN^F$ for some
$F\subseteq\{1,\ldots,d\}$, $E\subseteq
\vec{S}\times\vec{A}\times\vec{S}$ is a finite set of partially
defined transitions, and $\vec{s}$ is a distinguished state in $\vec{S}$.

 A \emph{marked witness graph} is a triple
$M=(\vec{s}^{\mathrm{in}},G,\vec{s}^{\mathrm{out}})$ where $G$ is a
witness graph, and $\vec{s}^{\mathrm{in}}$ and
$\vec{s}^{\mathrm{out}}$ are partial configurations in
$\setN^{F^{\mathrm{in}}}$ and $\setN^{F^{\mathrm{out}}}$ for some
$F^{\mathrm{in}},F^{\mathrm{out}}\supseteq F$ such that
$\pi_{F}(\vec{s}^{\mathrm{in}})=\pi_{F}(\vec{s}^{\mathrm{out}})=\vec
s$.  We associate with $M$ the set $\Omega_M$ of runs $\rho$ of the form
$\vec{x}\xrightarrow{\sigma}\vec{y}$ where $\sigma$ is the label of a
cycle on $\vec{s}$ in $G$, and such that
$\vec{s}^{\mathrm{in}}=\pi_{F^{\mathrm{in}}}(\vec{x})$ and
$\vec{s}^{\mathrm{out}}=\pi_{F^{\mathrm{out}}}(\vec{y})$.

\subsubsection{Projected Graphs}
Let $F_\gamma\subseteq\{1,\dots,d\}$ denote the set of indices $i$
such that $\{\vec q(i)\mid\vec q\in\vec Q_\gamma\}$ is finite, i.e.\
the indices where $\vec Q_\gamma$ remains bounded. Note that this entails
$F_\gamma\subseteq F^{\mathrm{in}}_\gamma$ and $F_\gamma\subseteq
F^{\mathrm{out}}_\gamma$. We denote by $\pi_\gamma$ the projection
function $\pi_{F_\gamma}$.

Observe that the
projection $\vec S_\gamma\eqdef\pi_\gamma(\vec Q_\gamma)$ of $\vec
Q_\gamma$ is finite, and so is $E_\gamma$ the set of partial transitions
$(\pi_\gamma(\vec q),\vec a,\pi_\gamma(\vec q'))$ where $(\vec q,\vec
a,\vec q')$ appears in some run in $\Omega_\gamma$. 
We distinguish $\vec{s}_\gamma\eqdef\pi_\gamma(\vec c)$ as a particular
state in $\vec S_\gamma$.  We denote by
$G_\gamma\eqdef(\vec{S}_\gamma,E_\gamma,\vec{s}_\gamma)$ the finite labelled directed graph
defined by projecting the runs in~$\Omega_\gamma$, and
$M_\gamma\eqdef(\vec{s}_\gamma^{\mathrm{in}},G_\gamma,\vec{s}_\gamma^{\mathrm{out}})$
the corresponding marked graph with input
$\vec{s}_\gamma^{\mathrm{in}}$ and output $\vec{s}_\gamma^{\mathrm{out}}$.

\begin{figure}
  \begin{center}
    \begin{tikzpicture} [bend angle=45, state/.style={circle,draw=black!50,fill=black!20,thick},scale=0.7,>=stealth',shorten >=1pt]
      \node at (-1.2,2) (in) {$(1,0,1)$};
      \node at (1.2,2) (out) {$(1,\omega,1)$};
      \node at ( -4,0) [state] (a) {$(2,\omega,0) $};
      \node at ( 0,0) [state] (b) {$(1,\omega,1)$};
      \node at ( 4,0) [state] (c) {$(0,\omega,2)$};
      \draw[->, bend right] (a) edge  node[above] {$\vec{b}$} (b);
      \draw[->, bend left] (b) edge  node[below] {$\vec{b}$} (c);
      \draw[->, bend left] (c) edge  node[above] {$\vec{a}$} (b);
      \draw[->, bend right] (b) edge  node[below] {$\vec{a}$} (a);
      \draw[->] (in) edge (b) (b) edge (out); 
\end{tikzpicture}
\end{center}
\caption{The graph $G_\gamma$ with its input $\vec
    s_\gamma^\mathrm{in}$ and output $\vec
    s_\gamma^\mathrm{out}$ for
  \autoref{ex-Ggamma}.\label{fig:Ggamma}}
\end{figure}
\begin{example}[continues=ex:transformer]\label{ex-Ggamma}
  Projecting $\vec Q_\gamma$ on $F_\gamma=\{1,3\}$ yields
  $\pi_\gamma(\vec{c}+\vec{a}+n\vec{y})=(2,\omega,0)$,
  $\pi_\gamma(\vec{c}+n\vec{y})=(1,\omega,1)$, and
  $\pi_\gamma(\vec{c}+\vec{b}+n\vec{y})=(0,\omega,2)$:%
  \begin{align*}
    \vec{s}_\gamma&=(1,\omega,1)\;,&%
    \vec{S}_\gamma&=\{(2,\omega,0),(1,\omega,1),(0,\omega,2)\}\;.
  \end{align*}
  The graph $G_\gamma$ is depicted on \autoref{fig:Ggamma}.
\end{example}

We associate to a prerun $\rho=(\vec{x},t_1\cdots t_k,\vec{y})$ and a
set $F\subseteq \{1,\ldots,d\}$, the partial prerun:
$$\pi_F(\rho)\eqdef(\pi_F(\vec{x}),\pi_F(t_1)\cdots\pi_F(t_k),\pi_F(\vec{y}))$$

If $\rho$ is a run in $\Omega_\gamma$, then $\pi_\gamma(\rho)$ is a
path inside $G_\gamma$, and by
\citep[\corollaryautorefname~VIII.5]{leroux13}, $\pi_\gamma(\vec
x)=\pi_\gamma(\vec y)=\vec s_\gamma$, which means that this path is
actually a cycle in $G_\gamma$.  This in turn shows that $G_\gamma$ is
strongly connected.  This proves:
\begin{lemma}\label{lem:witness}
  The marked graph $M_\gamma$ is a marked witness graph such that $\Omega_\gamma\subseteq \Omega_{M_\gamma}$.
\end{lemma}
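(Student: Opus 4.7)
My plan is to verify both parts of the statement directly from the definitions of $M_\gamma$, $G_\gamma$, $\vec{s}_\gamma^{\mathrm{in}}$, $\vec{s}_\gamma^{\mathrm{out}}$ and from the cited identity $\pi_\gamma(\vec x)=\pi_\gamma(\vec y)=\vec s_\gamma$ for every run $\vec x\xrightarrow{\sigma}\vec y$ in $\Omega_\gamma$ (which is what the paper borrows from \citet[\corollaryautorefname~VIII.5]{leroux13}). The remaining work then decomposes into three small checks that I would carry out in order: (i) $G_\gamma$ is a well-formed witness graph, (ii) $\vec{s}_\gamma^{\mathrm{in}},\vec{s}_\gamma^{\mathrm{out}}$ satisfy the marking constraint, and (iii) every $\rho\in\Omega_\gamma$ is captured by $M_\gamma$.

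For (i), I would argue that $\vec S_\gamma=\pi_\gamma(\vec Q_\gamma)$ is a finite set of partial configurations in $\setN^{F_\gamma}$ because $F_\gamma$ is exactly the set of coordinates on which $\vec Q_\gamma$ is bounded; the set $E_\gamma$ is finite since $\vec A$ is finite and projected sources and targets live in the finite set $\vec S_\gamma$; and $\vec s_\gamma=\pi_\gamma(\vec c)$ lies in $\vec S_\gamma$ as $\vec c\in\vec Q_\gamma$. For strong connectivity, I would pick an arbitrary $\vec s\in\vec S_\gamma$, obtain a run $\rho=\vec c+\vec u\xrightarrow{\sigma}\vec c+\vec v$ in $\Omega_\gamma$ whose $\pi_\gamma$-image visits $\vec s$, and observe that by the cited corollary the endpoints of $\pi_\gamma(\rho)$ both equal $\vec s_\gamma$; hence $\vec s$ lies on a closed walk through $\vec s_\gamma$ in $G_\gamma$, giving the required path in each direction.

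For (ii), I need $\pi_{F_\gamma}(\vec{s}_\gamma^{\mathrm{in}})=\pi_{F_\gamma}(\vec{s}_\gamma^{\mathrm{out}})=\vec s_\gamma$. This is a short computation using $F_\gamma\subseteq F_\gamma^{\mathrm{in}}$ and $F_\gamma\subseteq F_\gamma^{\mathrm{out}}$: applying $\pi_{F_\gamma}$ to $\vec{s}_\gamma^{\mathrm{in}}=\pi_{F_\gamma^{\mathrm{in}}}(\vec c)$ just restricts further and yields $\pi_{F_\gamma}(\vec c)=\vec s_\gamma$, and symmetrically for $\vec{s}_\gamma^{\mathrm{out}}$.

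For (iii), I would take any $\rho=\vec c+\vec u\xrightarrow{\sigma}\vec c+\vec v$ in $\Omega_\gamma$ with $(\vec u,\vec v)\in\vec P$ and check the two defining conditions of $\Omega_{M_\gamma}$. The projected sequence $\pi_\gamma(\rho)$ is, by construction of $E_\gamma$, a path in $G_\gamma$ labelled by $\sigma$; the cited corollary forces both its endpoints to be $\vec s_\gamma$, so $\sigma$ labels a cycle on $\vec s_\gamma$. For the input marking, every $i\in F_\gamma^{\mathrm{in}}$ satisfies $\vec u(i)=0$ by definition of $F_\gamma^{\mathrm{in}}$ and of $\vec P$, so $\pi_{F_\gamma^{\mathrm{in}}}(\vec c+\vec u)=\pi_{F_\gamma^{\mathrm{in}}}(\vec c)=\vec s_\gamma^{\mathrm{in}}$; the output case is symmetric. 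The only nontrivial input to the whole argument is the projection identity from \citep{leroux13}, which I would simply cite; everything else is bookkeeping on the definitions, so I do not expect any genuine obstacle.
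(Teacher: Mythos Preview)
Your proposal is correct and follows essentially the same approach as the paper: the crucial step in both is the appeal to \citep[\corollaryautorefname~VIII.5]{leroux13} to conclude that $\pi_\gamma(\rho)$ is a cycle on $\vec s_\gamma$, which simultaneously yields strong connectivity of $G_\gamma$ and the cycle condition for $\Omega_{M_\gamma}$. Your write-up is simply more explicit than the paper's in checking the remaining definitional conditions (finiteness of $\vec S_\gamma$ and $E_\gamma$, the marking constraint $\pi_{F_\gamma}(\vec s_\gamma^{\mathrm{in}})=\pi_{F_\gamma}(\vec s_\gamma^{\mathrm{out}})=\vec s_\gamma$, and the input/output projection conditions on $\vec c+\vec u$ and $\vec c+\vec v$), all of which the paper either establishes in the surrounding text or leaves to the reader.
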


\subsubsection{Intraproductions}
An \emph{intraproduction} for $\gamma$ is a vector $\vec h$ in
$\confs$ such that $\vec c+\vec h$ belongs to $\vec Q_\gamma$.  We
denote by $\vec H_\gamma$ the set of intraproductions for $\gamma$;
note that it contains in particular $\vec u$ and $\vec v$ if $(\vec
u,\vec v)\in \vec P$.

\Citet[\lemmaautorefname~VIII.3]{leroux13} shows that $\vec H_\gamma$
is periodic and $\vec{Q}_\gamma+\vec{H}_\gamma\subseteq
\vec{Q}_\gamma$. Following the proof of that lemma, denoting by
$T_\gamma$ the set of transitions occurring along
runs of $\Omega_\gamma$,
we deduce that if $t=(\vec{p},\vec{a},\vec{q})$ is in $T_\gamma$,
and $\vec{h}$ in
$\vec{H}_\gamma$ is an intraproduction, then the transition
$t+\vec{h}\eqdef (\vec{p}+\vec{h},\vec{a},\vec{q}+\vec{h})$ also
occurs in some run of $\Omega_\gamma$, i.e. $t+\vec{h}\in T_\gamma$.
It follows that, if $\vec h$ in
$\vec H_\gamma$ is such that $\vec h(i)>0$ for some index $i$, then
$i$ cannot belong to $F_\gamma$, since $\vec c+n\vec h$ is in $\vec
Q_\gamma$ for all $n$.  This entails in particular that $\vec h=\vec 0$ if
$F_\gamma=\{1,\dots,d\}$.  

A kind of
converse property sometimes holds: we say that an intraproduction
$\vec h$ in $\vec H_\gamma$ \emph{saturates} $F_\gamma$ if whenever
$\vec h(i)=0$, then $i$ belongs to $F_\gamma$, and therefore
$F_\gamma=\{i\mid\vec
h(i)=0\}$. \Citet[\lemmaautorefname~VIII.3]{leroux13} shows there
exist intraproductions $\vec{h}$ in $\vec{H}_\gamma$ that saturate
$F_\gamma$.

\begin{example}[continues=ex:transformer]
 To continue with our example, the set of intraproductions is
 $\vec{H}_\gamma=\setN\vec{y}$.  The only non-saturated
 intraproduction is $\vec 0$, as any $n\vec y$ with $n>0$ saturates
 $F_\gamma$.%
\end{example}

By similarly shifting every word $w=t_1\ldots t_k$ of transitions in
$T_\gamma^*$ to the word
$w+\vec{h}\eqdef(t_1+\vec{h})\cdots(t_k+\vec{h})$ where $\vec{h}$ is
an intraproduction that saturates $F_\gamma$, we can show the
following characterisation of ${\downarrow}\Omega_\gamma$:
\begin{restatable}{lemma}{TE}\label{lem:TE}
  The following equality holds:
  \begin{equation*}%
    {\downarrow}\Omega_\gamma={\downarrow}\vec{s}_\gamma^{\mathrm{in}}\times ({\downarrow}
  E_\gamma)^*\times {\downarrow}\vec{s}_\gamma^{\mathrm{out}}\;.
\end{equation*}
\end{restatable}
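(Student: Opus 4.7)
The plan is to establish each inclusion of the equality separately. The inclusion $\subseteq$ is the easy one: the right-hand side is a product of ideals, hence downward-closed in $\mts$, so it suffices to show that every $\rho=\vec c+\vec u\xrightarrow{\sigma}\vec c+\vec v$ in $\Omega_\gamma$ already lies in it. On coordinates $i\in F^{\mathrm{in}}_\gamma$ we have $\vec u(i)=0$ by the very definition of $F^{\mathrm{in}}_\gamma$, so $(\vec c+\vec u)(i)=\vec c(i)=\vec s_\gamma^{\mathrm{in}}(i)$; on the remaining coordinates $\vec s_\gamma^{\mathrm{in}}(i)=\omega$; hence the source lies in ${\downarrow}\vec s_\gamma^{\mathrm{in}}$, and symmetrically the target lies in ${\downarrow}\vec s_\gamma^{\mathrm{out}}$. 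Each transition $t$ appearing along $\sigma$ belongs to $T_\gamma$, so $t\leq\pi_\gamma(t)\in E_\gamma$ and thus $t\in{\downarrow}E_\gamma$.

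For the nontrivial inclusion $\supseteq$, pick any prerun $\rho'=(\vec x',t'_1\cdots t'_k,\vec y')$ from the right-hand side. I need to exhibit a run $\rho\in\Omega_\gamma$ such that $\rho'\unlhd\rho$. For each $i$, the hypothesis $t'_i\in{\downarrow}E_\gamma$ produces a transition $\tilde t_i\in T_\gamma$ with $t'_i\leq\pi_\gamma(\tilde t_i)$, and by definition of $T_\gamma$ a witness run $\rho_i=\vec c+\vec u_i\xrightarrow{\alpha_i\tilde t_i\beta_i}\vec c+\vec v_i$ in $\Omega_\gamma$ with $(\vec u_i,\vec v_i)\in\vec P$. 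The inequality $t'_i\leq\pi_\gamma(\tilde t_i)$ only guarantees domination on the coordinates in $F_\gamma$; to close the gap on the remaining coordinates I invoke an intraproduction $\vec h$ saturating $F_\gamma$ (so $\vec h(i)=0$ iff $i\in F_\gamma$) together with the shifting property recalled from~\citep{leroux13}, which gives $\tilde t_i+n\vec h\in T_\gamma$ for every $n$. For $n$ large enough, $\tilde t_i+n\vec h\geq t'_i$ componentwise. Symmetrically, to force the source and target of the final run to exceed $\vec x'$ and $\vec y'$ on coordinates outside $F^{\mathrm{in}}_\gamma$ and $F^{\mathrm{out}}_\gamma$, I select a pair $(\vec p,\vec q)\in\vec P$ saturating $(F^{\mathrm{in}}_\gamma,F^{\mathrm{out}}_\gamma)$.

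The run $\rho$ is then obtained by amalgamation, reproducing the periodicity argument recalled just before Proposition~\ref{prop-amalgamation}. I concatenate the shifted witness runs containing $\tilde t_i+n\vec h$ in the order $i=1,\ldots,k$, using monotonicity of VAS runs to lift each $\rho_i$ so that the cumulative offsets from later segments sit throughout in the configuration, and padding both ends with $m$ iterations of the witness run of $(\vec p,\vec q)\in\vec P$. Periodicity of $\vec P$ ensures that the resulting pair of source/target offsets is again in $\vec P$, so $\rho\in\Omega_\gamma$; the sequence of transitions of $\rho$ contains, in the correct order, shifted copies of $\tilde t_i+n\vec h$ dominating $t'_i$; and the source and target, further inflated by $m\vec p$ and $m\vec q$ on the coordinates where $\vec s_\gamma^{\mathrm{in}}$ and $\vec s_\gamma^{\mathrm{out}}$ already sit at $\omega$, dominate $\vec x'$ and $\vec y'$. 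Hence $\rho'\unlhd\rho$.

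The main obstacle is to coordinate the pumping parameters $n$ and $m$ so that every embedding inequality holds simultaneously while the concatenation remains a valid connected run. This reduces to noting that only finitely many coordinates need boosting, and that $\vec h$ (resp.\ $\vec p$ and $\vec q$) contributes strictly to precisely those coordinates where $\pi_\gamma$ (resp.\ $\vec s_\gamma^{\mathrm{in}}$ and $\vec s_\gamma^{\mathrm{out}}$) imposed no constraint, so a single sufficiently large choice of $n$ and $m$ closes every gap at once.
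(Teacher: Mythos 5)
Your proof is correct and follows essentially the same route as the paper's: the forward inclusion by definition of $F_\gamma^{\mathrm{in}}$, $F_\gamma^{\mathrm{out}}$ and $E_\gamma$, and the converse by lifting each transition of the prerun to a witness in $T_\gamma$, shifting with a saturating intraproduction, padding with a pair saturating $(F_\gamma^{\mathrm{in}},F_\gamma^{\mathrm{out}})$, and concatenating via periodicity of $\vec P$. Your version is if anything slightly more explicit than the paper's about choosing the pumping parameters large enough to dominate an arbitrary element of the right-hand side.
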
\ifomitproofs\relax\else\begin{proof}
  The inclusion $\subseteq$ is immediate.  For the converse inclusion,
  let us denote by $T_\gamma$ the set of transitions occurring along
  runs of $\Omega_\gamma$.  Now, consider any word $w=t_1\cdots t_k$
  of transitions in $T_\gamma^*$.  There exists an intraproduction
  $\vec{h}$ that saturates $F_\gamma$ and a pair
  $(\vec{u}_0,\vec{v}_0)$ in $\vec{P}$ that saturates
  $(F_\gamma^{\mathrm{in}},F_\gamma^{\mathrm{out}})$.  We denote by
  $w+\vec{h}$ the word $(t_1+\vec{h})\cdots(t_k+\vec{h})$.  Since
  $t_j+\vec{h}$ is a transition in $T_\gamma$, it occurs along some
  run $\vec{c}+\vec{u}_j\xrightarrow{\sigma_j}\vec{c}+\vec{v}_j$ of
  $\Omega_\gamma$. Moreover, as $(\vec{u}_0,\vec{v}_0)$ is in
  $\vec{P}$, there exists a run
  $\vec{c}+\vec{u}_0\xrightarrow{\sigma_0}\vec{c}+\vec{v}_0$. Let
  $\vec{u}\eqdef\sum_{j=0}^k\vec{u}_j$,
  $\vec{v}\eqdef\sum_{j=0}^k\vec{v}_j$, and
  $\sigma\eqdef\sigma_0\cdots\sigma_k$. Because $\vec{P}$ is periodic,
  it follows that $(\vec{u},\vec{v})$ is a pair in $\vec{P}$.  Notice
  that
  $\rho\eqdef(\vec{c}+\vec{u}\xrightarrow{\sigma}\vec{c}+\vec{v})$ is
  a run in $\Omega_\gamma$ and
  $(\vec{c}+\vec{u}_0,w+\vec{h},\vec{c}+\vec{v}_0)\in
  {\downarrow}\rho$.  %
  Hence
  ${\downarrow}(\vec{s}^{\mathrm{in}}_\gamma,\pi_\gamma(w),\vec{s}^{\mathrm{out}}_\gamma)\subseteq
  {\downarrow}\Omega_\gamma$, proving the converse inclusion.
\end{proof}\fi

\Citet[\lemmaautorefname~VIII.11]{leroux13} shows that
$\vec{S}_\gamma$ is a set of incomparable partial configurations.
Therefore the partial transitions in $E_\gamma$ are
incomparable. The previous lemma then shows that $E_\gamma$
is the unique finite set of incomparable elements in
$\setN_\omega^d\times\vec{A}\times\setN_\omega^d$ satisfying
\autoref{lem:TE}.

\subsubsection{Pumpable Configurations}
A partial configuration $\vec{x}$ in $\setN_\omega^d$ is said to
be \emph{forward pumpable} by a witness graph $G=(\vec{S},E,\vec{s})$
if there exists a cycle on $\vec{s}$ labelled by a word $\sigma_+$,
and a run using this label $\vec{x}\xrightarrow{\sigma_+}\vec{x}'$
with $\vec{x}\leq\vec{x}'$ such that ${\downarrow}\vec{s}=\bigcup_{n}{\downarrow}\vec{x}_n$,
where $\vec{x}_n$ is the configuration defined by
$\vec{x}\xrightarrow{\sigma_+^n}\vec{x}_n$ (such a configuration
exists by monotonicity).  Symmetrically, a partial configuration
$\vec{y}$ in $\setN_\omega^d$ is said to be \emph{backward pumpable}
by a witness graph $G=(\vec{S},E,\vec{s})$ if there exists a cycle on
$\vec{s}$ labelled by a word $\sigma_-$, and a run
$\vec{y}'\xrightarrow{\sigma_-}\vec{y}$ with $\vec{y}\leq\vec{y}'$
such that ${\downarrow}\vec{s}=\bigcup_{n}{\downarrow}\vec{y}_n$ where
$\vec{y}_n$ is the configuration defined by
$\vec{y}_n\xrightarrow{\sigma_-^n}\vec{y}$.%

Saturated intraproductions also provide a way to prove that the graph
input $\vec s^{\mathrm{in}}_\gamma$ and output
$\vec{s}^{\mathrm{out}}_\gamma$ are pumpable.
\begin{lemma}\label{lem:pump}
  The input $\vec{s}^{\mathrm{in}}_\gamma$ is forward
  pumpable by $G_\gamma$, and the output $\vec{s}^{\mathrm{out}}_\gamma$ is
  backward pumpable by $G_\gamma$.
\end{lemma}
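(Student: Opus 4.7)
The plan is to exhibit explicit cycles $\sigma_+$ and $\sigma_-$ built from a \emph{saturating intraproduction}. Fix a saturating $\vec{h} \in \vec{H}_\gamma$, whose existence is guaranteed by \citep[\lemmaautorefname~VIII.3]{leroux13} and recalled just above: it satisfies $\vec{h}(i) = 0$ iff $i \in F_\gamma$. Since $\vec{c} + \vec{h} \in \vec{Q}_\gamma$, this configuration lies along some run of $\Omega_\gamma$, which I can write as $\vec{c} + \vec{u} \xrightarrow{\alpha} \vec{c} + \vec{h} \xrightarrow{\beta} \vec{c} + \vec{v}$ for some pair $(\vec{u}, \vec{v}) \in \vec{P}$; recall in particular that $\vec{u}(i) = 0$ on $F_\gamma^{\mathrm{in}}$ and $\vec{v}(i) = 0$ on $F_\gamma^{\mathrm{out}}$. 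I will take $\sigma_+ \eqdef \alpha$ and $\sigma_- \eqdef \beta$.

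For forward pumpability I first check that $\alpha$ is a cycle at $\vec{s}_\gamma$ in $G_\gamma$: its endpoints project via $\pi_\gamma$ to $\vec{s}_\gamma$, using $\vec{u}(i) = 0$ on $F_\gamma \subseteq F_\gamma^{\mathrm{in}}$ and $\vec{h}(i) = 0$ on $F_\gamma$; and its transitions project into $E_\gamma$ because $\alpha$ is a prefix of the run $\alpha\beta \in \Omega_\gamma$. Next, since $\vec{s}_\gamma^{\mathrm{in}} \geq \vec{c} + \vec{u}$ in $\setN_\omega^d$ (the two coincide on $F_\gamma^{\mathrm{in}}$ while $\vec{s}_\gamma^{\mathrm{in}}$ is $\omega$ elsewhere), monotonicity makes $\alpha$ fireable from $\vec{s}_\gamma^{\mathrm{in}}$, and the displacement $\vec{h} - \vec{u}$ restricts to $\vec{h}$ on $F_\gamma^{\mathrm{in}}$, which is non-negative, so $\vec{x}' \geq \vec{s}_\gamma^{\mathrm{in}}$. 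The same monotonicity argument, applied inductively, shows that $\alpha^n$ remains fireable and yields $\vec{x}_n$ equal to $\vec{c} + n\vec{h}$ on $F_\gamma^{\mathrm{in}}$ and $\omega$ elsewhere. The convergence $\downarrow \vec{s}_\gamma = \bigcup_n \downarrow \vec{x}_n$ then unpacks coordinatewise precisely from the saturation of $\vec{h}$: on $F_\gamma$, $\vec{h}(i) = 0$ gives $\vec{x}_n(i) = \vec{c}(i) = \vec{s}_\gamma(i)$; on the indices $i \in F_\gamma^{\mathrm{in}} \setminus F_\gamma$ where $\vec{s}_\gamma(i) = \omega$, $\vec{h}(i) > 0$ drives $\vec{x}_n(i)$ to infinity; off $F_\gamma^{\mathrm{in}}$ both sides are already $\omega$.

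The backward case is fully symmetric with $\sigma_- \eqdef \beta$, now reading the run $\vec{c} + \vec{h} \xrightarrow{\beta} \vec{c} + \vec{v}$ with the roles of start/end reversed and using $\vec{v}(i) = 0$ on $F_\gamma^{\mathrm{out}}$: the fireable sequence $\vec{y}_n \xrightarrow{\beta^n} \vec{s}_\gamma^{\mathrm{out}}$ arises from $\vec{y}_n = \vec{s}_\gamma^{\mathrm{out}} + n(\vec{h}-\vec{v})$, which equals $\vec{c} + n\vec{h}$ on $F_\gamma^{\mathrm{out}}$ and $\omega$ elsewhere. The key obstacle I anticipate is the indices $i \in F_\gamma^{\mathrm{in}} \cap F_\gamma^{\mathrm{out}} \setminus F_\gamma$, where the label of any complete run in $\Omega_\gamma$ has zero displacement while $\vec{s}_\gamma(i) = \omega$: no cycle arising as the label of such a full run could produce growth there. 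The resolution is precisely to use only \emph{part} of a run---namely $\alpha$ (respectively $\beta$)---which is a cycle of $G_\gamma$ without being the label of any element of $\Omega_\gamma$, and whose displacement is controlled by the saturating intraproduction $\vec{h}$ rather than by a saturating pair in $\vec{P}$.
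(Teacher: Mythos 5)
Your proof is correct and takes essentially the same route as the paper's: fix an intraproduction $\vec h$ saturating $F_\gamma$, split a run of $\Omega_\gamma$ through $\vec c+\vec h$ into two halves $\sigma_+$ and $\sigma_-$, observe via $\pi_\gamma$ that both are cycles on $\vec s_\gamma$, and project onto $F^{\mathrm{in}}_\gamma$ (resp.\ $F^{\mathrm{out}}_\gamma$) to get $\vec s_\gamma^{\mathrm{in}}\xrightarrow{\sigma_+}\vec s_\gamma^{\mathrm{in}}+\vec h$ and the symmetric backward statement. You supply more coordinatewise detail (and a nice remark on why half-runs rather than full runs are needed), but the argument is the same.
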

\begin{proof}
  Let $\vec{h}$ be an intraproduction that saturates $F_\gamma$. There exists a run $\rho\eqdef\vec{c}+\vec{u}_{\vec{h}}\xrightarrow{\sigma_+}\vec{c}+\vec{h}\xrightarrow{\sigma_-}\vec{c}+\vec{v}_{\vec{h}}$
  in $\Omega_\gamma$. The projection $\pi_\gamma(\rho)$ shows that
  $\sigma_+,\sigma_-$ are cycles on $\vec{s}_\gamma$. Moreover, by
  projecting over $F^{\mathrm{in}}_\gamma$ the run
  $\vec{c}+\vec{u}_{\vec{h}}\xrightarrow{\sigma_+}\vec{c}+\vec{h}$ we see that 
  $\vec{s}_\gamma^{\mathrm{in}}\xrightarrow{\sigma_+}\vec{s}_\gamma^{\mathrm{in}}+\vec{h}$.
  Hence
  $\vec{s}^{\mathrm{in}}_\gamma$ is forward pumpable by
  $G_\gamma$. Symmetrically $\vec{s}^{\mathrm{out}}_\gamma$ is backward
  pumpable by $G_\gamma$. 
\end{proof}

\section{Global Adherent Ideals}
\label{sec:runideals}
Our understanding of the KLMST decomposition is that it builds an
ideal decomposition of ${\downarrow}\runs$ inside $\preruns$.  We have
seen in \autoref{sub-prerun-order} how to represent prerun ideals.
However we should expect the maximal ideals of ${\downarrow}\runs$ to
have additional properties besides adherence, and indeed we shall
see they can be represented using the structures employed in the KLMST
decomposition.

The starting point for our characterisation of run ideals is to
consider some finite basis $B$ of $(\runs,\unlhd)$: if we consider the
upward closure ${\uparrow}\rho\cap\runs$ of each run $\rho$ in $B$
\emph{inside} $\runs$, we obtain again 
\begin{align}
  \runs&=\bigcup_{\rho\in B}{\uparrow}\rho\cap\runs\;.
\intertext{Taking the downward-closure inside $\preruns$
  then yields} 
  {\downarrow}\runs&=\bigcup_{\rho\in
    B}{\downarrow}({\uparrow}\rho\cap\runs)\;,
\end{align}
prompting the study of
${\downarrow}({\uparrow}\rho\cap\runs)$. 
\ifomitproofs
By \autoref{prop-amalgamation}, each set
${\downarrow}({{\uparrow}\rho}\cap\runs)$ is an ideal, for which we
want to find a representation.
\else
\subsection{Maximal Ideals}
Observe that each set ${\downarrow}({{\uparrow}\rho}\cap\runs)$ for a
run $\rho$ is downward-closed and non-empty, and that by
\autoref{prop-amalgamation} it is also directed, and is therefore an
ideal.

We can further see that those ideals are exactly the \emph{maximal
  ideals} in the canonical decomposition of ${\downarrow}\runs$.
\begin{proposition}
  The maximal ideals from the canonical decomposition of
  ${\downarrow}\runs$ are exactly the sets
  ${\downarrow}({{\uparrow}\rho}\cap\runs)$ for some runs $\rho$ in
  $\runs$.
\end{proposition}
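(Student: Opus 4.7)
The plan is to prove both inclusions. To begin, I would re-verify that for each run $\rho \in \runs$ the set $I_\rho \eqdef {\downarrow}({\uparrow}\rho \cap \runs)$ is an ideal of $\preruns$ contained in ${\downarrow}\runs$: non-emptiness (it contains $\rho$) and downward-closure are immediate, and directedness follows from \autoref{prop-amalgamation}. Indeed, given $\sigma_1, \sigma_2 \in I_\rho$, pick witnesses $\rho_1, \rho_2 \in {\uparrow}\rho \cap \runs$ with $\sigma_i \unlhd \rho_i$; amalgamating $\rho_1, \rho_2$ above $\rho$ produces some $\rho_3 \in \runs$ with $\rho \unlhd \rho_3$ and $\sigma_1, \sigma_2 \unlhd \rho_3$, so $\rho_3 \in {\uparrow}\rho \cap \runs$ dominates both.

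For the inclusion ``every maximal ideal has the form $I_\rho$'', I would invoke \autoref{lem:adherence} with $D = {\downarrow}\runs$ and $S = \runs$: any maximal ideal $I$ of ${\downarrow}\runs$ lies in the adherence of $\runs$, so there is a directed set $\Delta \subseteq \runs$ with $I = {\downarrow}\Delta$. Fix any $\rho \in \Delta$; directedness of $\Delta$ means every $\sigma \in \Delta$ is dominated by some $\tau \in \Delta \cap {\uparrow}\rho \subseteq {\uparrow}\rho \cap \runs$, hence $I = {\downarrow}\Delta \subseteq I_\rho$. Since $I_\rho$ is itself an ideal included in ${\downarrow}\runs$, the maximality of $I$ forces $I = I_\rho$.

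For the converse, given an arbitrary $\rho \in \runs$, apply the previous step to a maximal ideal $J$ of ${\downarrow}\runs$ containing $\rho$, writing $J = I_{\rho'}$. Membership $\rho \in I_{\rho'}$ yields $\tau \in {\uparrow}\rho' \cap \runs$ with $\rho \unlhd \tau$. For any $\sigma \in {\uparrow}\rho' \cap \runs$, a second application of \autoref{prop-amalgamation} to the triple $\rho' \unlhd \sigma, \tau$ produces $\mu \in \runs$ with $\sigma, \tau \unlhd \mu$, so $\rho \unlhd \mu$ and $\mu \in {\uparrow}\rho \cap \runs$; hence $I_{\rho'} \subseteq I_\rho \subseteq J = I_{\rho'}$, and $I_\rho = J$ is a maximal ideal. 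The core observation throughout is the amalgamation property, which lets one freely change the basepoint $\rho$ generating a given maximal ideal; the rest is routine manipulation of the ideal machinery from \autoref{sec:ideal}, and I expect no substantial obstacle beyond keeping the bookkeeping of the inclusions straight.
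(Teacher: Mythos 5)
Your proof is correct and follows essentially the same route as the paper's: \autoref{prop-amalgamation} (together with the remark that the amalgam stays in $\runs$) to show each $I_\rho={\downarrow}({\uparrow}\rho\cap\runs)$ is an ideal, adherence of maximal ideals via \autoref{lem:limit}/\autoref{lem:adherence} plus directedness of $\Delta$ for one inclusion, and maximality within the canonical decomposition for the other. The only cosmetic difference is in the direction ``each $I_\rho$ is maximal'', where you pass through a maximal ideal $J=I_{\rho'}$ containing $\rho$ and amalgamate a second time (the inclusion $I_\rho\subseteq J$ being supplied by the maximality of $J$), whereas the paper works directly with the directedness of the maximal ideal enclosing $I_\rho$; both arguments rest on the same facts.
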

\begin{proof}
  For any run $\rho$, because
  ${\downarrow}({{\uparrow}\rho}\cap\runs)$ is an ideal, it is
  included into some maximal ideal $I$.  By \autoref{lem:limit},
  $I={\downarrow}\Delta$ for some directed subset $\Delta$ of $\runs$.
  Let us show that $I\subseteq{\downarrow}({\uparrow}\rho\cap\Delta)$,
  which will show that
  $I\subseteq{\downarrow}({{\uparrow}\rho}\cap\runs)$ and thereby the
  maximality of ${\downarrow}({{\uparrow}\rho}\cap\runs)$.  Since
  $\rho$ is in $I$, there is a run $\rho_\Delta$ in $\Delta$ such that
  $\rho\unlhd\rho_\Delta$.  Then, for any prerun $\rho_0$ in $I$,
  since $I$ is directed there exists $\rho_1$ in $I$ with
  $\rho_\Delta,\rho_0\unlhd\rho_1$.  Finally, since
  $I={\downarrow}\Delta$, there exists $\rho_2$ in $\Delta$ such that
  $\rho_1\unlhd\rho_2$, i.e.\ $\rho_2\in{\uparrow}\rho\cap\Delta$ as
  desired.

  Conversely, if $I$ is a maximal ideal of ${\downarrow}\runs$, then
  by \autoref{lem:limit} it is adherent and thus equal to
  ${\downarrow}\Delta$ for some directed subset $\Delta$ of runs in
  $\runs$.  Pick some $\rho_0$ in $\Delta$; then
  $I\subseteq{\downarrow}({{\uparrow}\rho_0}\cap\runs)$, and equality
  follows from the maximality of $I$.
\end{proof}\noindent
Note that the sets ${\downarrow}({{\uparrow}\rho}\cap\runs)$ and
${\downarrow}({{\uparrow}\rho'}\cap\runs)$ for $\rho\neq\rho'$ might
coincide, even for minimal $\rho$ and $\rho'$, so there is no
canonicity in terms of those basic runs.
 
What we seek now is a more syntactic representation for such ideals,
which would not require to explicitly exhibit a run $\rho$.
\fi

\subsection{Perfect Runs}\label{sub:pirho}
Let us accordingly fix a run
$\rho=\vec{c}_0\xrightarrow{\vec{a}_1}\vec{c}_1\cdots\vec
c_{k-1}\xrightarrow{\vec{a}_k}\vec{c}_k$ with $\vec x=\vec c_0$ and
$\vec y=\vec c_k$ throughout this subsection.

\subsubsection{Transformer Relations Along a Run}
Consider the relation $\vec R$ of tuples
$((\vec{u}_0,\vec{v}_0),\ldots,(\vec{u}_k,\vec{v}_k))$ of pairs in
$\setN^d\times\setN^d$ such that:
\begin{equation}
\vec{0}=\vec{u}_0\transformer{\vec{c}_0}\vec{v}_0=\vec{u}_1\transformer{\vec{c}_1}\vec{v}_1\cdots=\vec{u}_k\transformer{\vec{c}_k}\vec{v}_k=\vec{0}
\end{equation}
and let us introduce the relation $\vec P_j$ defined for $0\leq j\leq k$ by:
\begin{equation}
  \vec P_j\eqdef\{(\vec{u}_j,\vec{v}_j) \mid
((\vec{u}_0,\vec{v}_0),\ldots,(\vec{u}_k,\vec{v}_k))\in \vec R
\}\;.
\end{equation}
Informally, each $\vec P_j$ is the subset of $\transformer{\vec c_j}$
that can be completed into some run in ${{\uparrow}\rho}\cap\runs$.
We can check that $\vec R$ and each $\vec P_j$ is a periodic relation
since each transformer relation is periodic.

\subsubsection{Global Ideal Representation}
Denoting by $\gamma_j$ the pair $(\vec{c}_j,\vec P_j)$, we derive
from \autoref{lem:TE} the following
equality:
\begin{equation}
{\downarrow}\Omega_{\gamma_j}~=~{\downarrow
  \vec{s}_{\gamma_j}^{\mathrm{in}}}\times({\downarrow}E_{\gamma_j})^*\times {\downarrow
  \vec{s}_{\gamma_j}^{\mathrm{out}}}\;.
\end{equation}%
Notice that $\vec{s}_{\gamma_0}^{\mathrm{in}}=\vec{x}$ and
$\vec{s}_{\gamma_k}^{\mathrm{out}}=\vec{y}$. Moreover, the triple
$e_{j}\eqdef(\vec{s}_{\gamma_{j-1}}^{\mathrm{out}},\vec{a}_j,\vec{s}_{\gamma_j}^{\mathrm{in}})$
is a partial transition for every $1\leq j\leq k$. 
\ifomitproofs\relax\else\par\medskip\fi%
Observe that 
${\downarrow}({{\uparrow}\rho}\cap\runs)$ is included in%
\begin{equation}\label{equ:glob}
  {\downarrow}\vec{x}\times
  ({\downarrow}E_{\gamma_0})^*\cdot A_0\cdot
  ({\downarrow}E_{\gamma_1})^*\cdots A_k\cdot
  ({\downarrow}E_{\gamma_k})^*\times {\downarrow}\vec{y}
\end{equation}
where $A_j$ is the atom ${\downarrow}e_{j}\cup\{\varepsilon\}$. The
converse inclusion will be a consequence of 
\autoref{thm:connected} and \autoref{lem:perfectxirho}.

\medskip

In the upcoming subsection, we derive a condition satisfied by the
following sequence $\xi_\rho$ of interspersed marked witness graphs
and actions, which allows to represent the ideal~\eqref{equ:glob}:
\begin{equation}
\xi_\rho\eqdef
M_{\gamma_0},\vec{a}_1,M_{\gamma_1},\ldots,\vec{a}_k,M_{\gamma_k}\;.
\end{equation}

\subsection{Perfect Marked Witness Graph Sequences} 
A \emph{marked witness graph sequence} $\xi$ is a sequence
\begin{equation}
\xi~=~M_0, \vec{a}_1, M_1,\ldots \vec{a}_k,M_k\;,
\end{equation}
where $M_0,\ldots,M_k$ are marked witness graphs and
$\vec{a}_1,\ldots,\vec{a}_k$ are actions in $\vec{A}$.  In the sequel,
$M_j$ denotes the marked witness graph
$(\vec{s}_j^{\mathrm{in}},G_j,\vec{s}_j^{\mathrm{out}})$ where $G_j$
is the witness graph $(\vec{S}_j,E_j,\vec{s}_j)$. The sets
$F_j^{\mathrm{in}},F_j,F_j^{\mathrm{out}}$ denote the finite coordinates of
$\vec{s}_j^{\mathrm{in}},\vec{s}_j,\vec{s}_j^{\mathrm{out}}$. The two partial
configurations $\vec{s}_0^{\mathrm{in}}$ and
$\vec{s}_k^{\mathrm{out}}$ are assumed to be respectively $\vec{x}$
and $\vec{y}$. 
Such sequences $\xi$ are also called \emph{marked graph-transition
  sequences} in~\citep{lambert92}, and are the structures maintained
throughout the KLMST decomposition algorithm.

\subsubsection{Ideals and Runs}
A marked witness graph sequence $\xi$ defines a prerun ideal
\begin{equation}
  I_\xi\eqdef {\downarrow}\vec x\times
  ({\downarrow}E_0)^*\cdot A_1\cdot
  ({\downarrow}E_1)^*\cdots A_k\cdot
  ({\downarrow}E_k)^*\times
  {\downarrow}\vec{y}
\end{equation}
where $A_j\eqdef{\downarrow}(\vec s_{j-1}^\mathrm{out},\vec a_j,\vec
s_j^\mathrm{in})\cup\{\varepsilon\}$ for all $1\leq j\leq k$.  It is also associated with
a set of runs $\Omega_\xi$ of the form
\begin{equation}\label{equ:omegapi}
  \vec{x}_0\xrightarrow{\sigma_0}\vec{y}_0\xrightarrow{\vec{a}_1}\vec{x}_1\xrightarrow{\sigma_1}\vec{y}_1\cdots
  \xrightarrow{\vec{a}_k}\vec{x}_k\xrightarrow{\sigma_k}\vec{y}_k
\end{equation}
where each $\vec{x}_j\xrightarrow{\sigma_j}\vec{y}_j$ is a run in
$\Omega_{M_j}$.  Note that ${\downarrow}\Omega_\xi\subseteq I_\xi$.

We show next in
\autoref{thm:connected} that for marked witness graph sequences $\xi$
which satisfy the \emph{perfectness} condition of
\citet{lambert92}---which is mostly equivalent to
\citeauthor{kosaraju82}'s \emph{$\theta$~condition}---, the prerun
ideal $I_\xi$ associated with $\xi$ is adherent%
. This condition is not arbitrary, but stems from the properties of
the sequences $\xi_\rho$ we derived in sections~\ref{sec-transformer}
and~\ref{sec:runideals}.

\newcommand{\jump}[1]%
  {\raisebox{-1pt}{$\stackrel{#1}\dashrightarrow$}}
\subsubsection{Perfectness Condition}
Perfectness is defined by introducing a linear system over the natural
numbers that denotes a set $L_\xi$ of solutions.  This linear system
relies on a binary relation $\jump{\psi}$ over configurations in
$\setN^d$, where $\psi{:}\,E\rightarrow\setN$ denotes some function
defined on a finite set $E$ of partial transitions.  The relation is
defined by $\vec{x}\jump{\psi}\vec{y}$ if $\vec{y}=\vec{x}+\sum_{e\in
E}\psi(e)\Delta(e)$, where $\Delta(e)\eqdef\vec{a}$ for a partial
transition $e$ labelled by $\vec{a}$.

\medskip

Let $L_\xi$ be the set of tuples
$(\vec{x}_0,\psi_0,\vec{y}_0,\ldots,\vec{x}_k,\psi_k,\vec{y}_k)$ where
$\psi_j{:}\,E_j\rightarrow\setN$ is a function satisfying for every
$\vec{s}\in \vec{S}_j$:
\begin{equation*}
  \sum_{e\in E_j \mid \tgt{e}=\vec{s}}\psi_j(e)=\sum_{e\in E_j \mid
  \src{e}=\vec{s}} \psi_j(e)
\end{equation*}
and $\vec{x}_0,\vec{y}_0,\ldots,\vec{x}_k,\vec{y}_k$ are
configurations in $\setN^d$ such that
\begin{equation*}
\vec{x}_0\jump{\psi_0}\vec{y}_0\xrightarrow{\vec{a}_1}\vec{x}_1\jump{\psi_1}\vec{y}_1\cdots
\vec{x}_k\jump{\psi_k}\vec{y}_k
\end{equation*}
and such that for every $0\leq j\leq k$
\begin{equation*}
\pi_{F_j^{\mathrm{in}}}(\vec{x}_j)=\vec{s}_j^{\mathrm{in}}~\wedge
~\pi_{F_j^{\mathrm{out}}}(\vec{y}_j)=\vec{s}_j^{\mathrm{out}}\;.
\end{equation*}
Notice that $L_\xi$ is defined as solutions of a linear
system.  
Moreover, for every run in $\Omega_\xi$ of the
form \eqref{equ:omegapi}, by introducing the Parikh image
$\psi_j{:}\,E_j\rightarrow\setN$ of the cycle on $\vec{s}_j$ labelled by
$\sigma_j$, we get a sequence
$((\vec{x}_0,\psi_1,\vec{x}_1),\ldots,(\vec{x}_k,\psi_k,\vec{y}_k))$
in $L_\xi$.

\begin{definition}\label{def:perfect}
A marked witness graph sequence is said to be \emph{perfect} if it
satisfies the following conditions for all $j$:
\begin{itemize}
\item $\vec{s}^{\mathrm{in}}_j$ and $\vec{s}^{\mathrm{out}}_j$ are
  respectively forward and backward pumpable by $G_j$, 
\item $\sup\vec{X}_j=\vec{s}^{\mathrm{in}}_j$ and $\sup\vec{Y}_j=\vec{s}^{\mathrm{out}}_j$,
\item $\sup\Psi_j(e)=\omega$ for every $e\in E_j$, and
\end{itemize}
where $\vec{X}_j$, $\Psi_j$, and $\vec{Y}_j$ are resp.\ the
sets of elements $\vec{x}_j$, $\psi_j$, and $\vec{y}_j$
\ifomitproofs such
that $((\vec{x}_0,\psi_0,\vec{y}_0),\ldots,(\vec{x}_k,\psi_k,\vec{y}_k))\in
L_\xi$.\else satisfying:
$$((\vec{x}_0,\psi_0,\vec{y}_0),\ldots,(\vec{x}_k,\psi_k,\vec{y}_k))\in L_\xi\;.$$\fi
\end{definition}

Perfect witness graph sequences denote adherent ideals:
\begin{lemma}\label{thm:connected}
  If $\xi$ is a perfect marked witness graph sequence, then $I_\xi$ is
  in the adherence of $\runs$ and $I_\xi={\downarrow}\Omega_\xi$.
\end{lemma}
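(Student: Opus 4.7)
The plan is to prove $I_\xi = {\downarrow}\Omega_\xi$ and derive the adherence statement from it as a bonus. The inclusion ${\downarrow}\Omega_\xi \subseteq I_\xi$ is immediate from the definitions. Moreover, any run in $\Omega_\xi$ belongs to $\runs$: the conditions $\vec s_0^{\mathrm{in}} = \vec x$ and $\vec s_k^{\mathrm{out}} = \vec y$ force $\vec x_0 = \vec x$ and $\vec y_k = \vec y$, and the interspersed arrows $\vec y_{j-1} \xrightarrow{\vec a_j} \vec x_j$ chain the pieces into a run from $\vec x$ to $\vec y$. Granted the reverse inclusion $I_\xi \subseteq {\downarrow}\Omega_\xi$, directedness of $\Omega_\xi$ comes for free: $I_\xi$ is itself an ideal (a product of ideals, by \autoref{th-products} and \eqref{eq-cart-ideal}), hence directed, so given any $\rho_1,\rho_2 \in \Omega_\xi$ a common upper bound $\tau \in I_\xi = {\downarrow}\Omega_\xi$ lies in turn below some $\rho_3 \in \Omega_\xi$. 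Consequently $I_\xi$ is the downward closure of a directed subset of $\runs$, i.e.\ adherent.

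The heart of the argument is therefore $I_\xi \subseteq {\downarrow}\Omega_\xi$. Fix a prerun $\tau = (\vec x'', w_0 \tau_1 w_1 \cdots \tau_k w_k, \vec y'') \in I_\xi$ with $\vec x'' \leq \vec x$, $\vec y'' \leq \vec y$, each $w_j \in ({\downarrow}E_j)^\ast$, and each $\tau_j \in A_j$. The set $L_\xi$ is cut out by linear Diophantine equations, hence semilinear; combined with the $\sup$ conditions of \autoref{def:perfect}, it contains periodic generators that individually drive each $\omega$-coordinate of $\vec x_j$ and $\vec y_j$, and each value $\psi_j(e)$, to infinity. Adding enough such generators to a base solution yields a single tuple $((\vec x_0, \psi_0, \vec y_0), \ldots, (\vec x_k, \psi_k, \vec y_k)) \in L_\xi$ in which simultaneously every $\psi_j(e)$ exceeds the number of partial transitions in $w_j$ dominated by $e$, and every $\vec x_j, \vec y_j$ dominates the corresponding partial configurations appearing in $\tau$.

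It remains to realise each $\psi_j$ as a genuine VAS cycle. By the flow conditions on $\psi_j$ together with strong connectivity of $G_j$, there is a closed walk $\tilde\sigma_j$ on $\vec s_j$ in $G_j$ with Parikh image $\psi_j$. To make $\tilde\sigma_j$ feasible from $\vec x_j$ in the VAS, prepend $\sigma_{j,+}^N$ (available by forward pumpability of $\vec s_j^{\mathrm{in}}$) and append $\sigma_{j,-}^N$ (available by backward pumpability of $\vec s_j^{\mathrm{out}}$) for $N$ large enough; by the pumpability clauses of perfectness the resulting $\sigma_j \eqdef \sigma_{j,+}^N \tilde\sigma_j \sigma_{j,-}^N$ labels a bona fide run $\vec x_j \xrightarrow{\sigma_j} \vec y_j \in \Omega_{M_j}$. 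Chaining these via the exact transitions $\vec y_{j-1} \xrightarrow{\vec a_j} \vec x_j$, which are legitimate because $\vec x_j = \vec y_{j-1} + \vec a_j$ holds in $L_\xi$, produces $\rho \in \Omega_\xi$ with $\tau \unlhd \rho$. The main obstacle is the \emph{simultaneous} enlargement in the second paragraph: it requires exploiting together the semilinear structure of $L_\xi$ and all three $\sup$ conditions of perfectness to extract a single solution that is large on every relevant coordinate at once; only then can the pumpability conditions do their job of converting the algebraic solution into an executable run.
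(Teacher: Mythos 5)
Your overall architecture matches the paper's: both reduce everything to showing that $I_\xi$ is the downward closure of a (directed) family of runs of the form ``pump up, execute a covering cycle realising a large solution of $L_\xi$, pump down'', and your preliminary observations are sound --- in particular the remark that once $I_\xi={\downarrow}\Omega_\xi$ is known, directedness of $\Omega_\xi$ (hence adherence) comes for free from the fact that $I_\xi$ is itself an ideal is a clean way to package what the paper obtains from the explicit directed family.

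The gap is in the core step $I_\xi\subseteq{\downarrow}\Omega_\xi$, precisely where the paper defers to Lambert's iteration lemma (\citep[\lemmaautorefname~4.1]{lambert92}). First, a concrete error: if $(\vec x_j,\psi_j,\vec y_j)$ is a solution of $L_\xi$ and $\tilde\sigma_j$ realises $\psi_j$, then prepending $\sigma_{j,+}^N$ and appending $\sigma_{j,-}^N$ changes the total displacement by $N$ times the (generally non-zero) displacements of the pumping cycles, so the resulting word does \emph{not} label a run from $\vec x_j$ to $\vec y_j$, and the chaining via $\vec y_{j-1}\xrightarrow{\vec a_j}\vec x_j$ breaks down. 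In the paper's family~\eqref{eq-prun} the pumping words $\sigma_{+,j}^n,\sigma_{-,j}^n$ and the covering cycles $\sigma_j^n$ are iterated \emph{in lockstep} and their Parikh images are themselves homogeneous solutions of $L_\xi$, so that the shifted endpoints $\vec x_{j,n},\vec y_{j,n}$ again form a solution and remain consistent across all $k+1$ graphs; your construction decouples $N$ from the solution and loses this. Second, even with the bookkeeping repaired, ``for $N$ large enough the word is feasible'' is exactly the content of Lambert's lemma and cannot be asserted: the Eulerian circuit $\tilde\sigma_j$ has length growing with the solution you enlarged, the surplus provided by $\sigma_{j,+}^N$ grows only linearly in $N$, and the required $N$ feeds back into the choice of solution; resolving this circularity is the genuinely hard induction, not a one-line appeal to pumpability. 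As written, the proposal restates the statement to be proved at its most difficult point.
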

\begin{proof}

  The proof comes from~\citep[\lemmaautorefname~4.1]{lambert92} and
  shows that a directed family of runs of the following form can
  always be extracted from a perfect marked witness graph
  sequence: \begin{equation}\label{eq-prun} \vec{x}_{0,n}\xrightarrow{\sigma_{+,0}^n\sigma_0^nw_0\sigma_{-,0}^n}\vec{y}_{0,n}\xrightarrow{\vec{a}_1}\vec{x}_{1,n}\cdots \vec{x}_{k,n}\xrightarrow{\sigma_{+,k}^n\sigma_k^nw_k\sigma_{-,k}^n}\vec{y}_{k,n} \end{equation}
  such that each run family
  $\vec{x}_{j,n}\xrightarrow{\sigma_{+,j}^n\sigma_j^nw_j\sigma_{-,j}^n}\vec{y}_{j,n}$
  is directed with ${\downarrow}\Omega_{M_j}$ as downward-closure.
  Intuitively, $\sigma_{+,j}$ pumps up the components in
  $F_j^{\mathrm{in}}\moins F_j$, $\sigma_{-,j}$ pumps down those in
  $F_j^{\mathrm{out}}\moins F_j$, and $\sigma_j$ is the label of a
  cycle on $\vec{s}_j$ such that every transition in $E_j$ occurs at
  least once along the cycle.  The sequence $w_j$ comes from a
  solution of the linear system~$L_\xi$.
\end{proof}

\subsubsection{Deciding Perfectness}
\label{sub-dec-perf}
We can decide if a marked witness graph sequence is perfect as
follows.  First of all, observe that checking if a partial
configuration $\vec{x}\in\setN_\omega^d$ is pumpable (either backward
or forward) by a witness graph $G=(\vec{S},E,\vec{s})$ can be
performed in exponential space since this problem reduces to the place
boundedness problem for vector addition
systems~\cite{blockelet11,demri-jcss13}.  Moreover, since we can
compute the unbounded components of the set of solutions of a linear
system on $\setN$ in nondeterministic polynomial time, we can
effectively do this computation on sets $L_\xi$ of solutions for
marked witness graph sequences $\xi$. Hence:
\begin{lemma}\label{lem-dec-perf}
  The perfectness of a marked witness graph sequence is decidable in
  exponential space.
\end{lemma}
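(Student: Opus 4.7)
The plan is to decide each of the three perfectness conditions of \autoref{def:perfect} independently for every index $j$, and to show that each check fits within exponential space (in fact, the last two are even in nondeterministic polynomial time); since polynomially many such checks are composed, the overall bound remains exponential space.

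For the pumpability condition, I would reduce forward pumpability of $\vec s_j^{\mathrm{in}}$ by $G_j$ to the place boundedness problem in a VAS with states. The states are $\vec S_j$, the transitions are $E_j$ read as integer updates on the finite coordinates in $F_j$, and one augments this VASS with a counter for each coordinate in $F_j^{\mathrm{in}}\setminus F_j$ (initialised to $0$) so that cycles on $\vec s_j$ which increase all of those counters without violating the finite coordinates of $\vec s_j^{\mathrm{in}}$ correspond exactly to pumping witnesses. The existence of such a cycle is equivalent to every one of the added counters being unbounded in the VASS started at state $\vec s_j$: this is an instance of place boundedness, which is known to be \textsc{ExpSpace}-complete after Rackoff and its refinements~\citep{blockelet11,demri-jcss13}. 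Backward pumpability of $\vec s_j^{\mathrm{out}}$ is handled symmetrically by reversing the actions, so it fits in the same complexity class.

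For the other two conditions, which ask about $\sup\vec X_j$, $\sup\vec Y_j$, and $\sup\Psi_j(e)$, I would observe that $L_\xi$ is the solution set of a system of linear equations and inequalities over $\mathbb{N}$ whose size is polynomial in $\xi$. The perfectness conditions then amount to deciding, coordinate by coordinate (resp.\ edge by edge), whether a designated variable is unbounded in the integer polyhedron defined by $L_\xi$: the finite coordinates must agree with $\vec s_j^{\mathrm{in}}$, $\vec s_j^{\mathrm{out}}$, and the coordinates outside $F_j^{\mathrm{in}}$, $F_j^{\mathrm{out}}$, as well as every $\psi_j(e)$, must be unbounded in $L_\xi$. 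Unboundedness of a single variable over the integer solutions of a linear system is decidable in \textsc{NP} by standard integer programming bounds (it suffices to exhibit a polynomially bounded rational solution in the recession cone together with a polynomially bounded solution to the original system), so all of these checks together fit in \textsc{NP}.

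The main subtlety to handle carefully is the encoding step for pumpability: one must faithfully mirror the condition ${\downarrow}\vec s = \bigcup_n {\downarrow}\vec x_n$ by using separate counters for the $\omega$-components of $\vec s_j^{\mathrm{in}}$, and ensure the finite coordinates of $\vec s_j$ are preserved along the pumping cycle. Once this reduction is in place, the complexity bound follows by combining the \textsc{ExpSpace} check for pumpability with the \textsc{NP} checks for unbounded solution components, giving the stated exponential space upper bound.
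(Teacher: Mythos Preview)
Your approach is essentially the paper's: reduce the pumpability checks to the place boundedness problem for VAS(S), which is in \textsc{ExpSpace} by~\cite{blockelet11,demri-jcss13}, and handle the remaining conditions on $L_\xi$ by testing unboundedness of individual variables in a linear system over $\setN$, which is in \textsc{NP}. One small imprecision worth tightening: the existence of a single pumping cycle $\sigma_+$ corresponds to \emph{simultaneous} unboundedness of all the added counters at state $\vec s_j$, not merely to each counter being individually unbounded along possibly different runs; however, simultaneous place unboundedness is likewise in \textsc{ExpSpace}, so the overall bound is unaffected.
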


\subsection{Run Ideals}
We have seen that the downward closed set ${\downarrow}\runs$ can be
decomposed as a finite union of ideals $I_{\xi_\rho}$ where $\xi_\rho$
is the marked witness graph sequence associated to $\rho$. By the
following lemma, this implies that ${\downarrow}\runs$ can be
represented using a finite set of perfect marked witness graph
sequences.
\begin{lemma}\label{lem:perfectxirho}
  The marked witness graph sequence $\xi_\rho$ is perfect for every
  run~$\rho$.
\end{lemma}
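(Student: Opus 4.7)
The plan is to unfold the three perfectness conditions of \autoref{def:perfect} for each marked witness graph $M_{\gamma_j}$ appearing in $\xi_\rho$, deriving them from the already established properties of the transformer subrelations $\vec P_j$, from the saturating intraproductions of \autoref{sec-transformer}, and from the periodicity of the coupling relation $\vec R$ introduced in \autoref{sub:pirho}. The first condition---pumpability of $\vec s^{\mathrm{in}}_{\gamma_j}$ and $\vec s^{\mathrm{out}}_{\gamma_j}$ by $G_{\gamma_j}$---is granted immediately by \autoref{lem:pump}, so the bulk of the work consists in producing enough elements of $L_{\xi_\rho}$ to witness the two supremum conditions.

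For that I would use the fact that every tuple $((\vec u_0,\vec v_0),\ldots,(\vec u_k,\vec v_k)) \in \vec R$ yields a concatenated run
\[
\vec c_0+\vec u_0 \xrightarrow{\sigma_0} \vec c_0+\vec v_0 \xrightarrow{\vec a_1} \vec c_1+\vec u_1 \cdots \xrightarrow{\sigma_k} \vec c_k+\vec v_k
\]
of $\Omega_{\xi_\rho}$ whose $j$-th segment lies in $\Omega_{\gamma_j}$; letting $\vec x_j \eqdef \vec c_j+\vec u_j$, $\vec y_j \eqdef \vec c_j+\vec v_j$, and $\psi_j$ be the Parikh image of $\sigma_j$ along its cycle in $G_{\gamma_j}$ then produces an element of $L_{\xi_\rho}$. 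To secure $\sup \vec X_j = \vec s^{\mathrm{in}}_{\gamma_j}$ and $\sup \vec Y_j = \vec s^{\mathrm{out}}_{\gamma_j}$, I would exhibit a single tuple in $\vec R$ in which every $(\vec u_j, \vec v_j)$ saturates $(F^{\mathrm{in}}_{\gamma_j}, F^{\mathrm{out}}_{\gamma_j})$: for each $(j,i)$ with $i \notin F^{\mathrm{in}}_{\gamma_j}$ (resp.\ $i \notin F^{\mathrm{out}}_{\gamma_j}$) the definition of $F^{\mathrm{in}}_{\gamma_j}$ (resp.\ $F^{\mathrm{out}}_{\gamma_j}$) yields a tuple of $\vec R$ whose $j$-th component has $\vec u_j(i) > 0$ (resp.\ $\vec v_j(i) > 0$), and summing the finitely many such witnesses preserves membership in $\vec R$ by periodicity. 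Scaling this saturating tuple by $n \in \setN$ yields elements of $L_{\xi_\rho}$ whose $j$-th components keep the $F^{\mathrm{in}}_{\gamma_j}$- (resp.\ $F^{\mathrm{out}}_{\gamma_j}$-) coordinates fixed at $\vec s^{\mathrm{in}}_{\gamma_j}$ (resp.\ $\vec s^{\mathrm{out}}_{\gamma_j}$) while driving the remaining coordinates arbitrarily high, giving exactly the required suprema. The third condition is handled in the same spirit: for every $e \in E_{\gamma_j}$ some tuple of $\vec R$ has $\psi_j(e) > 0$, so summing the corresponding witnesses and scaling forces every $\psi_j(e)$ to $\omega$.

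The main obstacle is the requirement that saturation and edge coverage hold \emph{simultaneously and in a coupled way} across all indices $j$: the chain constraint $\vec v_j = \vec u_{j+1}$ built into $\vec R$ forbids any independent choice of the pairs $(\vec u_j, \vec v_j)$. I would circumvent this by working directly inside $\vec R$ rather than within the individual $\vec P_j$, exploiting that $\vec R$ is itself periodic---a direct consequence of the periodicity of each $\transformer{\vec c_j}$---so that local witnesses for saturation and for the presence of each edge can be aggregated into one global tuple without ever breaking the compatibility required by $\vec R$.
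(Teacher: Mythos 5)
Your proof is correct and follows essentially the same route as the paper's: pumpability comes from \autoref{lem:pump}, and the supremum conditions are obtained by mapping tuples of $\vec R$ (with Parikh images of the witnessing cycles) into $L_{\xi_\rho}$ and scaling by $n$ via periodicity. The only cosmetic difference is that you aggregate all local saturation/edge witnesses into one global tuple of $\vec R$ before scaling, whereas the paper completes each local witness into a tuple of $\vec R$ and treats each unbounded coordinate of $\vec X_j$, $\vec Y_j$, $\Psi_j$ separately, which suffices since the suprema are taken componentwise.
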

\begin{proof}
  By \autoref{lem:pump}, for all $j$,
  $\vec{s}^{\mathrm{in}}_{\gamma_j}$ and
  $\vec{s}^{\mathrm{in}}_{\gamma_j}$ are resp.\  forward and
  backward pumpable by $G_{\gamma_j}$.  %

  Regarding the conditions on $L_{\xi_\rho}$, %
  for every tuple
  $((\vec{u}_0,\vec{v}_0),\ldots,(\vec{u}_k,\vec{v}_k))$ in $\vec R$,
  every sequence family $(\sigma_j)_{1\leq j\leq k}$
  in $\vec{A}^*$ such that
  $\rho_j\eqdef(\vec{c}_j+\vec{u}_j\xrightarrow{\sigma_j}\vec{c}_j+\vec{v}_j)$,
  and every $n\in\setN$, we observe that
  $$((\vec{c}_0+n\vec{u}_0,n\psi_0,\vec{c}_0+n\vec{v}_0),\ldots,(\vec{c}_k+n\vec{u}_k,n\psi_k,\vec{c}_k+n\vec{v}_k))$$
  is in $L_{\xi_\rho}$ where $\psi_j{:}\,E_j\rightarrow\setN$ is the Parikh
  image of the cycle $\pi_{\gamma_j}(\rho_j)$ on $\vec{s}_j$ in
  $G_j$. 
  In particular, if $\vec{s}_j^{\mathrm{in}}(i)=\omega$ for some $i\in
  F^{\mathrm{in}}_{\gamma_j}$ and some $0\leq j\leq k$, then there exists
  $(\vec{u}_j,\vec{v}_j)\in\vec{P}_j$ such that $\vec{u}_j(i)>0$. By
  completing this pair as a tuple
  $((\vec{u}_0,\vec{v}_0),\ldots,(\vec{u}_k,\vec{v}_k))$ in $\vec R$,
  we deduce that
  $\sup\vec{X}_j(i)=\omega$.  Thus
  $\sup\vec{X}_j=\vec{s}_{\gamma_j}^{\mathrm{in}}$, and we get similarly
  $\sup\vec{Y}_j=\vec{s}_{\gamma_j}^{\mathrm{out}}$ and
  $\sup\Psi_j(e)=\omega$ for every $e\in E_j$.
  Thus $\xi_\rho$ is perfect.
\end{proof}

\begin{theorem}\label{thm:pwgsdecompositon}
  For any perfect marked witness graph sequence $\xi$,
  $I_\xi\subseteq{\downarrow}\runs$.  Moreover, there exists a \emph{finite}
  set $\Xi$ of perfect marked witness graph sequences such that
  $${\downarrow}\runs=\bigcup_{\xi\in\Xi}I_\xi\;.$$
\end{theorem}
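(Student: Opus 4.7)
The first claim is immediate from Lemma~\ref{thm:connected}: when $\xi$ is perfect, $I_\xi$ is in the adherence of $\runs$, so $I_\xi = {\downarrow}\Delta$ for some directed $\Delta \subseteq \runs$, whence $I_\xi \subseteq {\downarrow}\runs$.

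For the second claim, the plan is to extract $\Xi$ from a finite basis of $\runs$. Since $(\preruns, \unlhd)$ is a wqo, obtained by applying Dickson's and Higman's lemmas to $\confs \times \vec A \times \confs$, any subset of $\preruns$ inherits the wqo property; in particular, the minimal elements of $\runs$ form a finite antichain $B$ with $\runs = \bigcup_{\rho \in B}({\uparrow}\rho \cap \runs)$. Setting $\Xi \eqdef \{\xi_\rho \mid \rho \in B\}$, each $\xi_\rho$ is perfect by Lemma~\ref{lem:perfectxirho}, and the first claim gives $\bigcup_{\xi \in \Xi} I_\xi \subseteq {\downarrow}\runs$. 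Conversely, taking the downward-closure of the identity on $\runs$ and distributing over the finite union, the reverse inclusion reduces to showing ${\uparrow}\rho \cap \runs \subseteq I_{\xi_\rho}$ for every $\rho \in B$.

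The key step is thus to verify that every $\rho' \in {\uparrow}\rho \cap \runs$ belongs to $I_{\xi_\rho} = {\downarrow}\Omega_{\xi_\rho}$ (using Lemma~\ref{thm:connected} on the perfect $\xi_\rho$). Unfolding the embedding $\rho \unlhd \rho'$ along the configurations $\vec c_0, \ldots, \vec c_k$ of $\rho$ exhibits $\rho'$ in the form
\[
\vec c_0 + \vec u_0 \xrightarrow{\sigma_0} \vec c_0 + \vec v_0 \xrightarrow{\vec a_1} \vec c_1 + \vec u_1 \xrightarrow{\sigma_1} \vec c_1 + \vec v_1 \cdots \xrightarrow{\vec a_k} \vec c_k + \vec u_k \xrightarrow{\sigma_k} \vec c_k + \vec v_k
\]
with $\vec u_0 = \vec v_k = \vec 0$. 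Each infix $\sigma_j$ witnesses $\vec u_j \transformer{\vec c_j} \vec v_j$, so the tuple $((\vec u_j, \vec v_j))_{0 \leq j \leq k}$ lies in $\vec R$ and each $(\vec u_j, \vec v_j)$ lies in $\vec P_j$; hence each piece $\vec c_j + \vec u_j \xrightarrow{\sigma_j} \vec c_j + \vec v_j$ belongs to $\Omega_{\gamma_j} \subseteq \Omega_{M_{\gamma_j}}$ by Lemma~\ref{lem:witness}. The main subtlety I anticipate is verifying that the action $\vec a_j$ linking consecutive blocks matches the partial transition $e_j = (\vec{s}_{\gamma_{j-1}}^{\mathrm{out}}, \vec a_j, \vec{s}_{\gamma_j}^{\mathrm{in}})$ appearing as the atom $A_j$ in $I_{\xi_\rho}$: this should follow because the projections of $\vec c_{j-1} + \vec v_{j-1}$ and $\vec c_j + \vec u_j$ onto $F_{\gamma_{j-1}}^{\mathrm{out}}$ and $F_{\gamma_j}^{\mathrm{in}}$ are precisely $\vec{s}_{\gamma_{j-1}}^{\mathrm{out}}$ and $\vec{s}_{\gamma_j}^{\mathrm{in}}$, as guaranteed by the saturating pairs from Section~\ref{sec-transformer}.
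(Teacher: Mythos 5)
Your proof is correct and follows essentially the same route as the paper: a finite basis $B$ of $(\runs,\unlhd)$, the family $\Xi=\{\xi_\rho\mid\rho\in B\}$ made perfect by Lemma~\ref{lem:perfectxirho}, one inclusion via adherence (Lemma~\ref{thm:connected}), and the other via the block decomposition of any $\rho'\in{\uparrow}\rho\cap\runs$ into pieces of the $\Omega_{\gamma_j}$ linked by the partial transitions $e_j$ --- you in fact spell out the inclusion that the paper only states as an observation around \eqref{equ:glob}. One cosmetic remark: the matching of the link configurations with $\vec s^{\mathrm{out}}_{\gamma_{j-1}}$ and $\vec s^{\mathrm{in}}_{\gamma_j}$ follows directly from the definition of $F^{\mathrm{out}}_{\gamma_{j-1}}$ and $F^{\mathrm{in}}_{\gamma_j}$ (every pair of $\vec P_j$ vanishes there), not from the existence of saturating pairs, which are only needed for perfectness.
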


\section{The Decomposition Algorithm}
\label{sec:kosaraju}
\newcommand{\improve}[1]{\operatorname{dec}(#1)}

We explain succinctly in this section how the classical KLMST
algorithm of \citeauthor{mayr81}, \citeauthor{kosaraju82}, and
\citeauthor{lambert92} computes the decomposition of
${\downarrow}\runs$ into ideals.  By \autoref{thm:pwgsdecompositon}
these ideals can be presented as finite families of perfect marked
witness graph sequences.

The KLMST algorithm operates along the same general lines as the
abstraction refinement procedure of~\autoref{sec-refinement}.  It
refines successively a finite family $\Xi_n$ of marked witness graph
sequences from $\vec{x}$ to $\vec{y}$ while maintaining as an
invariant
\ifomitproofs
\begin{align}
  \runs&=\bigcup_{\xi\in\Xi_n}\Omega_\xi
  \shortintertext{for all $n$.  Because ${\downarrow}\Omega_\xi\subseteq
    I_\xi$ for all $\xi$, this implies}
    {\downarrow}\runs&\subseteq D_n\eqdef\bigcup_{\xi\in \Xi_n}I_\xi
\end{align}
as in the abstraction refinement procedure.

If every marked witness graph sequence in $\Xi_n$ is perfect (which is
decidable by~\autoref{lem-dec-perf}), the algorithm stops, since by
\autoref{thm:connected}
\begin{align}
  {\downarrow}\runs&=\bigcup_{\xi\in \Xi_n}I_\xi\;.
\end{align}
\else
\begin{align}
  \runs&=\bigcup_{\xi\in\Xi_n}\Omega_\xi
  \shortintertext{for all $n$.  Because ${\downarrow}\Omega_\xi\subseteq
    I_\xi$ for all $\xi$, this implies}
    {\downarrow}\runs&\subseteq D_n\eqdef\bigcup_{\xi\in \Xi_n}I_\xi
  \shortintertext{as in the
    abstraction refinement procedure.\newline\hspace*\parindent
    If every marked witness graph sequence in
    $\Xi_n$ is perfect (which is decidable by~\autoref{lem-dec-perf}),
    the algorithm stops since by \autoref{thm:connected}}
  {\downarrow}\runs&=\bigcup_{\xi\in \Xi_n}I_\xi\;.
\end{align}
\fi
Otherwise, the family $\Xi_n$ is decomposed into a new family
$\Xi_{n+1}$ as follows: we pick a marked witness graph sequence
$\xi\in\Xi_n$ that is not perfect.   The imperfectness of $\xi$
provides a way of computing a new finite family
$\improve{\xi}$ of marked witness graph sequences from $\vec{x}$
to $\vec{y}$ (see \autoref{sub-mwgs-dec}) with
\begin{align}
  \Omega_\xi&=\bigcup_{\xi'\in\improve{\xi}}\Omega_{\xi'}\;.
  \shortintertext{The family $\Xi_{n+1}$ is then defined as}
  \Xi_{n+1}&\eqdef(\Xi_n\moins\{\xi\})\cup\improve{\xi}\;.\label{eq-kmlst-refine}
\end{align}
Termination is ensured through a ranking function relating $\xi$ with
each sequence in $\improve{\xi}$, see \autoref{sub-rank}.
Th\ifomitproofs is\else e KLMST algorithm\fi\ shows:
\begin{theorem}[Decomposition Theorem]\label{thdec}
  The ideal decomposition of ${\downarrow}\runs$ inside $\preruns$ is
  effectively computable.
\end{theorem}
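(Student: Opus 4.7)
The plan is to turn the abstraction-refinement template of \autoref{sec-refinement} into an oracle-free algorithm by refining finite families $\Xi_n$ of marked witness graph sequences (MWGSs) rather than arbitrary prerun ideals. I would initialise $\Xi_0 \eqdef \{\xi_0\}$, where $\xi_0$ consists of a single marked witness graph whose witness graph has one state with every coordinate equal to $\omega$ (so $F = \emptyset$) and every action of $\vec A$ as a self-loop, with input $\vec{s}^{\mathrm{in}}_0 = \vec x$ and output $\vec{s}^{\mathrm{out}}_0 = \vec y$. Then $\Omega_{\xi_0} = \runs$, so the invariant $\runs = \bigcup_{\xi \in \Xi_n} \Omega_\xi$ holds at $n = 0$.

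At step $n$, decide perfectness of each $\xi \in \Xi_n$ via \autoref{lem-dec-perf}. If every $\xi$ is perfect, \autoref{thm:connected} yields $I_\xi = {\downarrow}\Omega_\xi$, so by the invariant and \autoref{thm:pwgsdecompositon} we obtain ${\downarrow}\runs = \bigcup_{\xi \in \Xi_n} I_\xi$; discarding ideals included in others produces the canonical ideal decomposition and the algorithm halts. Otherwise some $\xi$ violates a clause of \autoref{def:perfect} at some position $j$: either pumpability of $\vec{s}^{\mathrm{in}}_j$ or $\vec{s}^{\mathrm{out}}_j$ fails, or $\sup\vec X_j$, $\sup\vec Y_j$, or $\sup\Psi_j(e)$ is strictly smaller than the declared $\omega$ at some coordinate or edge. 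In each case the offending quantity is actually bounded across $\Omega_\xi$, and enumerating its finitely many attained values (possibly splitting the states of $\vec S_j$ accordingly) yields a finite family $\improve{\xi}$ of strictly refined MWGSs with $\Omega_\xi = \bigcup_{\xi' \in \improve{\xi}} \Omega_{\xi'}$; setting $\Xi_{n+1}$ as in \eqref{eq-kmlst-refine} preserves the invariant.

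The main obstacle is termination. Following Lambert, I would equip MWGSs with a rank $r(\xi)$ in a well-ordered set and verify that $r(\xi') < r(\xi)$ for every $\xi' \in \improve{\xi}$. A suitable rank aggregates, for each position $j$, the number of $\omega$-coordinates occurring in $\vec{s}^{\mathrm{in}}_j$, $\vec{s}^{\mathrm{out}}_j$, and in the partial configurations labelling the states of $\vec S_j$, together with the number of edges of $G_j$; each refinement strictly removes an $\omega$ or caps a previously unbounded quantity at a finite value, and never reintroduces $\omega$'s elsewhere. Since the rank values lie in a finite lexicographic product of $\omega$'s the order is well-founded, hence $\Xi_0, \Xi_1, \ldots$ stabilises after finitely many refinements, at which point every remaining MWGS is perfect. \autoref{thm:pwgsdecompositon} then guarantees that the resulting family is the sought effective ideal decomposition of ${\downarrow}\runs$, proving the theorem.
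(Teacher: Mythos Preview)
Your overall strategy---maintain a finite family $\Xi_n$ of marked witness graph sequences with $\runs=\bigcup_{\xi\in\Xi_n}\Omega_\xi$, refine the non-perfect ones, and halt when all are perfect---is exactly the paper's KLMST algorithm, and your appeals to \autoref{lem-dec-perf}, \autoref{thm:connected}, and \autoref{thm:pwgsdecompositon} are the right ones. The genuine gap is in the termination argument.

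You claim that the rank lies in ``a finite lexicographic product of $\omega$'s'' and that a refinement ``never reintroduces $\omega$'s elsewhere''. But the decomposition step does not merely refine $M_j$ in place. In the unpumpable case the product of $G_j$ with the bounding automaton is in general not strongly connected and must be broken into its strongly connected components; in the edge-bounded case the deleted edge may be traversed up to $c$ times. In both cases $M_j$ is replaced by a \emph{chain} of several marked witness graphs, so the length $k$ of $\xi'\in\improve{\xi}$ can be strictly larger than that of $\xi$. A fixed-arity lexicographic product of copies of $\omega$ cannot host a rank whose number of components is unbounded, and the \emph{total} count of $\omega$-coordinates summed over all positions may well increase, so the rank you describe neither lives where you say it does nor decreases.

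The paper resolves this with a multiset-style rank: to each $M_j$ it attaches
$\beta_{M_j}=\omega^2\cdot(d-|F_j|)+\omega\cdot|E_j|+(2d-|F^{\mathrm{in}}_j|-|F^{\mathrm{out}}_j|)<\omega^3$,
and to $\xi$ the natural sum $r(\xi)=\bigoplus_j\omega^{\beta_{M_j}}<\omega^{\omega^3}$. In every refinement case the single $M_j$ is replaced by finitely many graphs each with strictly smaller $\beta$; the natural sum---equivalently the multiset ordering on the $\beta_{M_j}$---therefore strictly decreases even when the sequence grows longer. This is the missing ingredient in your termination argument.
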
\noindent
Because ${\downarrow}\runs=\emptyset$ if and only if
$\runs=\emptyset$, this yields:
\begin{theorem}[\citet{mayr81,kosaraju82,lambert92}]
  VAS reachability is decidable.
\end{theorem}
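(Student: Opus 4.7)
The plan is to instantiate the abstraction refinement procedure from \autoref{sec-refinement}, using perfect marked witness graph sequences as the effective representation of run ideals guaranteed by \autoref{thm:pwgsdecompositon}. I would maintain a finite family $\Xi_n$ of marked witness graph sequences preserving the invariant $\runs=\bigcup_{\xi\in\Xi_n}\Omega_\xi$, and initialise with $\Xi_0=\{\xi_0\}$ where $\xi_0$ consists of a single marked witness graph with input $\vec{x}$, output $\vec{y}$, and a single-state graph (whose state has no finite coordinates, i.e.\ $F=\emptyset$) carrying one self-loop per action in $\vec{A}$; then $\Omega_{\xi_0}=\runs$ by definition, and $I_{\xi_0}=\preruns\cap({\downarrow}\vec{x}\times\cdots\times{\downarrow}\vec{y})$ is our initial coarse over-approximation.

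At each iteration I would test whether every $\xi\in\Xi_n$ is perfect using \autoref{lem-dec-perf}. If so, \autoref{thm:connected} yields $I_\xi={\downarrow}\Omega_\xi$ for each $\xi$, whence $\bigcup_{\xi\in\Xi_n}I_\xi={\downarrow}\runs$ by the maintained invariant; discarding every $I_\xi$ included in another $I_{\xi'}$ then gives the canonical decomposition from \autoref{fc-ideals}, using the effective inclusion tests for prerun ideals provided in \autoref{sub-effectiveness}. Otherwise I would pick an imperfect $\xi$ and build $\improve{\xi}$: each failed perfectness condition—a coordinate of $\vec{s}_j^{\mathrm{in}}$ or $\vec{s}_j^{\mathrm{out}}$ not attained as a supremum over $L_\xi$, an edge of $E_j$ whose Parikh count stays bounded, or a non-pumpable endpoint—witnesses a strict bound satisfied along every run of $\Omega_\xi$, which can be exploited to refine $\xi$ into a finite family of strictly more constrained sequences whose $\Omega$-sets still cover $\Omega_\xi$, and update as in~\eqref{eq-kmlst-refine}.

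The main obstacle is termination. What is needed is a well-founded rank on marked witness graph sequences such that every member of $\improve{\xi}$ is strictly smaller than $\xi$. A workable rank combines lexicographically, for instance: (i) the multiset over $j$ of the number of $\omega$-coordinates appearing in $\vec{s}_j^{\mathrm{in}}$, $\vec{s}_j$, and $\vec{s}_j^{\mathrm{out}}$, which strictly decreases whenever a formerly unbounded coordinate is frozen to a finite value; and (ii) combinatorial measures of the underlying graphs (numbers of states and edges) under a multiset ordering, which shrink whenever bounded Parikh counts are witnessed. Verifying that $\improve{\cdot}$ strictly decreases this rank, by a case analysis over each failure mode of perfectness, is the technical crux and is precisely where the non-primitive-recursive blow-up of the KLMST algorithm originates; by contrast the effectiveness content of each individual step (deciding perfectness, computing $\improve{\xi}$, simplifying $\Xi_n$) rests on the elementary effectiveness properties collected in \autoref{sub-effectiveness}. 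Once termination is proved, the algorithm outputs a family $\Xi$ as in \autoref{thm:pwgsdecompositon}, yielding the desired effective ideal decomposition of ${\downarrow}\runs$ inside $\preruns$.
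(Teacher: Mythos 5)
You follow the paper's route exactly: same initial family $\Xi_0$, same invariant $\runs=\bigcup_{\xi\in\Xi_n}\Omega_\xi$, same perfectness test (\autoref{lem-dec-perf}) and stopping condition (\autoref{thm:connected}), and the same refinement of an imperfect $\xi$ as in \autoref{sub-mwgs-dec}. The one genuine gap is in the termination argument, which you rightly identify as the crux but then resolve with a ranking function that does not in fact decrease. When an imperfect component $M_j$ is refined, the graph $G_j$ is replaced by a \emph{sequence} of marked witness graphs, and the intermediate members of that sequence receive fresh input and output marks taken from states of the (refined) graph; the finite coordinates of such a fresh mark are only $F_j\cup\{i\}$ (unpumpable case) or even just $F_j$ (edge-bounded case), which may be a much smaller set than the original $F_j^{\mathrm{in}}$ or $F_j^{\mathrm{out}}$. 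Consequently the number of $\omega$-coordinates appearing in the marks of the new graphs can strictly \emph{exceed} the corresponding count for $M_j$: in the edge-bounded case, for instance, the state set and $F_j$ are unchanged and only $|E_j|$ drops, while an intermediate mark may go from fully specified to having $d-|F_j|$ many $\omega$'s. Since you make the multiset of $\omega$-coordinate counts (lumping $\vec s_j^{\mathrm{in}}$, $\vec s_j$ and $\vec s_j^{\mathrm{out}}$ together) the dominant lexicographic component, your rank can strictly increase under a decomposition step, and termination is not established.

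The repair is the ordering of \eqref{rank-graph}: each marked witness graph is ranked by the lexicographic triple $(d-|F_j|,\ |E_j|,\ 2d-|F_j^{\mathrm{in}}|-|F_j^{\mathrm{out}}|)$ --- the $\omega$-count of the \emph{middle} state comes first, and the $\omega$-count of the marks comes \emph{last}, strictly after the edge count --- and a sequence is ranked by the multiset of these per-graph ordinals, i.e.\ an ordinal below $\omega^{\omega^3}$ as in \eqref{rank-sequence}. With that priority, the unpumpable case strictly decreases the first coordinate of every new graph, the edge-bounded case preserves the first coordinate and decreases the second, and the input/output-bounded case preserves the first two and decreases the third; hence every member of the refinement is strictly below $M_j$ and the multiset strictly decreases. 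The rest of your argument (effectiveness of each individual step, correctness of the output via \autoref{thm:connected} and \autoref{thm:pwgsdecompositon}, and the final reduction of reachability to emptiness of the computed decomposition) is sound and matches the paper.
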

\ifomitproofs\relax\else\todo{VAS languages?}\fi

\subsection{Initial Family}
\label{sub-init-fam}
The KLMST algorithm starts with an initial family $\Xi_0$ containing a
single marked witness graph sequence $\xi_0$, itself reduced to a
single marked witness graph $M\eqdef(\vec{x},G,\vec{y})$ where
$G\eqdef(\vec{S},E,\vec{s})$ is defined by
$\vec{s}=(\omega,\ldots,\omega)$, $\vec{S}=\{\vec{s}\}$, and
$E=\vec{S}\times\vec{A}\times\vec{S}$.  Note that $\Omega_{\xi_0}=\runs$
and
\begin{equation}
  {\downarrow}\runs\subseteq D_0={\downarrow}\vec
x\times(\confs\times\vec A\times\confs)^\ast\times{\downarrow}\vec
y\;.
\end{equation}

\subsection{Decomposition} 
\label{sub-mwgs-dec}
Let us fix a marked witness graph sequence $\xi$ that is not perfect,
and let us recall how the finite family $\improve{\xi}$ is obtained in
the KLMST algorithm.  We assume that
$$\xi~=~M_0, \vec{a}_1, M_1,\ldots \vec{a}_k,M_k\;,$$ where
$M_0,\ldots,M_k$ are marked witness graphs, and
$\vec{a}_1,\ldots,\vec{a}_k$ are actions in $\vec{A}$. In the sequel,
$M_j$ denotes the marked witness graph
$(\vec{s}_j^{\mathrm{in}},G_j,\vec{s}_j^{\mathrm{out}})$ and $G_j$ is
the witness graph $(\vec{S}_j,E_j,\vec{s}_j)$. We let
$F_j^{\mathrm{in}}$, $F_j$, $F_j^{\mathrm{out}}$ be respectively the
finite components of $\vec{s}_j^{\mathrm{in}}$, $\vec{s}_j$
and~$\vec{s}_j^{\mathrm{out}}$.

\begin{remark}\label{rk-termination}
  The main difference between the KLMST algorithm and the abstraction
  refinement procedure from \autoref{sec-refinement} lies in the
  decomposition step.  Because some of the ideals $I_\xi$ denoted by
  the various sequences $\xi$ in $\Xi_n$ might be comparable, a
  decomposition step~\eqref{eq-kmlst-refine} might leave $D_n=D_{n+1}$
  unchanged.  \ifomitproofs\relax\else This explains why we cannot use
  \autoref{lem:downwardstat} to prove termination but rely instead on
  a ranking function in \autoref{sub-rank}. It would be interesting
  to provide a variant of the KLMST decomposition algorithm that
  follows more closely the abstraction refinement procedure.\fi
\end{remark}

\subsubsection{Unpumpable Case}\label{imp-pump}
If $\vec{s}_j^{\mathrm{in}}$ is not forward pumpable by $G_j$, the
algorithm of \citet{karp69} provides an effective way of computing an
index $i\not\in F_j$ and a constant $c$ such that configurations
occurring in any run $\rho$ in $\Omega_{M_j}$ are bounded by $c$ on %
component $i$.  The same property holds if symmetrically
$\vec{s}_j^{\mathrm{out}}$ is not backward pumpable by~$G_j$.

In such cases the graph $G_j$ can be synchronised with a finite state
automaton $\mathcal A$ with states in $S=\{0,\ldots,c\}$ and
transitions of form $(n,\vec{a},m)\in S\times\vec{A}\times S$
satisfying $m=\vec{a}(i)+n$.  This synchronisation might produce a
graph that is no longer strongly connected, but it can be decomposed
into strongly connected components.  This way we obtain a finite
family $\improve{\xi}$ of marked witness graph sequences where the
graph $G_j$ in $\xi$ is replaced by sequences of subgraphs of
$G_j\times\mathcal A$ where the finite components $F_j$ of $G_j$ are replaced
by a larger set $F_j\cup\{i\}$.

\subsubsection{Input/Output Bounded Solutions}\label{imp-io}
Now, let us assume that $\xi$ is not perfect due to the conditions on
the set of solutions $L_\xi$.  Following the notations introduced in
\autoref{def:perfect}, recall that we can check in nondeterministic
polynomial time whether $\sup\vec{X}_j(i)<\omega$ for a component $i$
such that
$\vec{s}_j^{\mathrm{in}}(i)=\omega$. If
it is not the case, we obtain a component $i\not\in F^{\mathrm{in}}$
such that $\sup\vec{X}_j(i)=c$ is finite.  Such a bound is computable in
deterministic polynomial time.  Now, just observe that component $i$ of
$\vec{s}_j^{\mathrm{in}}$ can be replaced by all the possible values
up to $c$.  We obtain in this way a finite family
$\improve{\xi}$ where the set $F^{\mathrm{in}}_j$ is replaced by
$F^{\mathrm{in}}_j\cup\{i\}$. The same construction can be applied
symmetrically when $\sup\vec{Y}_j$ does not match
$\vec{s}_j^{\mathrm{out}}$.

\subsubsection{Edge Bounded Solutions}\label{imp-tr}
Finally, assume that $\{\psi_j(e) \mid \psi_j\in \Psi_j\}$ is bounded.  Once
again, we can effectively compute in deterministic polynomial time an upper bound
$c$ of this set.  Notice that in this case, every run $\rho_j\in
\Omega_{M_j}$ labelled by a word $\sigma$ provides a cycle on
$\vec{s}_j$ in $G_j$ in such a way that $e$ occurs at most $c$ times.
By removing from $G_j$ the edge $e$ we obtain a graph that may not be
strongly connected any more. However, by computing strongly connected
components, we obtain in this way a finite family $\improve{\xi}$ such
that the graph $G_j$ has been replaced by sequences of up to $c$
graphs, each with a set of edges included in $E_j\moins\{e\}$.

\subsection{Ranking Function}
\label{sub-rank}
We present the usual termination argument for the KLMST algorithm by
explicitly giving a ranking function $r$ from marked witness graph
sequences into an ordinal, such that $r(\xi)>r(\xi')$ for all $\xi'$
in $\improve{\xi}$.

\subsubsection{Ordinals}
Rather than the usual multiset ordering over triples in $\+N^3$
ordered lexicographically used in the KLMST algorithm, we use an
equivalent formulation using ordinals.  Recall that an ordinal
$\alpha<\varepsilon_0$ can be written in Cantor normal form (CNF) as
$\alpha=\omega^{\alpha_1}+\cdots+\omega^{\alpha_n}$ where
$\alpha>\alpha_1\geq\cdots\geq\alpha_n$, or equivalently as
$\alpha=\omega^{\alpha_1}\cdot c_1+\cdots+\omega^{\alpha_n}\cdot c_n$
with $\alpha>\alpha_1>\cdots>\alpha_n$ and finite $c_i$'s.
\ifomitproofs\relax\else\par\fi
One can compare two ordinals
$\alpha=\omega^{\alpha_1}+\cdots+\omega^{\alpha_n}$ and
$\beta=\omega^{\beta_1}+\cdots+\omega^{\beta_m}$ using their CNFs:
$\alpha<\beta$ if and only if there exists $k\leq m$ such that
$\alpha_j=\beta_j$ for all $1\leq j<k$ with $j\leq n$, and $n<k$ or
$\alpha_k<\beta_k$.
\ifomitproofs\relax\else\par\fi
The natural sum of two
ordinals $\alpha=\omega^{\alpha_1}+\cdots+\omega^{\alpha_n}$ and
$\beta=\omega^{\beta_1}+\cdots+\omega^{\beta_m}$ is defined as
$\alpha\oplus\beta\eqdef\omega^{\gamma_1}+\cdots+\omega^{\gamma_{n+m}}$
such that $\gamma_1\geq\cdots\geq\gamma_{n+m}$ is a reordering of the
$\alpha_i$'s and~$\beta_j$'s.

\subsubsection{Rank of a Marked Witness Graph}
We associate with a marked witness graph
$M=(\vec{s}^{\mathrm{in}},G,\vec{s}^{\mathrm{out}})$ an ordinal
$\beta_M$ in $\omega^3$ defined as
\begin{equation}\label{rank-graph}
  \beta_M\eqdef \omega^2\cdot(d-|F|)+\omega\cdot|E|+(2d-|F^{\mathrm{in}}|-|F^{\mathrm{out}}|)
\end{equation}
where $G=(\vec{S},E,\vec{s})$, and  $F^{\mathrm{in}}$, $F$,
$F^{\mathrm{out}}$ are respectively the defined components of
$\vec{s}^{\mathrm{in}}$, $\vec{s}$, $\vec{s}^{\mathrm{out}}$.  Note
that this is equivalent to a lexicographic ordering over triples in
$\+N^3$.

\subsubsection{Rank of a Sequence}
We associate with a marked witness graph sequence
$\xi=M_0,\vec{a}_1,M_1,\ldots,\vec{a}_k,M_k$ the ordinal $r(\xi)$
in $\omega^{\omega^3}$ defined by
\begin{equation}\label{rank-sequence}
  r(\xi)\eqdef\bigoplus_{1\leq j\leq k}\omega^{\beta_{M_j}}\;.
\end{equation}
Note that this is equivalent to a multiset ordering over the~$\beta_{M_j}$.

\subsubsection{Termination Argument}\label{sub-term}
By seeing the KLMST algorithm as constructing a tree with $\xi$
labelling the parent node of $\xi'$ if $\xi$ is imperfect and
$\xi'\in\improve{\xi}$, this ranking function shows that the tree has
finite height.  Since the families $\Xi_0$ and $\improve{\xi}$ are
finite, this tree is also of finite degree, and is therefore finite by
K\H{o}nig's Lemma.

\section{Fast-Growing Upper Bounds}
\label{sec-fgh}We establish in this section an $\F{\omega^3}$ upper bound on the
complexity of the KLMST decomposition algorithm, which %
yields the
first upper bound on the complexity of VAS reachability.  Without loss
of generality, we can assume that the actions in $\vec A$ are in
$\{-1,0,1\}^d$.

\subsection{Subrecursive Hierarchies}
\label{sub-subrec}
As noted early on e.g.\ by \citet{muller85}, the complexity of the
decomposition algorithm of \citeauthor{mayr81},
\citeauthor{kosaraju82}, and \citeauthor{lambert92} is not
primitive-recursive.  As a consequence, we have to employ some lesser
known complexity classes in order to express upper bounds on the
running time and space of this algorithm.

\subsubsection{The Hardy Hierarchy}
A convenient tool to this end is found in the \emph{Hardy hierarchy}
of functions.  Given some monotone expansive function
$h{:}\,\+N\to\+N$, this is an ordinal-indexed hierarchy of functions
$(h^\alpha{:}\,\+N\to\+N)_\alpha$ defined by transfinite induction by
\begin{align*}
  h^0(x)&\eqdef x,&h^{\alpha+1}(x)&\eqdef
  h^\alpha(h(x)),&h^\lambda(x)&\eqdef h^{\lambda(x)}(x),
\end{align*}
where $\lambda$ denotes a limit ordinal and $\lambda(x)$ the $x$th
element of its \emph{fundamental sequence}.  The latter is usually
defined for limit ordinals below $\varepsilon_0$ by
\begin{align*}
  (\gamma+\omega^{\beta+1})(x)&\eqdef\gamma+\omega^\beta\cdot(x+1)\;,\\
  (\gamma+\omega^\lambda)(x)&\eqdef\gamma+\omega^{\lambda(x)}\;.
\end{align*}
Observe that $h^k$ for some finite $k$ is the $k$th iterate of $h$.
At index $\omega$, $\omega(x)=x+1$ and thus $h^\omega(x)=h^{x+1}(x)$;
more generally, $h^\alpha$ is a transfinite iteration of the function
$h$, using a kind of diagonalisation to handle limit ordinals.
\begin{example}
  For instance, starting with the successor function $H(x)\eqdef x+1$,
  we see that $H^\omega(x)=H^{x}(x+1)=2x+1$.  The next limit ordinal
  occurs at $H^{\omega\cdot
    2}(x)=H^{\omega+x}(x+1)=H^{\omega}(2x+1)=4x+3$.  Fast-forwarding a
  bit, we get for instance a function of exponential growth
  $H^{\omega^2}(x)=2^{x+1}(x+1)-1$, and later a non-elementary
  function $H^{\omega^3}$, an ``Ackermannian'' non primitive-recursive
  function $H^{\omega^\omega}$, and a ``hyper-Ackermannian'' non
  multiply recursive-function $H^{\omega^{\omega^\omega}}$.
\end{example}

\subsubsection{Complexity Classes}
Although we could derive upper bounds in terms of Hardy functions, it
is more convenient to work with coarser-grained complexity classes.  For
$\alpha>2$, we define respectively the \emph{fast-growing function}
classes $(\FGH\alpha)_\alpha$ of \citet{lob70} and the associated
\emph{fast-growing complexity} classes $(\F\alpha)_\alpha$
of~\citep{schmitz13} by
\begin{align}
  \FGH{<\alpha}&\eqdef\bigcup_{\beta<\omega^\alpha}\CC{FSpace}\big(H^\beta(n)\big)\;,\\
  \F{h,\alpha}&\eqdef\bigcup_{p\in\FGH{<\alpha}}\CC{Space}\big(h^{\omega^\alpha}(p(n))\big)\;,\quad
  \F\alpha\eqdef\F{H,\alpha}\;,
\end{align}
where $\CC{FSpace}(s(n))$ (resp.\ $\CC{Space}(s(n))$) denotes the set
of functions computable (resp.\ problems decidable) in space $O(s(n))$
and $H$ is the successor function $H(x)\eqdef x+1$.  This defines for
instance $\FGH{<\omega}$ as the set of primitive-recursive functions,
and $\F\omega$ as the class of problems that can be solved in
Ackermann time of some primitive-recursive function of their input
size.  Here $\F{\omega^3}$ is not primitive-recursive, but among the
lowest multiply-recursive classes.
\subsection{Length Function Theorems}
\label{sub-lft}
Given some wqo $(X,{\leq})$, let us posit a norm $|.|_X{:}\,X\to\+N$
over $X$ such that $X_{\leq n}\eqdef\{x\in X\mid |x|_X\leq n\}$ is
finite for every $n$.  Given a \emph{control function}
$g{:}\,\+N\to\+N$ which is monotone expansive and some \emph{initial
norm} $n\in\+N$, we say that a sequence $x_0,x_1,\dots$ over $X$
is \emph{$(g,n)$-controlled} if for all $i$, $|x_i|_X\leq g^i(n)$ the
$i$th iterate of $g$.  Then there exists maximal $(g,n)$-controlled
bad sequences over $(X,{\leq})$, and we write $L_{g,X}(n)$ for their
length.

Length function theorems provide upper bounds on this maximal length
$L_{g,X}(n)$.  The upper bounds we use from~\citep{SS2012,schmitz14}
are expressed in terms of another hierarchy of functions called
the \emph{Cicho\'n hierarchy} $(h_\alpha{:}\,\+N\to\+N)_\alpha$.  The
relation with the Hardy hierarchy is that, if a controlled sequence is
of length bounded by some $h_\alpha(x)$ from the Cicho\'n hierarchy,
then the norm of all its elements is bounded by
\begin{equation}
h^{h_\alpha(x)}(x)=h^\alpha(x)
\end{equation}
in the Hardy hierarchy.

For instance, upper bounds for $(\+N^d\times Q,{\leq})$ for some
finite set $Q$, along with the product ordering, can be found
in~\citep[\theoremautorefname~2.34]{SS2012}, where the norm of a
pair $(\vec x,q)$ from $\+N^d\times Q$ is $\max_{1\leq i\leq d}\vec x(i)$:

\begin{fact}[\citep{SS2012}]\label{fact-dickson}
  Let $H(x)\eqdef x+1$ and $n,d>0$.  Then $L_{H,\+N^d\times Q}(n)\leq
  H_{\omega^d\cdot|Q|d}(dn)\leq H_{\omega^{d+1}}(|Q|dn)$.
\end{fact}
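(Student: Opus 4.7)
The strategy is the one developed in~\citep{SS2012} for the Length Function Theorem: reify a bad sequence as an ordinal descent, then bound that descent via the Cicho\'n hierarchy. I would proceed in three steps.

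First, I would determine the maximal order type of $(\+N^d \times Q, \leq)$. By Schmidt's theorem the wqo $(\+N, \leq)$ has ordinal type $\omega$; the natural product is compatible with maximal order types, so $(\+N^d, \leq)$ has type $\omega^d$; adjoining the finite wqo $(Q, =)$ of type $|Q|$ yields type $\omega^d \cdot |Q|$. The extra factor $d$ appearing in the stated exponent arises from the conversion between the infinity norm $|\vec x|_\infty = \max_i \vec x(i)$, which controls the sequence, and the componentwise encoding that underlies the reification.

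Second, I would associate to any $(H, n)$-controlled bad sequence a strictly descending sequence of ordinals $\alpha_0 > \alpha_1 > \cdots$ below $\omega^d \cdot |Q| d$, where the step $\alpha_i \rightarrow \alpha_{i+1}$ is obtained by following the fundamental sequence of $\alpha_i$ at position $H^i(n)$. This is the technical heart of the argument: one defines the residual wqo remaining after the elements $(\vec x_0, q_0), \ldots, (\vec x_i, q_i)$ have been consumed, and shows that its ordinal rank drops by exactly one level of the fundamental-sequence refinement when the next element $(\vec x_{i+1}, q_{i+1})$ is added. The initial rank is bounded by $\omega^d \cdot |Q| d$, evaluated at the initial norm $dn$.

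Third, I would invoke the defining property of the Cicho\'n hierarchy: the number of steps in such a controlled descent starting from an ordinal $\alpha$ with initial argument $m$ is exactly $H_\alpha(m)$. Setting $\alpha = \omega^d \cdot |Q| d$ and $m = dn$ yields the first inequality. The second inequality $H_{\omega^d \cdot |Q| d}(dn) \leq H_{\omega^{d+1}}(|Q| dn)$ is purely ordinal-arithmetical: it follows from monotonicity of the Cicho\'n hierarchy in its subscript, combined with $\omega^d \cdot |Q| d < \omega^{d+1}$ after a mild rescaling of the argument to absorb the multiplicative constants $|Q|$ and $d$. The main obstacle is the reification in the second step; the rest is standard ordinal bookkeeping once the residual-wqo machinery is in place.
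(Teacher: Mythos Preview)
Your plan follows the same route as the paper's proof: compute the maximal order type $\omega^d\cdot|Q|$ of $\+N^d\times Q$, invoke the length-function machinery of~\citep{SS2012} to get a Cicho\'n bound, then tidy up the ordinal index.  Where you drift is in locating the extra factor~$d$.  You place it inside the reification, as though the ordinal descent were directly below $\omega^d\cdot|Q|d$ with steps governed by $H^i(n)$.  That is not how the argument runs.  The reification produces a descent below the maximal order type $\omega^d\cdot|Q|$, but with respect to a \emph{different} base function $h(x)\eqdef x+d=H^d(x)$, chosen so that $h(dx-d+1)\geq dH(x)-d+1$; this yields $L_{H,\+N^d\times Q}(n)\leq h_{\omega^d\cdot|Q|}(dn-d+1)$.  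The factor~$d$ in the subscript only appears afterwards, when one converts from the $h$-Cicho\'n hierarchy back to the $H$-Cicho\'n hierarchy via $h_\alpha(x)\leq H_{\alpha\cdot d}(x)$, which the paper proves by a separate transfinite induction on~$\alpha$.  So the~$d$ is the cost of changing base from $H^d$ to~$H$, not a feature of the initial rank.

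Your treatment of the second inequality also needs care: the Cicho\'n functions $H_\alpha$ are \emph{not} monotone in~$\alpha$ for the plain ordinal ordering, only along fundamental sequences.  The inequality holds because $\omega^{d+1}(|Q|dn)=\omega^d\cdot(|Q|dn+1)$ and $|Q|d\leq |Q|dn+1$ for $n\geq 1$, so $\omega^d\cdot|Q|d$ does lie on the descent path from $\omega^{d+1}$ at argument $|Q|dn$; but this has to be checked rather than asserted as generic monotonicity.
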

\ifomitproofs\relax\else
\begin{proof}
  Let us first recall the definition of the Cicho\'n hierarchy of
  functions for indices $\alpha<\varepsilon_0$~\citep{cichon98}:
  \begin{align*}
  h_0(x)&\eqdef 0,&h_{\alpha+1}(x)&\eqdef
  1+h_\alpha(h(x)),&h^\lambda(x)&\eqdef
  h_{\lambda(x)}(x).  \end{align*}
  
  Consider any control function $g$, dimension $d$, finite set $Q$,
  and initial norm $n$.  By computing the maximal order type
  $\omega^d\cdot|Q|$ of $\+N^d\times Q$, and when provided with a
  function $h$ with $h(dx-d+1)\geq dg(x)-d+1$, we can
  combine \corollaryautorefname~2.24 and \theoremautorefname~2.34
  from~\citep{SS2012} to show that 
  \begin{equation*} 
    L_{g,\+N^d\times Q}(n) \leq h_{\omega^d\cdot |Q|}(dn-d+1)\;.
  \end{equation*}

  Since we are dealing with VAS actions in $\{-1,0,1\}^d$, our control
  function $g$ is $H(x)\eqdef x+1$, and we can choose $h(x)\eqdef
  x+d=H^d(x)$.  The statement then follows from the fact that, for
  such a function $h$ and assuming $d>0$,
  \begin{equation*}\label{eq-Hd}
    h_\alpha(x)\leq H_{\alpha\cdot d}(x)
  \end{equation*}
  for all $\alpha<\varepsilon_0$ and $x$, which can be checked by
  (a somewhat technical) transfinite induction over $\alpha$.
\end{proof}
\fi

Another example from~\citep[\theoremautorefname~3.3]{schmitz14} is a
length function theorem for ordinals below $\varepsilon_0$, where the
norm $N(\alpha)$ of an ordinal $\alpha=\omega^{\alpha_1}\cdot
c_1+\cdots+\omega^{\alpha_n}\cdot c_n$ with
$\alpha>\alpha_1>\cdots>\alpha_n\geq 0$ and $\omega>c_1,\dots,c_n\geq
0$ is the largest constant that appears in it:
$N(\alpha)\eqdef\max_{1\leq i\leq n}\{c_i,N(\alpha_i)\}$:

\begin{fact}[\citep{schmitz14}]\label{fact-ordinals}
  Let $\alpha<\varepsilon_0$ be of norm $N(\alpha)\leq n$.  Then
  $L_{g,\alpha}(n)=g_\alpha(n)$.
\end{fact}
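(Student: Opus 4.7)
My plan is to prove $L_{g,\alpha}(n) = g_\alpha(n)$ by transfinite induction on $\alpha < \varepsilon_0$. The starting observation is that $\alpha$, viewed set-theoretically as a well-order, is linearly ordered, so a bad sequence in $\alpha$ is precisely a strictly decreasing chain $\alpha_0 > \alpha_1 > \cdots$ of ordinals below $\alpha$ with $N(\alpha_i) \leq g^i(n)$, and $L_{g,\alpha}(n)$ thus measures the longest such controlled descending chain. The Cichoń recursion for $g_\alpha(n)$ then admits a matching case analysis on $\alpha$.

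The base case $\alpha = 0$ is vacuous. For the successor case $\alpha = \beta + 1$, inspection of the Cantor normal form gives $N(\beta) \leq N(\beta+1) \leq n$, so the maximal chain starts at $\alpha_0 = \beta$; any continuation is a $(g, g(n))$-controlled bad sequence in $\beta$ of maximal length $g_\beta(g(n))$ by the induction hypothesis, and the Cichoń recursion $g_{\beta+1}(n) = 1 + g_\beta(g(n))$ closes the case. Monotonicity of $g_\gamma$ in $\gamma$ rules out strictly smaller choices of the first element.

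The limit case $\alpha = \lambda$ is the technical heart. The key auxiliary lemma, which I would prove by a secondary induction on $\lambda$ using both clauses of the fundamental-sequence definition $(\gamma+\omega^{\delta+1})(x) = \gamma+\omega^\delta(x+1)$ and $(\gamma+\omega^{\lambda'})(x) = \gamma+\omega^{\lambda'(x)}$, states that any $\beta < \lambda$ with $N(\beta) \leq n$ already satisfies $\beta < \lambda(n)$: each clause of the definition is tailored so that $\lambda(n)$ dominates every ordinal below $\lambda$ of norm at most $n$. Given this lemma, any $(g,n)$-controlled bad sequence in $\lambda$ stays below $\lambda(n)$, hence $L_{g,\lambda}(n) = L_{g,\lambda(n)}(n)$; combining with the Cichoń limit clause $g_\lambda(n) = g_{\lambda(n)}(n)$ and the primary induction hypothesis at $\lambda(n) < \lambda$ yields the desired equality.

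The main subtlety is that $N(\lambda(n))$ may strictly exceed $n$---for instance $(\omega\cdot 2)(n) = \omega + n + 1$ has norm $n+1$---so the primary induction hypothesis at $\lambda(n)$ is not immediate under the crude reading of the statement. The standard resolution, as used in~\citep{SS2012,schmitz14}, is to formulate the induction uniformly so that the growing control $g^i(n)$ absorbs the extra increments introduced by fundamental-sequence transitions at subsequent descent steps. This is routine bookkeeping rather than a new conceptual difficulty, and the quantitative matching between controlled ordinal descent lengths and the Cichoń hierarchy is the classical result being invoked.
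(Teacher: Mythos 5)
The paper does not actually prove this statement: it is imported verbatim as Theorem~3.3 of~\citep{schmitz14}, so there is no internal proof to compare yours against, and what you have written is a reconstruction of the cited result. Your route is the standard one and most of it is sound: over a linear order, bad sequences are exactly descending chains; the successor case works (though you do not need any monotonicity of $g_\gamma$ in $\gamma$ --- which is false without norm side conditions --- since $L_{g,\beta'}(m)\leq L_{g,\beta}(m)$ for $\beta'\leq\beta$ is immediate because every bad sequence over $\beta'$ is one over $\beta$); and your limit-case lemma ``$\beta<\lambda$ and $N(\beta)\leq n$ imply $\beta<\lambda(n)$'' is correct and does yield $L_{g,\lambda}(n)=L_{g,\lambda(n)}(n)$, since only the first element of a controlled sequence is constrained by $n$ and all later elements lie below it.

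The gap is the one you flag and then dismiss. The induction as you set it up --- on $\alpha$, with hypothesis $N(\alpha)\leq n$, concluding an exact equality at the \emph{same} argument $n$ --- does not close at the limit step: you need the statement at $\lambda(n)$ with parameter $n$, yet $N(\lambda(n))=n+1$ in general, and the equality genuinely fails once the norm hypothesis is dropped (for $g=H$ one has $L_{g,5}(0)=1\neq 5=g_5(0)$), so the side condition cannot simply be discarded. Nor can the ``growing control absorb the increment'': the offending $+1$ in the norm concerns the very first position of the sequence, where the control is still $n$, not $g(n)$. The actual resolution in~\citep{SS2012,schmitz14} is to reorganise the induction around the $n$-predecessor $P_n(\alpha)\eqdef\max\{\beta<\alpha\mid N(\beta)\leq n\}$ and the descent equation $g_\alpha(n)=1+g_{P_n(\alpha)}(g(n))$, valid for $N(\alpha)\leq n$: since $N(P_n(\alpha))\leq n\leq g(n)$ by construction, the induction hypothesis always applies to $P_n(\alpha)$ at argument $g(n)$, and $L_{g,\alpha}(n)=1+L_{g,P_n(\alpha)}(g(n))$ follows from the trivial monotonicity of $L$. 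Establishing the descent equation is where your limit lemma is actually used, together with $P_n(\lambda)=P_n(\lambda(n))$ and an unfolding of $\lambda\mapsto\lambda(n)$ down to a successor whose predecessor still has norm $\leq n$; that interlocking of norms, fundamental sequences and predecessors is the non-routine content your sketch is missing.
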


\subsection{Controlling the KLMST Decomposition}
\label{sub-control}
Recall from \autoref{sub-rank} that the KLMST algorithm terminates
because any descending sequence of ordinals in $\omega^{\omega^3}$ is
finite.  As remarked in \autoref{ex-wo}, descending sequences over an
ordinal are bad sequences.  From the previous discussion of length
function theorems, in order to apply the bounds from~\citep{schmitz14}
on the norms in bad sequences over $\omega^{\omega^3}$, we need to
find a control function for any sequence%
\begin{equation}\label{rank-branch}
r(\xi_0)>r(\xi_1)>\cdots%
\end{equation}
of ordinals in $\omega^{\omega^3}$ found along a branch of the tree
described in \autoref{sub-term}.

\subsubsection{A Measure on Marked Witness Graph Sequences}
Let us write $\|\vec v\|\eqdef\max_{i\in F}\vec v(i)$ for the
  infinite %
  norm of partial vectors in $\confs_\omega$ and $\|\vec V\|\eqdef\max_{\vec
  v\in\vec V}(|\vec V|,\|\vec v\|)$ for a set $\vec V$ of partial 
  vectors.  Using the norm function $N$ over $\varepsilon_0$ defined
  above on the ordinals in~\eqref{rank-graph}
  and~\eqref{rank-sequence}, we see that $N(r(\xi))$ is bounded by
\begin{equation}
  \|\xi\|\eqdef\max_{0\leq j\leq k}
  (2d,k,|E_j|,\|\vec s^\mathrm{in}_j\|,\|\vec
  s^\mathrm{out}_j\|,\|\vec S_j\|)
\end{equation}
for $\xi=M_0,\vec a_1,\dots,\vec a_k,M_k$ where $M_j$ is the marked
graph $(\vec s^\mathrm{in}_j,G_j,\vec s^\mathrm{out}_j)$ and
$G_j=(\vec S_j,E_j,\vec s_j)$.  Note that $\|\xi_0\|=\max (2d,1,|\vec
A|)$ initially.

\subsubsection{Controlling Decompositions}
We are going to exhibit a control function $g$ such that
$\|\xi_i\|\leq g^i(\|\xi_0\|)$ for all descending
sequences~\eqref{rank-branch} and index $i$, which will therefore also
be a control function on~\eqref{rank-branch} for the ordinal norm.  It
suffices to show that $\|\xi'\|\leq g(\|\xi\|)$ whenever
$\xi'\in\improve{\xi}$.  Let us analyse how this measure evolves in
the different decomposition cases:
\begin{enumerate}
\item In the unpumpable case, the constant $c$ can be bounded
  using~\autoref{fact-dickson} by $H^{\omega^{d+1}}(d^2\cdot|\vec
  S_j|\cdot\max(\|\vec s_j^\mathrm{in}\|,\|\vec s^\mathrm{out}_j\|))$
  (see also \cite[\theoremautorefname~2.10]{howell86}
  or \cite[\sectionautorefname~VII-C]{FFSS2011} for similar enough
  bounds in terms of the \emph{fast-growing function}
  $F_{d+1}=H^{\omega^{d+1}}$).  The resulting sequences $\xi'$ in
  $\improve{\xi}$ satisfy therefore $\|\xi'\|\leq
  H^{\omega^{d+1}}(\|\xi\|^4)$.
\item In the input/output bounded case, the constant $c$ is at most
  exponential in the size of the linear system $L_\xi$, which is of
  polynomial size in $\|\xi\|$.  Thus $\|\xi'\|\leq 2^{p(\|\xi\|)}$ for
  some fixed polynomial $p$.
\item In the edge bounded case, the constant $c$ is similarly at most
  exponential in the size of $L_\xi$ and again $\|\xi'\|\leq
  2^{p(\|\xi\|)}$ for some fixed polynomial $p$.
\end{enumerate}
Assuming $d\geq 1$, $H^{\omega^{d+1}}(x)>2^{x}$, hence we can choose
$g(x)\eqdef H^{\omega^{d+1}}(p(x))$ for some fixed polynomial $p$ as our
control function.  This is a primitive-recursive function in
$\FGH{<\omega}$ for any fixed $d$, and is in $\FGH{<\omega+1}$ when
$d$ is part of the input.

\subsection{Complexity Bounds}\label{sub-cmplx}
Assuming $\|\xi_0\|\geq 3$, by \autoref{fact-ordinals} the
norm of the elements in any sequence~\eqref{rank-branch} controlled by
$g$ is at most $g^{\omega^{\omega^3}}(\|\xi_0\|)$.  This function can
be computed in space $g^{\omega^{\omega^3}}(e(\|\xi_0\|))$ for some
elementary function $e$ by \citep[\theoremautorefname~5.1]{schmitz13}.
This yields the same bound on the space used by a nondeterministic
version of the KLMST decomposition algorithm, which guesses a branch
like~\eqref{rank-branch} that leads to a perfect marked witness graph
sequence if there is one.  Finally, because our function $g$ yields
$\F{g,\omega^3}=\F{\omega^3}$ by
\cite[\theoremautorefname~4.4]{schmitz13}, we obtain:
\begin{theorem}
  VAS reachability is in $\F{\omega^3}$.
\end{theorem}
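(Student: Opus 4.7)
The plan is to analyze a nondeterministic variant of the KLMST decomposition algorithm of \autoref{sec:kosaraju} that, instead of exploring the full tree described in \autoref{sub-term}, guesses a single branch $\xi_0,\xi_1,\ldots$ with $\xi_{i+1}\in\improve{\xi_i}$, halting as soon as it reaches either a perfect sequence (witnessing $\runs\neq\emptyset$ via \autoref{thm:connected}) or an exhausted branch. By \autoref{thm:pwgsdecompositon} together with termination of the full KLMST algorithm, $\runs\neq\emptyset$ iff some branch terminates on a perfect leaf, so it suffices to bound the space used by this guessing procedure.

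The heart of the argument is to recast any such branch as a controlled bad sequence of ordinals. By \autoref{sub-rank}, a branch induces a strictly descending sequence $r(\xi_0)>r(\xi_1)>\cdots$ in $\omega^{\omega^3}$, which by \autoref{ex-wo} is automatically a bad sequence for the ordinal ordering. The measure $\|\xi\|$ introduced in \autoref{sub-control} is designed precisely so that $N(r(\xi))\leq\|\xi\|$, and the case analysis of \autoref{imp-pump}--\autoref{imp-tr} (together with the Karp--Miller-style bounds from \autoref{fact-dickson}) shows that $\|\xi_{i+1}\|\leq g(\|\xi_i\|)$ for the primitive-recursive function $g(x)\eqdef H^{\omega^{d+1}}(p(x))$. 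Hence the ordinal sequence is $(g,\|\xi_0\|)$-controlled in the sense of \autoref{sub-lft}.

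From here the bound follows mechanically. Applying \autoref{fact-ordinals} bounds the branch length by $g_{\omega^{\omega^3}}(\|\xi_0\|)$, and the Cicho\'n--Hardy correspondence recalled in \autoref{sub-lft} upgrades this to a norm bound $\|\xi_i\|\leq g^{\omega^{\omega^3}}(\|\xi_0\|)$ throughout the branch. Each decomposition step is feasible in space at most exponential in $\|\xi_i\|$ by \autoref{lem-dec-perf} and the constructions of \autoref{imp-pump}--\autoref{imp-tr}, which is dominated by $g^{\omega^{\omega^3}}(e(n))$ for some elementary function $e$ of the input size $n$, since $\|\xi_0\|$ is polynomial in $n$. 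Invoking \cite[\theoremautorefname~5.1]{schmitz13} to evaluate $g^{\omega^{\omega^3}}$ within the same space bound, and then \cite[\theoremautorefname~4.4]{schmitz13} to collapse $\F{g,\omega^3}$ down to $\F{\omega^3}$ using $g\in\FGH{<\omega+1}$, places the procedure in $\F{\omega^3}$.

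The main technical hurdle is the quantitative analysis in \autoref{sub-control} linking $\|\xi\|$ and $\|\xi'\|$ for $\xi'\in\improve{\xi}$ in each of the three imperfection cases, since the unpumpable case already requires an $\F{d+1}$-size constant from Karp--Miller. A secondary point is that the control function $g$ depends on the dimension $d$, so when $d$ is treated as part of the input one must check that $g$ still lies in $\FGH{<\omega+1}$ rather than merely in $\FGH{<\omega}$; this is exactly what ensures the collapse $\F{g,\omega^3}=\F{\omega^3}$ remains valid in the general case.
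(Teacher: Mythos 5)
Your proposal is correct and follows essentially the same route as the paper: a nondeterministic branch-guessing version of the KLMST algorithm, the recasting of a branch as a $(g,\|\xi_0\|)$-controlled descending sequence in $\omega^{\omega^3}$ via the measure $\|\cdot\|$ and the case analysis of the three imperfection steps, the length/norm bound from \autoref{fact-ordinals} and the Cicho\'n--Hardy correspondence, and the final appeal to Theorems~5.1 and~4.4 of~\citep{schmitz13} to place the procedure in $\F{g,\omega^3}=\F{\omega^3}$. The only points the paper makes that you elide are minor: the technical hypothesis $\|\xi_0\|\geq 3$ needed so that $N(\omega^{\omega^3})$ is dominated by the initial norm, and the explicit initial bound $\|\xi_0\|=\max(2d,1,|\vec A|)$.
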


\subsection{A Combinatorial Algorithm}
\label{sub-combi}
The bounds in \autoref{sub-cmplx} allow to propose a conceptually
simple algorithm for VAS Reachability, based on a \emph{small run
  property}.  If there is a run in $\runs$, it must belong to some
$\Omega_\xi$ for a perfect $\xi$ constructed by the KLMST
decomposition.  Thus this $\xi$ is of measure $\|\xi\|$ bounded by
$g^{\omega^{\omega^3}}(\|\xi_0\|)$.  Using \autoref{thm:connected} we
can extract a run of commensurate length $\ell$.
\ifomitproofs\relax\else\par\fi
The combinatorial algorithm is a nondeterministic algorithm that first
computes $\ell$ and then guesses a run $\rho$ in $\runs$ of length at
most $\ell$.  Its complexity is similar to that of the KLMST
decomposition algorithm, in~$\F{\omega^3}$.

\section{Conclusion}
\label{sec:conclusion}
The KLMST decomposition algorithm of \citeauthor{mayr81},
\citeauthor{kosaraju82}, and \citeauthor{lambert92} is most certainly
a stroke of genius, allowing to prove the decidability of reachability
in VAS.  What was however sorely lacking until now was an explanation
for this decomposition that could be adapted and extended in various
directions.  Far from closing the subject, we expect this
demystification to span a whole research programme.

The first natural question is how easily one can use the framework of
ideals on runs for various VAS extensions.  A good test is the case of
VAS with hierarchical zero tests, which were proven to enjoy a
decidable reachability problem by \citet{reinhardt08}.  A wqo on runs
using nested applications of Higman's Lemma for this extension is
defined by \citet{bonnet13} in his alternative decidability proof
using Presburger inductive invariants.  Using the algebraic framework
of~\autoref{sec-algebra}, we see that prerun ideals for this new
ordering are essentially nested products, and thus bear at least a
superficial resemblance to the structures manipulated by
\citet{reinhardt08}.  The framework could also shed new light on
reachability in other VAS
extensions~\citep{lazic08,schmitz10,lazic13}.

A second question is whether we can significantly improve the
$\F{\omega^3}$ upper bound provided in \autoref{sec-fgh}.  The best
known lower bound on the running time of the algorithm is
Ackermannian, i.e.\ $\F\omega$, leaving a huge gap on the complexity
of the KLMST algorithm, and a gigantic gap on the complexity of VAS
reachability, which is only known to be \textsc{ExpSpace}-hard.

\subsection*{Acknowledgements\nopunct.}The authors thank
J.~Goubault-Larrecq, P.~Karandikar, K.~Narayan Kumar, and
Ph.~Schnoebelen for sharing their draft~\citep{GLKKS15} with us and
for insightful discussions around the uses of wqo ideals.

\appendix
\section{Undecidability of Adherence Membership}
\label{app-oracle}
\oracle*
\newcommand{\unit}{\vec{e}}
The proof proceeds by a reduction from the \emph{boundedness problem}
for \emph{lossy Minsky machines}, which was shown undecidable by
\citet{dufourd99} (see also the survey~\citep{phs10}).

\subsubsection*{Lossy Minsky machines\nopunct} (LMM) are Minsky
machines where counter values might decrease spontaneously at all
times.  Let us define their syntax and semantics in a style similar to
those of VASs.  Let $d$ in $\+N$ be the dimension of the machine,
i.e.\ its number of counters.  A \emph{Minsky action} $r$ is a pair
$(Z,\vec a)$ where $Z\subseteq\{1,\dots,d\}$ denotes the components
tested for zero, and $\vec a$ is a vector in $\+Z^d$ satisfying
$\vec{a}(i)=0$ for every $i\in Z$.  We associate with such a Minsky rule
$r$ a transition relation $\xrightarrow{r}$ over the set of
configurations $\setN^d$ defined by $\vec{x}\xrightarrow{r}\vec{y}$ if
$\vec{x}(i)=0=\vec{y}(i)$ for every $i\in Z$ and
$\vec{y}=\vec{x}+\vec{a}$.  A \emph{Minsky machine} is a finite set
$R$ of Minsky rules.  A Minsky machine $R$ is said to be \emph{lossy}
if $(\emptyset,-\unit_i)\in R$ for every $1\leq i\leq d$ (where
$\unit_i$ is the unit vector with $1$ in coordinate $i$ and $0$
everywhere else).

\newcommand{\mreach}{\operatorname{Reach}(R,\vec x_\init)} A set
$\vec{X}\subseteq \setN^d$ is called a \emph{post-fixpoint} for a
Minsky machine $R$ if for every $\vec{x}\in\vec{X}$ and $r\in R$ the
relation $\vec{x}\xrightarrow{r}\vec{y}$ implies $\vec{y}\in \vec{X}$.
The \emph{reachability set} $\mreach$ of a Minsky machine $R$ from an
initial configuration $\vec{x}_\init$ is the minimal post-fixpoint of
$R$ that contains the initial configuration.

\begin{problem}[LMM Boundedness]
\hfill\begin{description}
\item[input]\IEEEhspace{1em}A $d$-dimensional LMM $R$ and an initial
  configuration $\vec{x}_\init$ in $\confs$.
\item[question]\IEEEhspace{1em}Is $\mreach$ finite?
\end{description}\end{problem}\noindent
As mentioned earlier this boundedness problem
is undecidable~\citep{dufourd99,phs10}.

\subsubsection*{Minimality of Post-Fixpoints}
Note that, due to lossiness, any post-fixpoint is downward-closed and
has therefore a finite ideal decomposition using vectors in
$\+N^d_\omega$.  The ideal decomposition of $\mreach$ is however not
effective---or the boundedness problem would be decidable: the machine
is unbounded if and only if some $\omega$-value appears in some
coordinate of an ideal from the decomposition of $\mreach$.

Assume we have an oracle to decide whether a post-fixpoint $\vec X$
that contains $x_\init$ is equal to $\mreach$.  Because we can
enumerate finite sets of vectors in $\+N^d_\omega$ and effectively
check whether they define a post-fixpoint $\vec X$ that contains
$x_\init$, we could use this oracle to construct the ideal decomposition
of $\mreach$---and as noted just before, use the latter to decide the
boundedness problem.  This means that we cannot decide whether a
post-fixpoint is equal to $\mreach$---this is similar to
\citep[\theoremautorefname~3.7]{phs10}:
\begin{problem}[Minimality of LMM Post-Fixpoints]
\hfill\begin{description}
\item[input]\IEEEhspace{1em}A $d$-dimensional LMM $R$, an initial
  configuration $\vec x_\init$ in $\confs$, and a post-fixpoint
  $\vec X$ that contains $\vec x_\init$.
\item[question]\IEEEhspace{1em}Does $\vec X=\mreach$?
\end{description}
\end{problem}
This problem is already undecidable for a slightly restricted class of
LMMs: Observe that if $\vec{x}_\init=\vec{0}$ then the reachability
set is infinite if, and only if, there exists $(Z,\vec{a})\in R$ for
some $Z$ such that $\vec{a}>\vec{0}$.  So, we can assume in the
previous problem that $\vec{x}_\init\not=\vec{0}$.  Observe similarly
that if $(Z,\vec{x}_\init)\in R$ for some $Z$ (where necessarily $\vec
x_\init(i)=0$ for all $i\in Z$ by assumption on Minsky actions), then
$n\vec{x}_\init$ is reachable for every $n\in\setN$ and by the
previous assumption the reachability set is infinite.  So we can also
assume that for every $(Z,\vec{a})\in R$ we have
$\vec{a}\not=\vec{x}_\init$ and retain undecidability.

\begin{proof}[Proof of \autoref{th-oracle}]
  \renewcommand{\theclaim}{\arabic{claim}} We are going to reduce the
  problem of testing the minimality of LMM post-fixpoints to the
  adherence membership problem for an ideal of the form
  ${\downarrow}\vec x_\init\times D^\ast\times{\downarrow}\vec
  x_\init$ where $D$ is a downward-closed set of transitions.  The
  main intuition is that a downward-closed set of transitions where
  some maximal transitions have zero components can be used to perform
  zero tests in a VAS, and simulate the behaviour of a lossy Minsky
  machine.
  
  Without loss of generality, we assume that $(\emptyset,\vec{0})$
  belongs to $R$ since the reachability set is unchanged by adding
  this Minsky rule.  Let $\vec{X}\subseteq\setN^d$ be a post-fixpoint
  of $R$ that contains the initial configuration $\vec x_\init$.  By
  minimality of $\mreach$ we get $\mreach\subseteq\vec{X}$.  We define
  a downward-closed set $D_{\vec{X}}$ of transitions of some VAS
  $\vec{A}$ in such a way that the inclusion $\mreach\subseteq
  \vec{X}$ is an equality if, and only if, the set of preruns $(\vec
  x_\init,w,\vec x_\init)$ with transition sequence $w\in
  D_{\vec{X}}^*$ is an ideal from $\ideal{\cmts[\vec x_\init,\vec
    x_\init]}{\unlhd}$.

  Our VAS is defined by
\begin{align}
  \vec A&\eqdef\{\vec{x}_\init\}\cup\{\vec{a}\mid\exists
  Z.(Z,\vec{a})\in R\}\;.
  \intertext{Our set $D_{\vec{X}}$ is defined as the set of
    transitions}\label{eq-DX}
  D_{\vec X}&\eqdef\{(\vec{0},\vec{x}_\init,\vec{x}_\init)\}\notag\\&\:\cup\:\{(\vec{x},\vec{a},\vec{y})
\in\vec{X}\times \vec{A}\times\vec{X} \mid\exists
  Z.\exists r=(Z,\vec{a})\in R.\vec{x}\xrightarrow{r}\vec{y}\}\;,
  \intertext{which is downward-closed because $\vec X$ is, and we let
    $I_{\vec X}$ denote the following set of preruns %
    using
    transitions from $D_{\vec X}$, which is an ideal of $\mts$:}
  I_{\vec X}&\eqdef{\downarrow}\vec x_\init\times D_{\vec X}^\ast\times{\downarrow}\vec x_\init\;.
\end{align}
Note that a representation of $I_{\vec X}$ can effectively be computed
from a representation of $\vec X$.

\begin{claim}\label{cl-oracle-1}
  $\mreach$ is the set of configurations $\vec{x}\in\setN^d$ such that
  there exists a run $(\vec{x}_\init,w,\vec{x})$ with $w\in
  D_{\vec{X}}^*$.
\end{claim}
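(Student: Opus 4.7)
The plan is to prove both inclusions by exploiting the tight correspondence between $D_{\vec X}$ and those Minsky steps of $R$ that stay inside $\vec X$. By construction in~\eqref{eq-DX}, every transition of $D_{\vec X}$ is either a Minsky-derived triple $(\vec x,\vec a,\vec y)$ with $\vec x,\vec y\in\vec X$ and $\vec x\xrightarrow{(Z,\vec a)}\vec y$ for some $Z$, or the single special transition $(\vec 0,\vec x_\init,\vec x_\init)$. Under the standing assumptions that $\vec x_\init\neq\vec 0$ and that $\vec a\neq\vec x_\init$ for every Minsky rule $(Z,\vec a)\in R$, the special transition is the only one whose VAS action is $\vec x_\init$, and its source $\vec 0$ differs from $\vec x_\init$.

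For the forward inclusion, I would first note that $\mreach$ coincides with the set of configurations reached from $\vec x_\init$ by iterated Minsky steps (this set is itself a post-fixpoint containing $\vec x_\init$, and any post-fixpoint containing $\vec x_\init$ is stable under iterated Minsky steps). Then, given a Minsky computation $\vec x_\init\xrightarrow{r_1}\vec y_1\cdots\xrightarrow{r_n}\vec x$ with each $r_j=(Z_j,\vec a_j)$, by minimality $\mreach\subseteq\vec X$, so every intermediate $\vec y_j$ lies in $\vec X$. Hence each $(\vec y_{j-1},\vec a_j,\vec y_j)$ is a transition of $D_{\vec X}$, and concatenating them yields a run $(\vec x_\init,w,\vec x)$ with $w\in D_{\vec X}^\ast$.

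For the converse inclusion, I would take a run $(\vec x_\init,w,\vec x)$ with $w=t_1\cdots t_k\in D_{\vec X}^\ast$ and split it at the occurrences of the special transition. Each such occurrence has source $\vec 0$ and target $\vec x_\init$, so the segment following the last occurrence---or the whole run if no such occurrence exists, or the empty segment if the run ends with the special transition---starts at $\vec x_\init$, uses only Minsky-derived transitions, and ends at $\vec x$. Reading off the underlying Minsky rules along this final segment exhibits a Minsky computation from $\vec x_\init$ to $\vec x$ (vacuously in the empty case, where $\vec x=\vec x_\init\in\mreach$ trivially), establishing $\vec x\in\mreach$.

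The only subtlety---really a minor one---is that the special transition $(\vec 0,\vec x_\init,\vec x_\init)$ has no Minsky counterpart and must be kept in $D_{\vec X}$ for later technical reasons in the overall reduction. It is handled by the decomposition above: because this transition unconditionally resets the current configuration to $\vec x_\init$, only the Minsky-derived suffix after its last use matters for determining the final configuration $\vec x$, and that suffix lifts faithfully to a Minsky computation from $\vec x_\init$.
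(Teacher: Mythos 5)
Your proof is correct and follows essentially the same route as the paper, which simply states that the claim follows by induction on the lengths of runs of $\vec A$ over $D_{\vec X}$ and of Minsky computations of $R$: your two inclusions spell out exactly the step-by-step correspondence between $D_{\vec X}$-transitions and Minsky steps staying inside $\vec X$ (using $\mreach\subseteq\vec X$ for the forward direction). Your explicit device of cutting the run at the last occurrence of the reset transition $(\vec 0,\vec x_\init,\vec x_\init)$ is just a clean way of organising that induction, and the standing assumptions $\vec x_\init\neq\vec 0$ and $\vec a\neq\vec x_\init$ are not actually needed here since the special transition can be identified as a triple rather than by its action.
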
\noindent
The proof is by induction on the length of runs
$(\vec{x}_\init,w,\vec{x})$ of $\vec A$ and runs $\vec
x_\init\xrightarrow{\ast}\vec x$ of $R$.

\begin{claim}
  If $\vec{X}=\mreach$ then
  $I_{\vec{X}}$ is in the adherence of $\cmts[\vec x_\init,\vec x_\init]$.
\end{claim}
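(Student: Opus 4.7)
The plan is to exhibit a directed subset $\Delta\subseteq\cmts[\vec{x}_\init,\vec{x}_\init]$ whose downward closure inside $\mts$ is exactly $I_{\vec{X}}$. I would take $\Delta$ to be the set of \emph{all} runs from $\vec{x}_\init$ to $\vec{x}_\init$ whose transitions lie in $D_{\vec{X}}$. Directedness is immediate by concatenation: if $\rho_1,\rho_2\in\Delta$ then $\rho_1\cdot\rho_2$ is again a run from $\vec{x}_\init$ to $\vec{x}_\init$ with transitions in $D_{\vec{X}}$, and it embeds both $\rho_1$ and $\rho_2$ for~$\unlhd$.

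The inclusion ${\downarrow}\Delta\subseteq I_{\vec{X}}$ is routine: any prerun embedded in some $\rho\in\Delta$ has source $\leq\vec{x}_\init$, target $\leq\vec{x}_\init$, and a subsequence of triples all lying in ${\downarrow}D_{\vec{X}}=D_{\vec{X}}$, hence it belongs to $I_{\vec{X}}$.

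For the converse inclusion $I_{\vec{X}}\subseteq{\downarrow}\Delta$, fix a prerun $\rho'=(\vec{u}_0,t_1\cdots t_n,\vec{v}_0)\in I_{\vec{X}}$ with $\vec{u}_0,\vec{v}_0\leq\vec{x}_\init$ and each $t_i\in D_{\vec{X}}$. It suffices to build, for each $i$, a run $\rho_i\in\Delta$ whose transition sequence contains an actual transition $s_i\geq t_i$; then the concatenation $\rho_1\cdots\rho_n$ lies in $\Delta$ and dominates $\rho'$ for $\unlhd$. To construct $\rho_i$, note that by definition of $D_{\vec{X}}$ the triple $t_i$ is dominated either by some transition $(\vec{x},\vec{a},\vec{x}+\vec{a})$ with $\vec{x}\in\vec{X}$ arising from a Minsky rule $(Z,\vec{a})\in R$ applicable at $\vec{x}$, or by the distinguished triple $(\vec{0},\vec{x}_\init,\vec{x}_\init)$. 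In the first case, \autoref{cl-oracle-1} and the hypothesis $\vec{X}=\mreach$ supply a $D_{\vec{X}}$-run from $\vec{x}_\init$ to $\vec{x}$; we extend it by the transition $(\vec{x},\vec{a},\vec{x}+\vec{a})$, then drive every counter down to $\vec{0}$ using the lossiness actions $-\unit_j\in R$ (each intermediate configuration remains in the downward-closed set $\vec{X}=\mreach$, so each such step is a legitimate $D_{\vec{X}}$-transition), and finally apply $(\vec{0},\vec{x}_\init,\vec{x}_\init)$ to return to $\vec{x}_\init$. The second case is handled analogously, skipping directly from $\vec{x}_\init$ to $\vec{0}$ via losses.

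The main technical point is precisely guaranteeing that every triple of $D_{\vec{X}}$ can be inserted inside a connected round-trip run at $\vec{x}_\init$ using only $D_{\vec{X}}$-transitions. This rests on three ingredients, all enabled by the hypothesis $\vec{X}=\mreach$: the forward reachability of any $\vec{x}\in\vec{X}$ from $\vec{x}_\init$ via $D_{\vec{X}}$-transitions (\autoref{cl-oracle-1}), the closure of $\vec{X}$ under the loss rules $(\emptyset,-\unit_j)\in R$ (which follows from $\vec{X}$ being a post-fixpoint of the lossy machine $R$), and the distinguished transition $(\vec{0},\vec{x}_\init,\vec{x}_\init)$ which provides the return to the initial configuration; without the minimality of $\vec{X}$ the first of these steps could fail, which is exactly why the converse implication (not proved here) captures minimality.
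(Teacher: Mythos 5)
Your proof is correct and follows essentially the same route as the paper's: for each element of $D_{\vec{X}}$ you build a round-trip run at $\vec{x}_\init$ through it (forward reachability of $\vec x\in\vec X=\mreach$ via \autoref{cl-oracle-1}, then losses down to $\vec{0}$, then the distinguished transition $(\vec 0,\vec x_\init,\vec x_\init)$), and you concatenate such round-trips to dominate any prerun of $I_{\vec X}$ while obtaining directedness by concatenation. The only (harmless) cosmetic differences are that you take $\Delta$ to be \emph{all} $D_{\vec X}$-runs from $\vec x_\init$ to $\vec x_\init$ rather than just the concatenations of round-trips, and that you explicitly pass through a dominating transition $s_i\geq t_i$.
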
\noindent
Let $t=(\vec x,\vec a,\vec y)$ be a transition in $D_{\vec{X}}$.  By
definition $\vec{x}\in \vec{X}=\mreach$ and we deduce by
\autoref{cl-oracle-1} that there exists a run $(\vec x_\init,w_t,\vec
x)$ with $w_t\in D_{\vec{X}}^*$.  Due to lossiness, there also exists
a run with transition sequence $w'_t$ in $D_{\vec{X}}^*$ from
$\vec{y}$ to $\vec{0}$ labelled by actions $-\unit_i$.  By
definition~\eqref{eq-DX} the transition
$t_\init\eqdef(\vec{0},\vec{x}_\init,\vec{x}_\init)$ belongs to
$D_{\vec X}$.  Hence for every $t\in D_{\vec{X}}$ there exists a run
with transition sequence $w_ttw_t't_\init$ in $D_{\vec{X}}^*$ from
$\vec{x}_\init$ to $\vec{x}_\init$ along which $t$ occurs.

By concatenating such transition sequences, for every word
$w=t_1\cdots t_k$ of transitions $t_1,\ldots,t_k\in D_{\vec{X}}$,
there exists a run from $\vec{x}_\init$ to $\vec{x}_\init$ with
transitions in $D_{\vec{X}}^*$ and with $w$ as an embedded
subsequence.  We conclude by noting that these runs form a directed
subset of $\cmts[\vec x_\init,\vec x_\init]$.

\begin{claim}
  If $I_{\vec{X}}$ is in the adherence of $\cmts[\vec x_\init,\vec
    x_\init]$ then $\vec{X}=\mreach$.
\end{claim}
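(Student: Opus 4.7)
The plan is to prove the inclusion $\vec X \subseteq \mreach$, since the reverse inclusion $\mreach \subseteq \vec X$ follows from minimality of $\mreach$ as a post-fixpoint containing $\vec x_\init$. I would start by unpacking the adherence hypothesis to obtain a directed family $\Delta \subseteq \cmts[\vec x_\init, \vec x_\init]$ of actual VAS runs such that ${\downarrow}\Delta = I_{\vec X}$.

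The decisive observation is that every transition $t$ appearing in any $\rho \in \Delta$ must itself belong to $D_{\vec X}$. Indeed, since $\vec 0 \leq \vec x_\init$, the prerun $(\vec 0, t, \vec 0)$ embeds into $\rho$ under $\unlhd$ and therefore lies in ${\downarrow}\Delta = I_{\vec X}$; the shape $I_{\vec X} = {\downarrow}\vec x_\init \times D_{\vec X}^* \times {\downarrow}\vec x_\init$ then forces $t \in D_{\vec X}$. This step crucially exploits the inclusion ${\downarrow}\Delta \subseteq I_{\vec X}$, i.e.\ the \emph{reverse} direction of the adherence equality.

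Next, for each $\vec x \in \vec X$ I would exhibit a specific prerun in $I_{\vec X}$ that forces $\vec x$ to be visited by some run in $\Delta$. Since $(\emptyset, \vec 0) \in R$, the transition $t_{\vec x} \eqdef (\vec x, \vec 0, \vec x)$ lies in $D_{\vec X}$, so the prerun $(\vec x_\init, t_{\vec x}, \vec x_\init)$ lies in $I_{\vec X}$; by ${\downarrow}\Delta = I_{\vec X}$ it embeds into some $\rho \in \Delta$, forcing $\rho$ to cross a configuration $\vec u \geq \vec x$ via a transition $(\vec u, \vec 0, \vec u)$. The prefix of $\rho$ up to this visit is then a VAS run from $\vec x_\init$ to $\vec u$ using only transitions in $D_{\vec X}$ (by the first observation), so Claim~1 yields $\vec u \in \mreach$; the lossy rules $(\emptyset, -\unit_i) \in R$ make $\mreach$ downward closed, hence $\vec x \leq \vec u$ gives $\vec x \in \mreach$. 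The main subtle point is the first observation: without the constraint that runs in $\Delta$ use only $D_{\vec X}$-transitions, reachability of $\vec u$ in the Minsky machine would not follow from its reachability in the larger VAS, and the argument would collapse.
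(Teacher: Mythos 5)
Your proof is correct and follows essentially the same route as the paper: unpack the adherence into a directed family $\Delta$, embed a prerun containing the transition $(\vec x,\vec 0,\vec x)$ into some run $\rho\in\Delta$, and conclude via Claim~1 and lossiness. The only divergence is how the sub-run starting at $\vec x_\init$ is anchored: the paper prepends the reset transition $(\vec 0,\vec x_\init,\vec x_\init)$ to the witness prerun and uses the standing assumption that no Minsky rule has action $\vec x_\init$ to force an exact occurrence of the configuration $\vec x_\init$, whereas you simply take the prefix of $\rho$ from its source --- a slight simplification that does not need that assumption.
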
\noindent
Assume there exists a directed family $\Delta$ of runs with
${\downarrow}\Delta=I_{\vec X}$.  Let
$\vec{x}\in\vec{X}$; let us show that $\vec x\in\mreach$.  The prerun
$(\vec x_\init,w,\vec x_\init)$ with
\begin{align}
  w&\eqdef(\vec{0},\vec{x}_\init,\vec{x}_\init)(\vec{x},\vec{0},\vec{x})
  \intertext{belongs to $I_{\vec{X}}$ (recall that we assumed
    $(\emptyset,\vec{0})\in R$).  Hence there exists a run $\rho=(\vec
    x_\init,w',\vec x_\init)$ in $\Delta$ with $w\preceq_\ast w'$ (for
    the subsequence embedding over $(\+N^d\times\vec
    A\times\+N^d)^\ast$).  Thus $w'$ is in $D_{\vec X}^\ast$ and of
    the form } w'&=
  w_1(\vec{y},\vec{x}_\init,\vec{y}+\vec{x}_\init)w_2(\vec{x}+\vec{z},\vec{0},\vec{x}+\vec{z})w_3
\end{align}
for some vectors $\vec{y}$ and $\vec z$ in $\+N^d$.  Because
$(Z,\vec{x}_\init)\not\in R$ for any $Z$, $\vec y=\vec 0$.

Therefore $(\vec x_\init,w_2,\vec x+\vec z)$ is a run with
transitions in $D_{\vec{X}}$.  Hence by
\autoref{cl-oracle-1}, $\vec x+\vec{z}$ is in $\mreach$,
and by lossiness $\vec{x}$ is also in $\mreach$.  This shows
$\vec{X}\subseteq \mreach$ and thus $\mreach=\vec{X}$.
\end{proof}

\bibliographystyle{abbrvnat}
\bibliography{journals,conferences,references}
\end{document}